\pdfoutput=1
\documentclass[11pt,a4paper]{article}

\usepackage[utf8]{inputenc}
\usepackage[T1]{fontenc}
\usepackage{lmodern}
\usepackage[margin=1in]{geometry}
\usepackage{amsmath,amssymb,amsthm,bm}
\usepackage{graphicx}
\usepackage{booktabs}
\usepackage{longtable}
\usepackage{algorithm}
\usepackage{algpseudocode}
\usepackage{xcolor}
\usepackage{tabularx}
\usepackage{ragged2e}
\usepackage{authblk}
\usepackage{tikz}
\usetikzlibrary{arrows.meta,positioning,fit,calc}
\usepackage[section]{placeins}
\usepackage[numbers,sort&compress]{natbib}
\usepackage{hyperref}
\hypersetup{colorlinks=true,urlcolor=blue,linkcolor=blue,citecolor=blue}
\newcolumntype{Y}{>{\RaggedRight\arraybackslash}X}

\newtheorem{proposition}{Proposition}
\theoremstyle{remark}\newtheorem{remark}{Remark}

\newcommand{\bx}{\bm{x}}
\newcommand{\bu}{\bm{u}}

\newcommand{\bff}{\bm{f}}
\newcommand{\be}{\bm{e}}
\newcommand{\bM}{\bm{M}}

\newcommand{\bK}{\bm{K}}
\newcommand{\bF}{\bm{F}}
\newcommand{\bP}{\bm{P}}
\newcommand{\bR}{\bm{R}}
\newcommand{\bD}{\bm{D}}
\newcommand{\bI}{\bm{I}}
\newcommand{\eps}{\varepsilon}

\title{\textbf{A Matrix-Free Galerkin Multigrid Solver and Failure-Mode Screen for\\
Single-GPU 3D SIMP Linear Systems}}

\author[1]{Shaoliang Yang}
\author[1]{Jun Wang\thanks{Corresponding author. E-mail: jwang22@scu.edu}}
\author[1]{Yunsheng Wang}
\affil[1]{Department of Mechanical Engineering, Santa Clara University, Santa Clara, CA 95053, USA}
\date{}

\begin{document}
\maketitle

\begin{abstract}
Large 3D SIMP studies require repeated elasticity solves for density-dependent
operators whose finest matrices are expensive to assemble and whose conditioning
degrades under high contrast.
We study this linear-solver layer rather than claiming end-to-end optimization
acceleration.
The solver builds a matrix-free Galerkin geometric multigrid (GMG) hierarchy
around a fused fine operator: the finest level remains matrix-free, the first
coarse level is assembled by local Galerkin aggregation, and deeper levels use
sparse Galerkin products.
The practical default is FP32-GMG; BF16 is evaluated as a guarded
mixed-precision variant and diagnostic stress test, not as the main speed
mechanism.
In a 27-case heterogeneous cantilever sweep, pass rates under a 200-iteration
budget are 7/9, 4/9, and 1/9 at 64k, 216k, and 512k elements; converged-only
mean iteration counts are about 112, 134, and 146.
On uniform $\rho=0.5$, $p=3$ solves, FP32-GMG gives $1.62\times$,
$1.75\times$, and $3.12\times$ wall-time ratios relative to the capped
flat Jacobi-PCG baseline at the same sizes; that non-converged baseline
reaches the 200-iteration cap in all timed trials.
BF16-GMG is not faster than FP32-GMG.
In 18 fixed-seed heterogeneous BF16 validation cases, 7/18 converge, matching
the FP64 count, and 11 cases that pass the spectral screen still fail the
500-iteration cap; the screen is therefore diagnostic rather than a convergence
certificate.
The largest reported solve is a 1M-element uniform-modulus system solved in
$1.50\pm0.58$\,s with an 8.66\,GiB hierarchy-allocation delta during setup,
not a peak-memory trace; this point is reported as uniform scaling, not
heterogeneous robustness evidence.
The contribution is therefore a bounded single-GPU solver result built on an
inherited Level~0 matrix-free operator: a Galerkin GMG hierarchy, direct BF16
guard evidence, and an explicit failure-mode screen for structured 3D SIMP
linear systems.
\end{abstract}

\noindent\textbf{Keywords:}
Topology optimization; SIMP; geometric multigrid; matrix-free finite elements;
mixed precision; GPU computing; BF16; iterative solvers.

\section{Introduction}\label{sec:intro}

Topology optimization has evolved from homogenization-based formulations
\citep{bendsoe1988homogenization} to density-based SIMP workflows
\citep{bendsoe1999material,andreassen201188} and, more broadly,
to projection-, filter-, and level-set-based methods that regularise the
otherwise ill-posed material-distribution problem
\citep{bourdin2001filters,guest2004minimum,wang2003levelset,allaire2004levelset,sigmund2013review}.
For minimum-compliance structural design, the dominant computational cost is
not the optimiser but the repeated solution of large linear elasticity systems
as the density field $\bm{\rho}$ evolves.  That cost becomes decisive in
three-dimensional runs, where practical studies increasingly target
$10^5$--$10^7$ elements, multiple benchmark families, and aggressive SIMP
continuation schedules \citep{aage2013parallel,aage2015petsc,aage2017giga,yago2022comparative}.

At that scale, the central tension is architectural.
Classical parallel frameworks based on assembled sparse matrices and AMG
preconditioners remain highly effective on clusters
\citep{aage2015petsc,aage2017giga,herrero2023adaptive}, but their memory model
does not map cleanly onto a single consumer GPU.
Recent GPU-oriented topology-optimization systems therefore move toward
matrix-free operators, structured grids, and hardware-conscious kernels
\citep{traff2023simple,padhi2023gpu,zhao2024gpu}.
In that setting, the stiffness action is naturally expressed as a fused
gather--GEMM--scatter kernel that never materialises the global stiffness
matrix and instead streams element data through tensor-core-friendly dense
micro-kernels \citep{yang2026fused}.
In the present work that fine operator is treated as fixed level-0
infrastructure; the question is not how to redesign the element operator, but
how to construct a solver hierarchy around it.

That architectural choice solves the memory problem but exposes a numerical one.
For the flat Jacobi-preconditioned path studied here and in the companion
operator paper, BF16 is unreliable unless the effective spectrum is compressed
far below what a flat diagonal preconditioner delivers.
The mixed-precision literature makes the issue explicit: low-precision inner
solves are reliable only when $\varepsilon\kappa$ is controlled
\citep{carson2017new,higham2022mixed,haidar2020tensor}.
For the SIMP systems studied here,
$\kappa(\bD^{-1}\bK)\sim10^3$--$10^5$, so BF16
($\eps_{\mathrm{BF16}}\approx3.9\times10^{-3}$) fails that requirement by a
wide margin.
The implication is structural rather than incidental:
if BF16 is to be used reliably in single-GPU SIMP state solves, the solver
hierarchy must reduce the effective spectrum first.

The resulting problem is therefore specific and self-contained.
Rather than redesigning the fine matrix-free operator, this paper asks whether
a Galerkin hierarchy can make structured 3D SIMP state solves practical on one
GPU in benign-to-moderate regimes, what its memory and setup costs are, and
where the same hierarchy fails on high-contrast density fields.  BF16 is
studied as a guarded mixed-precision path, but the solver contribution is the
matrix-free Galerkin hierarchy and the resulting failure-mode screen.

\paragraph{This paper.}
We replace the flat Jacobi preconditioner with a \emph{matrix-free Galerkin
geometric multigrid} hierarchy and measure how far it improves frozen SIMP
linear solves before high-contrast density fields defeat the present coarse
space and smoother balance.
The fused fine-level operator is treated here as a fixed level-0 building
block rather than as a contribution of this paper.
The evidence is solver-level: frozen-coefficient linear solves,
spectral-proxy sweeps, ablations, and scaling.  The auxiliary fixed-penalty
30-step schedule is retained only as an auxiliary diagnostic and is not part of
the main optimization evidence chain.
The concrete contributions are:

\begin{enumerate}
\item \textbf{Matrix-free Galerkin hierarchy.}
  The finest level remains fully matrix-free; Level~1 is assembled
  element-by-element as $\bm{K}_1 = \bP^\top\bm{K}^{(0)}\bP$ without ever
  forming $\bm{K}^{(0)}$; coarser levels use exact sparse triple products.
  This preserves Galerkin consistency while eliminating the assembled
  fine-stiffness bottleneck at the finest level.

\item \textbf{Mixed-precision schedule and spectral proxy.}
  BF16 is tested at the finest level, FP32 at Level~1, and FP64 on deeper
  levels.  We give an idealised conditional estimate that motivates
  $\kappa_{\mathrm{eff}}$ as a screening quantity, then treat
  $\eps_{\mathrm{BF16}}\kappa_{\mathrm{eff}}<1$ as an empirical spectral proxy
  rather than as a proof of BF16 stability for the implemented hierarchy.
  Direct BF16 solves on the heterogeneous screening states are used to report
  where that diagnostic succeeds and where it misclassifies convergence.

\item \textbf{Chebyshev-capable hierarchy and smoother ablation.}
  Integrates a degree-$\nu$ Chebyshev polynomial smoother into the hierarchy,
  keeps damped Jacobi as a matched comparison path, and reports when the
  current Chebyshev default should be read as a provisional
  robustness-oriented engineering choice rather than as the fastest option on
  mild states.

\item \textbf{Empirical convergence and failure-mode screen.}
  The reported experiments document where the GMG hierarchy converges on the
  tested suite and where heterogeneous 512k or near-singular stress cases mark
  the current failure boundary.
\end{enumerate}

\paragraph{Scope vs.\ companion paper.}
The present manuscript builds on the companion operator/kernel study of
\citet{yang2026fused}
(\href{https://arxiv.org/abs/2604.18020}{arXiv:2604.18020}), which introduced
the fused gather--GEMM--scatter Level~0 matrix-free SIMP operator used here.
That companion paper owns the Level~0 operator and fused-kernel design,
including the operator-level roofline characterisation and the flat-Jacobi
baseline context.  The present manuscript does not claim that operator as a
contribution.  It asks what solver can be built around the fixed Level~0
action: the matrix-free Galerkin GMG hierarchy, the Jacobi/Chebyshev smoother
policy and ablations, the precision-descent schedule, and the
$\eps_{\mathrm{BF16}}\!\cdot\!\kappa_{\mathrm{eff}}$ admissibility analysis in
Appendix~\ref{app:bound} are contributions of this paper.  The negative
flat-Jacobi BF16 result from the companion paper motivates this hierarchy: the
diagonally scaled condition number $\kappa(\bD^{-1}\bK)$ was too large for a
flat preconditioner to make BF16 reliable at the tested SIMP scales.  The
present paper is self-contained at the solver level: it specifies the
hierarchy, smoother policy, precision schedule, stopping rules, benchmark
states, and result provenance needed to reproduce the reported GMG experiments.
All solver variants compared here use the same Level~0 action, so the
quantitative comparisons isolate the multigrid and precision-policy choices
rather than re-evaluating the fused-kernel design.  The primary methodological
and solver-level claims are therefore distinct, although the present paper
reprises a limited operator-level proxy measurement only to interpret solver
behavior.

\paragraph{Relation to prior work.}
Multigrid for topology optimization is not new
\citep{amir2014multigrid,wu2016system,wang2025matrixfree,herrero2023adaptive}, nor is the use of
mixed precision in Krylov or multigrid solvers
\citep{carson2017new,higham2022mixed,tsai2023mixed,tsai2023three}.
Relative to the cited single-GPU 3D SIMP, matrix-free 3D TO, and
mixed-precision multigrid literature, the present paper studies a more
specific conjunction: a matrix-free fine operator \citep{yang2026fused}, a
structured Galerkin hierarchy that does not assemble the finest matrix, and a
precision schedule that tests BF16 where tensor cores can be used while
checking true residuals and a spectral proxy.
AMG-based solvers remain the closest external baseline family: AMGx is a GPU
representative and PyAMG provides the assembled CPU reference used in
our assembled-baseline comparison \citep{naumov2015amgx,bell2023pyamg}.

\paragraph{Paper organisation.}
Section~\ref{sec:related} reviews multigrid for TO, mixed-precision Krylov,
and GPU AMG.
Section~\ref{sec:method} describes the V-cycle design, Galerkin coarse
operators, and the precision-descent schedule.
Section~\ref{sec:results} then combines implementation details, validation
checks, and the solver-focused experimental study so that solver design and
measured behavior are read in one continuous arc.
Section~\ref{sec:gallery} provides a compact qualitative structure gallery for
mechanics readers; it is visual context rather than quantitative solver
evidence.
Section~\ref{sec:discussion} discusses failure modes and extensions.
Section~\ref{sec:conclusion} restates contributions and open questions.
Appendix~\ref{app:repro} documents reproducibility;
Appendix~\ref{app:artifact} gives the reproduction workflow and result provenance;
Appendix~\ref{app:bound} derives the $\kappa_{\mathrm{eff}}$ bound.

\section{Related Work}\label{sec:related}

\subsection{Topology Optimization Formulations and Regularization}

Modern structural topology optimization still rests on the conceptual split
introduced by the early homogenization and density formulations:
one chooses a material parameterisation, regularises it sufficiently to obtain
meaningful limits, and solves a large sequence of state equations
\citep{bendsoe1988homogenization,bendsoe1999material}.
For density methods, regularization is not peripheral.
Convolution filters \citep{bourdin2001filters}, projection-based minimum-length
controls \citep{guest2004minimum}, and robust density regularization are what turn a numerically unstable
binary design problem into a reproducible computational workflow.
Alternative representations, especially level-set methods
\citep{wang2003levelset,allaire2004levelset} and comparative
reviews across method families \citep{sigmund2013review,yago2022comparative},
make the same point from another angle: the optimization algorithm can change,
but the linear-elasticity solve remains the central cost driver once the mesh
becomes large.

\subsection{Large-Scale and Parallel Topology Optimization}

The move from educational 2D codes to research-grade 3D systems has largely
been driven by parallel implementation.
\citet{aage2013parallel} established the distributed-memory optimization
framework: MMA, structured brick elements, and parallel sparse linear algebra
were organised as a scalable CPU workflow rather than as a GPU kernel problem.
\citet{aage2015petsc} made that workflow concrete in PETSc, with filtering,
structured grids, and an openly reproducible parallel implementation.
\citet{aage2017giga} then showed how far the assembled-cluster paradigm can be
pushed, reaching giga-voxel scale on a supercomputer.  Read together, these
papers fix an important baseline for the present work.
They show that large 3D SIMP is feasible when one accepts a distributed sparse
matrix representation and cluster memory budget; they do not solve the
workstation problem in which the assembled finest matrix is itself the dominant
bottleneck.

That distinction matters because the single-GPU regime is not a smaller copy of
the cluster regime.  \citet{ferrari2020top99neo} and
\citet{yago2022comparative} emphasize implementational economy and benchmark
comparability, but the underlying linear systems remain small enough that
assembly is not the limiting architectural decision.  By contrast, once the
target platform is a single consumer GPU, the central question becomes how much
of the sparse hierarchy can be replaced by structured transfer operators,
matrix-free finest-level actions, and bandwidth-conscious kernels without
giving up robustness.

\subsection{Multigrid for Topology Optimization}

Multigrid is a natural response to the linear-solver bottleneck, but its role
in topology optimization is more delicate than in textbook Poisson problems.
\citet{amir2014multigrid} is the closest classical antecedent: it showed that
multigrid-preconditioned Krylov solvers can make structured-hex SIMP efficient,
but the implementation target was a conventional sparse-matrix CPU setting.
The present paper inherits Amir et al.'s insistence that the solver hierarchy
must be designed jointly with the optimization problem;
it departs by refusing to assemble the finest matrix at all.

\citet{wu2016system} pushed topology optimization toward high-resolution
single-device execution and is therefore closer in spirit to the current
hardware target.  Its emphasis, however, is high-resolution GPU execution for
the optimization pipeline as a whole rather than a mixed-precision GMG
hierarchy with an explicit stability condition.  \citet{traff2023simple}
occupies a similarly important position: it demonstrates how far a carefully
tuned structured GPU implementation can go while remaining close to standard
TO ingredients.  We inherit the same design preference for structured grids and
aggressive kernel tuning, but we depart from Tr\"aff et al. by testing whether
the linear solver itself can make BF16 fine smoothing viable on selected
states, with true-residual checks and an explicit failure-mode screen.

An earlier CUDA antecedent is \citet{gavranovic2019gpgpu}, who implemented
matrix-free geometric multigrid for topology optimization on regular
structured hexahedral meshes.  We treat that work as an important
GPU/matrix-free GMG predecessor; the distinction here is the Galerkin hierarchy
around the inherited fused Level~0 SIMP operator together with BF16/FP32/FP64
level scheduling, direct BF16 true-residual validation, and failure-mode
screening.

\citet{wang2025matrixfree} is the closest recent matrix-free 3D TO neighbour
on the solver-design side.  Their MATLAB framework combines an element-based
matrix-free action with GMG and pushes structured-hex 3D SIMP to very large
problem sizes on a standard workstation.  The present paper departs in target
platform and numerical scope: it is a single-consumer-GPU study rather than a
MATLAB/workstation study, it keeps the finest operator resident in the GPU-side
matrix-free path throughout the hierarchy experiments, and it makes the
mixed-precision conditioning question---especially BF16 fine smoothing under a
Galerkin hierarchy---a first-class claim rather than a secondary implementation
detail.

\citet{padhi2023gpu} is another important GPU-side multigrid neighbour.  Their hybrid
GPU and homogenization-based multigrid approach recognises the same pressure
point as the present work: three-dimensional TO becomes solver-limited long
before the optimiser becomes the bottleneck.  The difference is architectural.
Padhi et al.\ operate with a homogenization-driven multigrid construction,
whereas our hierarchy is explicitly Galerkin and is built to preserve the
fine-operator correction problem seen by the outer solver.  That distinction is
not cosmetic here because the present paper uses the coarse hierarchy to bound
the effective condition number seen by a BF16 fine smoother.

\citet{herrero2023adaptive} tackle the multigrid problem from a different end:
adaptive non-conforming meshes, distributed memory, and parallel GMG for
Krylov-preconditioned topology optimization.  Their work shows that geometric
multigrid remains viable even when the mesh hierarchy is irregular and
distributed.  We inherit the message that geometry-aware transfer operators are
worth preserving, but we depart in three ways: the present hierarchy is
single-GPU rather than distributed-memory, matrix-free at the finest level
rather than assembled, and designed around mixed-precision conditioning diagnostics rather
than parallel AMR.

\citet{lazarov2011filters} showed that Helmholtz filtering introduces a
PDE-defined length scale into the filtered design field.  In the present work,
where that field defines the fine SIMP stiffness operator, we therefore require
any coarse-grid correction to remain consistent with the filtered fine
operator.  This is one
reason the present paper takes the Galerkin side of the classical
Galerkin-versus-re-discretization tradeoff
\citep{trottenberg2000multigrid,briggs2000multigrid,bohm2025galerkin}.  A
hierarchy whose coarse operators drift away from the filtered fine operator may
still accelerate a solve, but it does not help the specific BF16 stability
problem addressed here.

\subsection{Matrix-Free GPU Finite Element Solvers}

The numerical linear-algebra side of this paper is also tied to a broader HPC
trend toward matrix-free finite element methods on GPUs.
Libraries such as MFEM emphasize matrix-free operators, high-order kernels,
and backend portability precisely because assembled matrices often waste both
memory and bandwidth on accelerator hardware \citep{anderson2021mfem}.
Related work on low-order refined preconditioners, matrix-free GPU flow
solvers, and vectorisation studies
\citep{franco2020lor,vargas2022matrixfree,sun2020vectorization} reaches the
same conclusion from different applications: the viable accelerator strategy is
usually to spend arithmetic to save memory traffic, then design the
preconditioner around that choice.
\citet{davydov2020matrixfree} make the same argument in nonlinear solid
mechanics: once the operator is matrix-free, multigrid has to be rebuilt around
that representation rather than assumed to come for free from an assembled CSR
matrix.  That paper is not about topology optimization, but it is directly
relevant methodologically because it demonstrates that matrix-free finite
elements and geometric multigrid can coexist in solid-mechanics settings when
the level operators are constructed with care.

Within topology optimization, \citet{traff2023simple} demonstrated how far a
carefully tuned structured GPU code with multigrid-preconditioned solves can
go without making mixed-precision GMG the central contribution, while recent
single-GPU matrix-free baselines already show that the fine operator itself can
be reformulated as a fused element-level kernel.
The present paper studies a complementary point in that design space: a
single-GPU Galerkin hierarchy and mixed-precision diagnostic wrapped around a
fixed matrix-free fine operator.

\subsection{Mixed Precision, Tensor Cores, and GPU AMG}

Reduced precision becomes attractive on GPUs because tensor cores move dense
small-matrix arithmetic into a much higher-throughput regime
\citep{nvidia2023ada,markidis2018nvidia}.
The obstacle is not hardware support but numerical stability.
Iterative refinement and mixed-precision Krylov theory
\citep{carson2017new,higham2022mixed,haidar2020tensor} state the condition in
its simplest form: low-precision inner work is safe only when the effective
conditioning seen by that work is sufficiently small.
The multigrid literature makes the same point from older and newer directions.
\citet{goddeke2007mixedfem} and \citet{goddeke2011cyclic} already showed that
mixed precision can be profitably embedded in geometric multigrid, but only
when the accuracy-critical parts of the cycle remain protected.  The same logic
reappears in more recent AMG work: \citet{tsai2023mixed} and
\citet{tsai2023three} allow different precisions on different AMG levels, and
\citet{kashi2026mixed} surveys the broader design space, but the consistent
message is that lower precision is viable only after one decides which parts of
the multigrid cycle are spectrum-sensitive and which are bandwidth-limited.

The tensor-core linear-algebra literature is relevant here for a more local
reason.  \citet{abdelfattah2019batched} and \citet{abdelfattah2020batched}
study precisely the regime of small dense blocks in which the present
matrix-free element kernel operates.  \citet{lopez2023mixedlu} and
\citet{haidar2020tensor} then show how mixed-precision tensor-core kernels must
be organised so that low-precision multiplies are buffered by higher-precision
accumulation and correction.  \citet{buttari2008sparse} and
\citet{buttari2007dense} provide the older iterative-refinement viewpoint:
reduced precision is useful when it is embedded inside a higher-precision outer
iteration with an explicitly controlled conditioning requirement.

Domain-specific FEM studies reinforce the same message.  The question is therefore not whether mixed precision can help a
linear solve in principle, but where it may be inserted in a topology
optimization solver without breaking the solve.  On the sparse-solver side,
AMGx provides a strong GPU baseline for assembled CSR systems
\citep{naumov2015amgx}, and PyAMG remains a standard CPU reference
\citep{bell2023pyamg}.  Both are valuable comparators, but neither solves the
specific problem targeted here: a matrix-free SIMP solver designed to use BF16
at the fine smoothing stage without materialising the fine stiffness matrix.

\begin{table}[!tbp]
\centering
\footnotesize
\caption{Nearest prior-art comparison.  Entries use directly stated information
from the cited papers where available; ``N/R'' means that the metric is not
reported explicitly in the source paper.}
\label{tab:prior_art}
\setlength{\tabcolsep}{3pt}
\begin{tabularx}{\linewidth}{>{\RaggedRight\arraybackslash}p{0.12\linewidth}>{\RaggedRight\arraybackslash}p{0.12\linewidth}>{\RaggedRight\arraybackslash}p{0.15\linewidth}Y >{\RaggedRight\arraybackslash}p{0.10\linewidth}>{\RaggedRight\arraybackslash}p{0.17\linewidth}}
\toprule
Work & Hardware & Scale & Solver / hierarchy & Prec. & Reporting \\
\midrule
\citet{aage2013parallel} & Distributed CPU & Large 3D TO; ceiling varies by benchmark & Parallel sparse FEM + MMA framework & FP64 & N/R \\
\citet{aage2015petsc} & Distributed CPU/PETSc & Structured 3D TO & Assembled sparse PETSc stack & FP64 & N/R \\
\citet{aage2017giga} & HPC cluster & Giga-voxel & Cluster-scale assembled/distributed workflow & FP64 & Design scale; solver timing N/R \\
\citet{wu2016system} & Single GPU & High-res. TO & GPU-accelerated structured solve pipeline & FP32 & Runtime scaling; no mixed precision \\
\citet{gavranovic2019gpgpu} & Single NVIDIA GPU & Structured hexahedral TO & CUDA matrix-free geometric multigrid & N/R & Runtime reduction claimed; no BF16 tensor cores \\
\citet{wang2025matrixfree} & 64\,GB PC; \mbox{MATLAB} & Large 3D TO & Matrix-free fine action + GMG & MATLAB path & Scale and wall time; no GPU/BF16 \\
\citet{traff2023simple} & Single GPU & Large 2D/3D TO & Structured GPU TO with V-cycle MG & FP32 auxiliaries / FP64 MG ops & Wall time reported; no BF16 tensor cores \\
\citet{padhi2023gpu} & CPU+GPU hybrid & Large 3D TO & Homogenization multigrid & FP64 / FP32 & Wall time; no BF16 tensor cores \\
\citet{herrero2023adaptive} & Distributed CPU & Large adaptive 3D TO & Parallel GMG on adaptive meshes & FP64 & Scaling \\
\textbf{This paper} & RTX 4090 & 3.09M free DOFs / 1M elements & Matrix-free fine action + Galerkin GMG & BF16, FP32, FP64 by level & Iterations, timing, VRAM, BF16 diagnostic \\
\bottomrule
\end{tabularx}
\end{table}

\paragraph{Positioning relative to the closest work.}
Relative to \citet{amir2014multigrid} and \citet{herrero2023adaptive}, the
present paper narrows the hardware target from distributed sparse-matrix
multigrid to a single consumer GPU and therefore makes operator representation
the central design decision.  Relative to \citet{wang2025matrixfree},
\citet{traff2023simple}, \citet{wu2016system}, and \citet{padhi2023gpu}, it
narrows the numerical claim: the point is not merely that a matrix-free or GPU
implementation is fast, but that a matrix-free fine operator can be embedded in
a mixed-precision Galerkin hierarchy on a single consumer GPU and checked
against an effective-conditioning diagnostic for guarded BF16 fine smoothing.
Relative to prior matrix-free Jacobi-PCG baselines, the distinguishing claim
here is not merely matrix-free execution but matrix-free
preconditioning: the fine BF16 kernel is embedded in a Galerkin hierarchy
designed to test whether the low-precision stage can be guarded by a bounded
effective-spectrum diagnostic.

\section{Methodology}\label{sec:method}

\begin{table}[!tbp]
\centering
\small
\caption{Principal symbols used in Section~\ref{sec:method}.}
\label{tab:nomenclature}
\begin{tabular}{p{0.18\linewidth}p{0.74\linewidth}}
\toprule
Symbol & Meaning\\
\midrule
$\bm{\rho},\tilde{\bm{\rho}},\hat{\bm{\rho}}$ & Design, filtered, and physical density fields\\
$\bF$ & Density-filter matrix\\
$E_e$ & Element modulus induced by SIMP interpolation\\
$E_{\min}$ & Modulus floor in the SIMP interpolation law\\
$\rho_{\mathrm{floor,test}}$ & Synthetic density-floor label used to construct standalone stress fields; it is distinct from $E_{\min}$\\
$V_f,p$ & Prescribed volume fraction and SIMP penalization used to construct a benchmark state\\
$\bK^\ell$ & Level-$\ell$ stiffness operator on the free-DOF subspace\\
$\bP^\ell,\bR^\ell$ & Prolongation and restriction between levels $\ell$ and $\ell+1$, with $\bP^\ell\!:\ell\!+\!1\rightarrow\ell$\\
$\bD$ & Diagonal of the active level operator used by the smoother\\
$\mathcal{M}$ & Frozen left-preconditioning operator induced by one multigrid V-cycle on the free-DOF system\\
$\kappa_{\mathrm{eff}}$ & Effective condition number of the applied left-preconditioned operator $\mathcal{M}\bK$\\
$L$ & Coarsest hierarchy level\\
$\nu$ & Polynomial smoother degree\\
\bottomrule
\end{tabular}
\end{table}

\subsection{Problem Setup}

For the optimization problem we distinguish the design density
$\bm\rho$, the filtered density $\tilde{\bm\rho}=\bF\bm\rho$, and the physical
density $\hat{\bm\rho}=H_{\beta,\eta}(\tilde{\bm\rho})$ after the optional
Heaviside projection.  The constitutive law is evaluated from the physical
field,
\begin{equation}
  E_e(\hat\rho_e)=E_{\min}+(E_0-E_{\min})\hat\rho_e^p,
  \qquad
  \hat{\bm\rho}=H_{\beta,\eta}(\bF\bm\rho),
  \label{eq:simp_filter}
\end{equation}
with $E_{\min}=10^{-9}$ in the reported runs.  Each linear solve therefore sees
a frozen modulus field $\{E_e\}$ associated with one SIMP state.  The filter is
applied once per SIMP update; it is not re-applied inside the multigrid cycle.
In the standalone linear-solver studies, we prescribe $\{E_e\}$ directly so
that solver effects are isolated from optimiser dynamics.
We reserve $E_{\min}$ for the modulus floor in \eqref{eq:simp_filter}.
By contrast, $\rho_{\mathrm{floor,test}}$ denotes a synthetic density-construction
label in the standalone stress fields.  When such a field is mapped through
\eqref{eq:simp_filter}, $E_{\min}$ still sets the final stiffness floor; this
is why the very-low $\rho_{\mathrm{floor,test}}$ labels in the low-floor basin screen
collapse to essentially the same effective modulus floor.

With this frozen-coefficient view, the state equation is
\begin{equation}
  \bK(\hat{\bm\rho})\bu = \bff,
  \qquad
  \bK(\hat{\bm\rho}) = \sum_e E_e(\hat\rho_e)\bm{K}_e,
  \label{eq:state_eq}
\end{equation}
where $\bm{K}_e$ is the $24\!\times\!24$ unit element stiffness matrix for a
trilinear hexahedral element with $\nu=0.3$.
The free-DOF system is $\bm{K}_{\mathrm{ff}}\bu_f = \bff_f$.

\begin{figure}[!tbp]
\centering
\includegraphics[width=\linewidth]{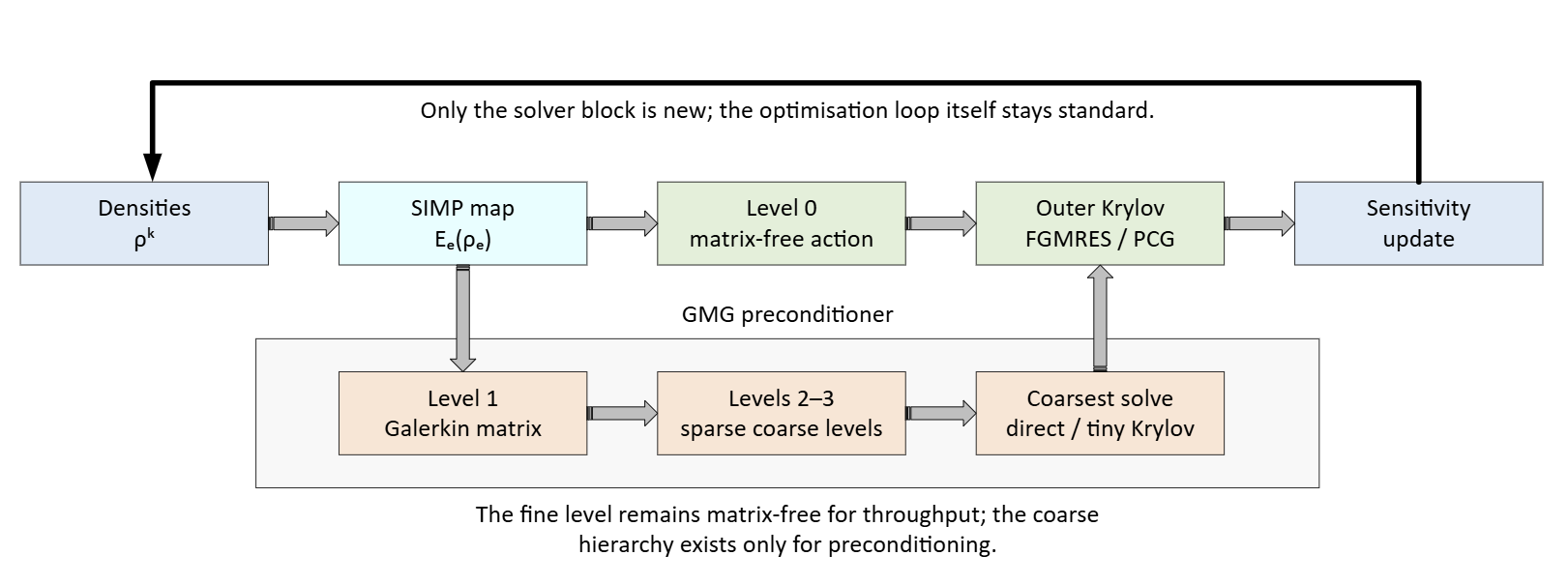}
\caption{Solver placement inside one SIMP iteration.
  The density update remains the standard optimization loop; the change in this
  paper is the linear-solver stack that sits between the current density field
  and the sensitivity update.
  Level~0 stays fully matrix-free and applies the fixed fused fine operator
  of \citet{yang2026fused}, while Level~1 and below form a Galerkin hierarchy
  used only for preconditioning.
  The figure locates the contribution architecturally before the performance
  and robustness plots appear.}
\label{fig:method_overview}
\end{figure}

\subsection{Matrix-Free V-Cycle}

The V-cycle at level $\ell$ applies:
\begin{equation}
  \mathcal{V}^\ell(\bm{r}) =
  \begin{cases}
    \mathcal{S}^\ell_{\mathrm{post}}\!\left(
      \mathcal{S}^\ell_{\mathrm{pre}}(\mathbf{0}, \bm{r})
      + \bP^\ell \mathcal{V}^{\ell+1}(\bR^\ell(\bm{r} - \bK^\ell \mathcal{S}^\ell_{\mathrm{pre}}(\mathbf{0},\bm{r})))
    \right)
    & \ell < L,\\
    (\bK^L)^{-1}\bm{r} & \ell = L,
  \end{cases}
  \label{eq:vcycle}
\end{equation}
where $\bP^\ell$ and $\bR^\ell = (\bP^\ell)^\top$ are the prolongation and
restriction operators, $\mathcal{S}^\ell$ is the smoother, and $\bK^L$ is
the coarsest-level operator solved by dense Cholesky (if $n_L \le 5000$)
or PCG.

\subsection{Transfer Operators and Boundary Masking}

Coarsening is standard $2{:}1$ in each Cartesian direction on the structured
hexahedral grid.  The scalar prolongation operator $P_s^\ell$ interpolates from
coarse nodes $(i_c,j_c,k_c)$ to fine nodes $(i_f,j_f,k_f)$ with tensor-product
trilinear weights,
\begin{equation}
  (P_s^\ell)_{i_f j_f k_f,\; i_c j_c k_c}
  =
  w_x(i_f,i_c)\,w_y(j_f,j_c)\,w_z(k_f,k_c),
  \label{eq:trilinear_weights}
\end{equation}
where each one-dimensional weight is either $1$, $0$, or $\tfrac12$: even fine
indices inject to one parent coarse node, and odd fine indices split equally
between the two neighbouring coarse nodes.
The vector-valued transfer operator is
\begin{equation}
  \bP^\ell = P_s^\ell \otimes \bI_3,
  \qquad
  \bR^\ell = (\bP^\ell)^\top.
  \label{eq:vector_prolongation}
\end{equation}

Dirichlet constraints are inherited by \emph{injection}, not by prolongation
support.  A coarse node $(i_c,j_c,k_c)$ is mapped to the fine node
$(2i_c,2j_c,2k_c)$, and a coarse DOF is declared free if and only if that
injected fine DOF is free.  This avoids the boundary-pathology in which a
fixed-face coarse node would otherwise receive fractional prolongation support
from adjacent free interior nodes and be misclassified as free.

\paragraph{Fine-level operator ($\ell=0$).}
$\bK^0$ is never assembled.  Matrix--vector products are performed by the
fused gather--GEMM--scatter kernel:
\begin{equation}
  (\bK^0\bu)_{\mathrm{free}} = \mathcal{F}_{\mathrm{mf}}(\bu;\{E_e\}).
  \label{eq:mf_matvec}
\end{equation}
Here $\mathcal{F}_{\mathrm{mf}}$ denotes the fixed matrix-free fine-level
elasticity action.  Its operator-level design is treated here as implementation substrate; the
present contribution concerns how that level-0 action is embedded in the GMG
hierarchy.

\paragraph{Level-1 Galerkin assembly.}
$\bK^1 = (\bP^0)^\top \bK^0 \bP^0$ is assembled element-by-element:
\begin{equation}
  K^1_{ij} = \sum_e E_e \bigl(\bP^0_e\bigr)^\top \bm{K}_e \,\bP^0_e
  \bigl[i,j\bigr],
  \label{eq:galerkin_level1}
\end{equation}
where $\bP^0_e\in\mathbb{R}^{24\times 24}$ is the restriction of the global
prolongation to element $e$, computed from trilinear interpolation weights.
Equation~\eqref{eq:galerkin_level1} has complexity $O(n_0)$ and requires
only $E_e$ (no $\bK^0$ materialisation).

Operationally, each coarse element receives contributions from its eight fine
children.  The implementation precomputes those eight local prolongation
patterns, forms the weighted local triple products
$(\bP_e^0)^\top \bm{K}_e \bP_e^0$, gathers the corresponding fine-element
moduli $E_e$, and scatters the resulting values into a deduplicated CSR layout.
The assembled level-1 matrix is therefore obtained without ever building the
global fine stiffness matrix.

\paragraph{Coarser levels ($\ell\ge2$).}
$\bK^\ell = \bR^{\ell-1}\bK^{\ell-1}\bP^{\ell-1}$ via sparse triple products
(cuSPARSE via CuPy).  In the reported 4-level hierarchy on a 216k-element
mesh, the coarsest system has about $1.4\times10^3$ free DOFs and the
coarsest solve is negligible.

\subsection{Chebyshev-Jacobi Smoother}\label{ssec:chebyshev}

Degree-$\nu$ Chebyshev-Jacobi smoothing targets the spectral interval
$[\alpha\lambda_{\max}, \lambda_{\max}]$ of $\bD^{-1}\bK^\ell$, where
$\bD = \mathrm{diag}(\bK^\ell)$ and $\alpha = 1/30$ by default.
The iteration follows the standard Chebyshev semi-iteration described in
Chapter~12 of \citet{saad2003iterative}
with $\bM^{-1} = \bD^{-1}$:

\begin{algorithm}[ht]
\caption{Chebyshev-Jacobi Smoother}\label{alg:cheb}
\begin{algorithmic}[1]
\Require $\bK$, $\bm{b}$, $\bx_0$, $\bD^{-1}$, $\lambda_{\max}$, degree $\nu$, lower fraction $\alpha$
\State $\sigma \leftarrow \tfrac{1}{2}(\lambda_{\max}+\alpha\lambda_{\max})$,\quad
       $\delta \leftarrow \tfrac{1}{2}(\lambda_{\max}-\alpha\lambda_{\max})$
\State $\bm{d}_0 \leftarrow \bD^{-1}(\bm{b}-\bK\bx_0)$,\quad
       $a_0 \leftarrow 2/(\lambda_{\max}+\alpha\lambda_{\max})$
\State $\bx_1 \leftarrow \bx_0 + a_0\,\bm{d}_0$
\For{$k = 1, \dots, \nu-1$}
  \State $\bm{r}_k \leftarrow \bm{b} - \bK\bx_k$
  \State $a_k \leftarrow 1/\bigl(\sigma - \delta^2 a_{k-1}/4\bigr)$
  \State $\bm{d}_k \leftarrow a_k\bigl(\bD^{-1}\bm{r}_k + (\delta^2 a_{k-1}/4)\,\bm{d}_{k-1}\bigr)$
  \State $\bx_{k+1} \leftarrow \bx_k + \bm{d}_k$
\EndFor
\State \Return $\bx_\nu$
\end{algorithmic}
\end{algorithm}

\noindent
The smoothing factor for eigenvalue $\lambda\in[\alpha\lambda_{\max},\lambda_{\max}]$
after $\nu$ steps is bounded by $1/T_\nu(\sigma/\delta)$, where $T_\nu$ is
the Chebyshev polynomial of the first kind.  For $\alpha=1/30$ and $\nu=2$,
$\sigma/\delta=(1+\alpha)/(1-\alpha)=31/29$ and
$1/T_2(31/29)\approx0.78$.  This is only a targeted-band residual-polynomial
bound; it should not be read as a $0.10$ global damping guarantee or as a
measured high-frequency smoothing factor for the full elasticity hierarchy.

\subsection{Precision-Descent Schedule}\label{ssec:prec_descent}

For the BF16 mixed-precision path analysed in this section, we assign
computation precisions per multigrid level as
\begin{equation}
  \text{level } \ell = 0: \text{BF16},\quad
  \text{level } \ell = 1: \text{FP32},\quad
  \text{level } \ell \ge 2: \text{FP64.}
  \label{eq:prec_schedule}
\end{equation}
The outer PCG/FGMRES solver and the residual computation operate in FP64
throughout.  Only the smoother applications and the coarse-grid matvec at
Level~1 use lower precision in that BF16 path.  For the default FP32-GMG runs
reported later, the fine level stays in FP32 and the coarser levels stay in
FP64 unless a dedicated level-precision ablation is stated explicitly.

\paragraph{BF16 numerical semantics.}
Equation~\eqref{eq:prec_schedule} is a level policy, but the implementation is
more specific.  Element moduli are stored in FP64 for the outer solve and
cached in FP32 for lower-precision hierarchy levels.  On the BF16 fine
smoother, the Krylov vector is first cast to FP32; inside the fused kernel, the
local $24$-DOF element tile and the padded $24\times24\rightarrow32\times32$
element tensor are down-cast to BF16 before the WMMA multiply.  Accumulation
remains FP32 on tensor cores, the element modulus $E_e$ is applied in FP32, and
the scattered output also remains FP32 until the smoother returns a correction
promoted back to FP64.  The diagonal inverse is stored in FP64 but cast to FP32
inside the fine smoother; it is never stored in BF16.  No explicit
re-normalisation of $\bD^{-1}\bm{r}$ is required because BF16 shares FP32's
exponent range and the diagonal is floored away from zero during setup.

\paragraph{WMMA kernel placement.}
The tensor-core kernel uses one warp for a $16$-element batch, pads the local
$24$-DOF vectors and stiffness blocks to $32$, and evaluates
$F = U K_e^\top$ with four WMMA $16\times16\times16$ operations per warp.
The launch configuration used in the paper is eight warps per block, hence
$128$ elements per block; shared memory holds the padded BF16 element tensor,
the BF16 displacement batch, the FP32 accumulator tile, and the DOF map.
The resulting shared-memory footprint is approximately $38$\,KiB per block on
Ada, comfortably below the RTX~4090 limit.

The diagnostic motivation is given in Appendix~\ref{app:bound}: in an
idealised frozen SPD setting, the effective condition number
$\kappa_{\mathrm{eff}}$ of the applied left-preconditioned operator
$\mathcal{M}\bK$ can be related to the V-cycle error propagation factor through
\begin{equation}
  \kappa_{\mathrm{eff}} \le
  \frac{1+\rho(\mathcal{E})}{1-\rho(\mathcal{E})},
  \label{eq:kappa_bound}
\end{equation}
where $\mathcal{E}$ is the error propagation operator of the V-cycle and
$\rho(\mathcal{E})<1$ is its spectral radius.
This estimate is used only to motivate the measured
$\eps_{\mathrm{BF16}}\kappa_{\mathrm{eff}}$ screen; it is not a rigorous bound
for the exact floating-point mixed-precision operator.  For the
schedule~\eqref{eq:prec_schedule} the measured
$\kappa_{\mathrm{eff}}$ stays below $256$ on the proxy-compliant subset of the
test suite; Section~\ref{ssec:exp_e5} documents one 216k outlier for which
$\eps_{\mathrm{BF16}}\cdot\kappa_{\mathrm{eff}}>1$.

\paragraph{Spectral-radius estimation.}
Chebyshev smoothing requires an estimate of
$\lambda_{\max}(\bD^{-1}\bK^\ell)$ on each active level.  We use power
iteration: $20$ iterations on the matrix-free fine level and $10$ on each
assembled coarse level.  The estimate is cached across solves and recomputed
only when $\max_e E_e$ changes by more than $10\%$, which serves as a cheap
proxy for substantial SIMP-state drift.  These power iterations are part of the
reported setup time in Section~\ref{ssec:exp_e7}; there is no tolerance-based
inner stopping rule.  This smoother-tuning estimate is distinct from the
Lanczos-based $\kappa_{\mathrm{eff}}$ probe used later in the heterogeneous
spectral-proxy and robustness screens.

\subsection{Outer Krylov Solver and Convergence Statement}

For the FP64/FP32 comparison paths, the outer solve uses standard
left-preconditioned CG with the frozen V-cycle defining the applied operator
$\mathcal{M}$.  Within the implemented solver's default configuration, the
automatic outer-solver policy selects PCG for FP64/FP32 fine smoothers and
restarted right-preconditioned FGMRES for the BF16 path.  The
$\kappa_{\mathrm{eff}}$ diagnostic and the conditional spectrum bound below
therefore refer specifically to the frozen left-preconditioned operator
$\mathcal{M}\bK$ used by the FP64/FP32 PCG pairings.  A supplementary
small-problem diagnostic
(listed in Table~\ref{tab:artifact_modules}) records a nonzero symmetry defect
for the applied floating-point preconditioned operator
($0.179$ in the max-absolute diagnostic), so the FP32/FP64 PCG rows in this
paper should be read as empirical solver pairings on the reported comparison
paths rather than as a proof that the implemented V-cycle is exactly SPD in
floating point.  The implementation also provides explicit
FGMRES paths whenever restart sensitivity is being screened or the active
benchmark path is treated conservatively as non-symmetric; this includes the
sensitivity surface, the large-scale FP32 solves, and the robustness screens.
The implemented FGMRES variant uses modified
Gram--Schmidt Arnoldi orthogonalization, Givens rotations for the incremental
QR update, and a happy-breakdown check when the new Hessenberg entry drops
below $10^{-14}$.  The default FGMRES path uses restart $32$; the E6
sensitivity surface screens $16/32/50$, and the large-scale and robustness
stress tests use
restart $50$ with the larger outer-iteration caps stated in the experiment
sections.
Unless noted otherwise, convergence means
$\|\bm{r}_k\|_2/\|\bm{b}\|_2 < 10^{-6}$.
Operationally, reported solves are accepted only by this true-residual check.
Iteration caps, stagnant residual histories, or non-finite iterates are
recorded as failures; the $\eps_{\mathrm{BF16}}\kappa_{\mathrm{eff}}$ scalar is
never used by itself to accept a BF16 result.

\begin{remark}[Conditional diagnostic for the frozen hierarchy]
\label{thm:fgmres_gmg}
Consider one linear solve at a fixed SIMP state, so that $\{E_e\}$ and hence
the left-preconditioning operator $\mathcal{M}$ induced by one frozen V-cycle
are fixed during the Krylov iteration.  Assume:
\textup{(A1)} each coarse operator is SPD on the free-DOF subspace because the
Dirichlet mask is inherited by injection and the hierarchy is Galerkin;
\textup{(A2)} the smoother and coarse correction give an error-propagation
operator $\mathcal{E}$ with $\rho(\mathcal{E})<1$; and
\textup{(A3)} floating-point perturbations are small enough that the frozen
operator can be interpreted through the same SPD spectral picture.
Under these idealisations, the effective condition number estimate
\begin{equation}
  \kappa_{\mathrm{eff}}
  \le
  \frac{1+\rho(\mathcal{E})}{1-\rho(\mathcal{E})}.
  \label{eq:theorem_kappa}
\end{equation}
is the quantity we use to interpret the Lanczos probe.  The quantity
$\eps_{\mathrm{BF16}}\kappa_{\mathrm{eff}}<1$ is therefore an empirical
spectral proxy on measured cases, not an assumption-free theorem or
convergence classifier for the implemented mixed-precision V-cycle.
\end{remark}

\subsection{Complexity and Memory Model}

\begin{table}[!tbp]
\centering
\small
\caption{Per-level cost model for one operator/smoother application inside a
V-cycle.  $n_\ell$ denotes free DOFs, $n_{\mathrm{elem},\ell}$ elements,
$\mathrm{nnz}_\ell=\mathrm{nnz}(\bK^\ell)$, and $s_v$ the value width in bytes
(4 for FP32, 8 for FP64).}
\label{tab:cost_model}
\begin{tabular}{p{0.10\linewidth}p{0.24\linewidth}p{0.23\linewidth}p{0.28\linewidth}}
\toprule
Level & Stored operator data & FLOPs per application & Dominant bytes touched \\
\midrule
$\ell=0$ & Element moduli $E_e$, diagonal $\bD^{-1}$, transfer operators $\bP^0,\bR^0$ & Fine matvec $\approx 2\cdot24^2 n_{\mathrm{elem},0}$; degree-$\nu$ smoother costs $\nu$ such matvecs plus $O(\nu n_0)$ vector updates & Gather/scatter traffic on element DOFs plus vector reads/writes; no finest-level CSR matrix \\
$\ell=1$ & FP32/FP64 CSR $\bK^1$, diagonal $\bD^{-1}_1$, $\bP^1,\bR^1$ & SpMV $\approx 2\,\mathrm{nnz}_1$; smoother adds $O(n_1)$ diagonal scaling and saxpy work & $(s_v+4)\mathrm{nnz}_1 + s_v(n_1+1)$ for CSR data/indices/indptr, plus vectors \\
$\ell\ge2$ & FP64 CSR $\bK^\ell$, diagonal $\bD^{-1}_\ell$, $\bP^\ell,\bR^\ell$ & SpMV $\approx 2\,\mathrm{nnz}_\ell$; coarsest solve is dense Cholesky if $n_L\le5000$, else fixed-count PCG & CSR traffic dominates until the coarsest level; dense solve cost is negligible for the reported $n_L$ \\
\bottomrule
\end{tabular}
\end{table}

Under structured $2{:}1$ coarsening, $n_{\ell+1}\approx n_\ell/8$ away from
boundaries, so the V-cycle cost is dominated by the finest level.  This is why
the precision policy is intentionally asymmetric: only the most expensive level
is pushed to BF16, the first assembled coarse level is kept in FP32 to
stabilise the first correction, and the deeper levels revert to FP64 because
their absolute cost is already small.

\begin{figure}[!tbp]
\centering
\includegraphics[width=\linewidth]{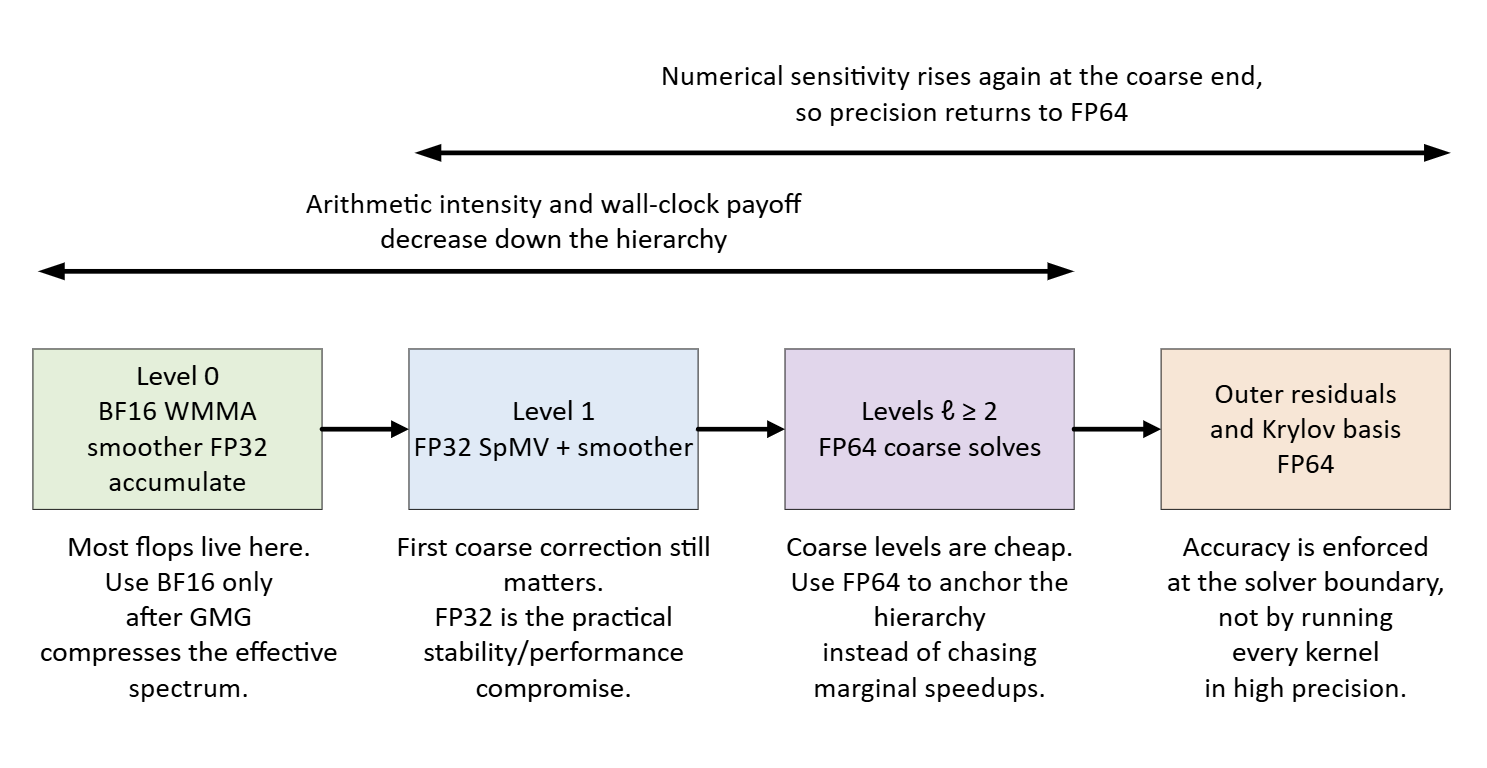}
\caption{Precision descent across the hierarchy.
  The expensive operation is the fine-level smoother, so that is where BF16
  tensor cores are used; the first assembled coarse level rises to FP32 to
  stabilise the first correction; deeper levels return to FP64 because their
  marginal runtime cost is tiny and they anchor the hierarchy numerically.
  The outer residual and Krylov basis stay in FP64 throughout.
  The figure makes the precision logic explicit before the reader reaches the
  measured throughput and convergence sections.}
\label{fig:precision_overview}
\end{figure}

\subsection{Coarse-Operator Choice: Why Galerkin Rather Than Re-discretization}
\label{ssec:galerkin_vs_rediscr}

Re-discretization coarsens the density field $\bm\rho$ and reassembles $\bK^1$
from the coarsened density.
This decouples coarse and fine operators, which can reduce convergence rates
when the fine-level material distribution changes rapidly.
Galerkin coarsening satisfies the \emph{variational condition}
$\bK^1 = (\bP^0)^\top\bK^0\bP^0$ by construction, guaranteeing that the
coarse space is the optimal Galerkin approximation of the fine-level
correction problem.  In the filtered SIMP setting of \eqref{eq:simp_filter},
this also means that all coarse operators inherit the already filtered modulus
field implicitly through the fine operator; no separate coarse-grid filter or
projection is applied inside the hierarchy.
Validation check M3 in Section~\ref{ssec:phase1} therefore compares the
matrix-free Galerkin hierarchy against an assembled-Galerkin reference on a
small problem; the re-discretization discussion here is methodological context
rather than a separate benchmark result.

\section{Experimental Setup and Results}\label{sec:results}

All experiments reported below were run in the reported reproduction environment
on a single NVIDIA RTX~4090.
The numbered experiment groups are summarized in
Table~\ref{tab:experiment_map}.
Repeated-run quantities in E2, E6, E7, E8, and E9 are reported as
mean$\pm$standard deviation over $10$ timed trials after $2$ warm-up runs of
the corresponding solve path.  E4 is a dedicated 200-repetition kernel proxy
after $5$ warm-up launches, E3 is one auxiliary 30-step continuation run per
benchmark, and the E6 sensitivity surface uses a lighter $1$-warm-up,
$3$-trial screening protocol.  The direct BF16 validation and the
high-contrast smoother screen are single-run diagnostic sweeps with explicit
true-residual checks.  Full residual histories, timing sidecars, trajectory
diagnostics, and screening-map records are retained in the reproducibility
package described in Appendix~\ref{app:repro}.

\subsection{Experimental Setup and Solver Configuration}\label{ssec:exp_setup}

The solver stack evaluated below is the concrete implementation of the
methodology in Section~\ref{sec:method}, including the padded
$24\rightarrow32$ WMMA fine-level kernel and the restart policy for the outer
Krylov solver.  For BF16 runs the first assembled coarse level is kept in FP32;
the default FP32-GMG runs instead keep that level in FP64 unless a dedicated
level-precision ablation is reported explicitly.
For the representative 216k-element case, the first assembled coarse level has
about $89$k free unknowns, so its cost is small relative to the matrix-free
fine-level action.
The full hierarchy---fine operator, Galerkin assembly, coarse SpMV, and
outer Krylov loop---is implemented in CuPy/Python with runtime kernel
compilation.
All reported timings use the pinned hardware/software environment and fixed
seeds summarized in Appendix~\ref{app:repro}.  The benchmark stack is
Python~3.10.18, CuPy~13.6.0 (\texttt{cupy-cuda12x}), NumPy~2.2.6,
SciPy~1.15.3, Matplotlib~3.10.7, and PyAMG~5.3.0; the rendering stack uses
PyVista~0.46.3 and VTK~9.6.0.
The live workstation used for the reported runs was an RTX~4090 with NVIDIA
driver~595.71; \texttt{nvidia-smi} reported CUDA API support 13.2, while the
execution backend was CuPy's \texttt{cupy-cuda12x}~13.6.0 CUDA~12.x runtime
package.
Unless stated otherwise, repeated-run measurements use two warm-up runs and ten
timed trials; the heterogeneous probes use experiment-specific fixed seeds
listed in Appendix~\ref{app:repro} (seed~42 for the E1/E5 binary-contrast
sweeps).
The reported implementation also applies a small set of fixed
numerical safeguards:
the power-iteration estimate of $\lambda_{\max}$ is floored at $10^{-6}$, the
coarsest-level regularization is
$\varepsilon=\max(\text{mean\_diag}\cdot10^{-8},10^{-14})$, and the small
FGMRES least-squares system falls back to \texttt{lstsq} if the triangular
solve becomes singular.
Unless an experiment overrides them explicitly, the default hierarchy uses four
levels and the smoothing configuration summarized in
Table~\ref{tab:repro_defaults}.  That table also lists the solver defaults and
fixed experiment seeds needed to reproduce the reported runs.

The experiments answer four distinct questions.
First, how far does the hierarchy mitigate the mesh- and contrast-driven
stagnation seen in Jacobi-PCG?
Second, where does the speedup come from: fewer outer iterations, faster
fine-level arithmetic, or both?
Third, how informative is a spectral proxy once it is checked
against direct BF16 solves?
Fourth, how far does the single-GPU regime extend before setup cost or VRAM
becomes the dominant constraint?

Table~\ref{tab:experiment_map} separates the density construction and evidence
role of each experiment.  This separation is important because the benchmark
suite mixes frozen linear systems, proxy kernel measurements, a limited
auxiliary OC schedule, and robustness screens; only E3 updates a design over
multiple SIMP steps.

\begin{table}[!tbp]
\centering
\caption{Experiment map and density construction.  Here
  $\rho_{\mathrm{floor,test}}$ is the synthetic density-construction floor, while
  $E_{\min}=10^{-9}$ is the SIMP modulus floor used in the reported runs.}
\label{tab:experiment_map}
\footnotesize
\begin{tabularx}{\linewidth}{l>{\RaggedRight\arraybackslash}p{0.24\linewidth}>{\RaggedRight\arraybackslash}p{0.27\linewidth}>{\RaggedRight\arraybackslash}X}
\toprule
ID & State construction & Main solver path & Evidence role\\
\midrule
E1 & Fixed-seed binary-contrast fields, $V_f\in\{0.2,0.5,0.8\}$, $p\in\{1.5,3,4.5\}$, $\rho_{\mathrm{floor,test}}=10^{-2}$ & FP64-GMG, outer PCG, 200 cap & Heterogeneous stress sweep and failure-rate map\\
E2 & Uniform $\rho=0.5$, $p=3$ at 64k--512k & Jacobi-PCG, FP32-GMG, BF16-GMG & Direct per-solve timing and residual histories\\
E3 & Auxiliary 30-step fixed-penalty OC schedule, $p=3$, tolerance $10^{-5}$ & Jacobi-PCG vs.\ FP32-GMG & Schedule-execution timing, not matched final design\\
E4 & Uniform $E_e=1$ proxy input on 216k & BF16/FP32 fine matvec proxy & Kernel-throughput context only\\
E5 & E1-style fixed-seed fields at 64k and 216k & Lanczos probe plus direct BF16/FP64 FGMRES checks & BF16 spectral proxy and direct convergence validation\\
E6 & Uniform 216k state plus selected high-contrast fields & FP64/FP32/BF16 ablations & Precision, depth, cycle, smoother, restart, and high-contrast smoother sensitivity\\
E7 & Uniform modulus $E_e=0.5$ at 125k--1M & FP32-GMG with FGMRES, restart 50 & Large-scale uniform linear solves and setup VRAM\\
E8 & Uniform 64k state after CPU CSR assembly & CPU PyAMG vs.\ GPU FP32-GMG & Narrow post-assembly external baseline\\
E9 & Uniform 216k state & FP64-GMG and FP32-GMG & NVIDIA Management Library (NVML) solve-window energy proxy\\
E10 & Fixed-seed and deterministic stress fields at 64k & FP64-GMG with FGMRES & Robustness/failure-mode screen\\
\bottomrule
\end{tabularx}
\end{table}

\begin{table}[!tbp]
\centering
\caption{Outer-solver policy used in the reported experiment groups.  PCG rows
  are empirical solver pairings, not SPD certificates for the floating-point
  V-cycle; FGMRES rows are the conservative non-symmetric paths.}
\label{tab:solver_policy}
\footnotesize
\begin{tabularx}{\linewidth}{l>{\RaggedRight\arraybackslash}p{0.19\linewidth}>{\RaggedRight\arraybackslash}p{0.23\linewidth}rr>{\RaggedRight\arraybackslash}X}
\toprule
Group & Path & Outer solver & Cap & Tol. & Interpretation\\
\midrule
E1 & FP64-GMG heterogeneous sweep & V-cycle-preconditioned CG & 200 & $10^{-6}$ & Failure-rate stress map\\
E2 & Jacobi-PCG and FP32-GMG & Preconditioned CG & 200 & $10^{-6}$ & Timed uniform-state comparison; Jacobi path is capped and non-converged\\
E2 & BF16-GMG & FGMRES, restart 32 & 200 & $10^{-6}$ & Guarded mixed-precision comparison\\
E3 & Auxiliary OC schedule & Solver defaults, cap 1000 & 1000 & $10^{-5}$ & Auxiliary schedule diagnostic only\\
E5 direct & FP64/BF16 heterogeneous validation & FGMRES, restart 50 & 500 & $10^{-6}$ & Direct true-residual BF16 check\\
E6 & Ablations and sensitivity screens & PCG or FGMRES as labelled; restarts 16/32/50 in FGMRES screen & 200 & $10^{-6}$ & Solver-policy sensitivity\\
E6 high contrast & High-contrast smoother screen & FGMRES, restart 50 & 500 & $10^{-6}$ & Smoother failure diagnostic\\
E7 & Large-scale uniform solves & FGMRES, restart 50 & 500 & $10^{-6}$ & Uniform-modulus scaling only\\
E8 & PyAMG / FP32-GMG external baseline & PyAMG smoothed-aggregation AMG+CG / GMG-PCG & 200 & $10^{-6}$ & Narrow post-assembly 64k reference\\
E10 & Robustness and basin screens & FGMRES, restart 50 & 500 / 300 & $10^{-6}$ & Failure-mode screen\\
\bottomrule
\end{tabularx}
\end{table}

\subsection{Validation Checks Before Main Experiments}\label{ssec:phase1}

Before reporting wall-clock or scaling claims, we verify that the hierarchy
behaves correctly on a compact set of sanity checks: direct-solve agreement,
bounded uniform-density iteration counts, matrix-free/assembled-Galerkin
agreement, smoother behavior, selected SIMP sanity probes, the
$\kappa_{\mathrm{eff}}$ bound, and mixed-precision compliance parity.
Table~\ref{tab:phase1} summarises these checks.

\begin{table}[!tbp]
\centering
\caption{Validation checks completed before the main experiments.}
\label{tab:phase1}
\small
\begin{tabularx}{\linewidth}{c>{\RaggedRight\arraybackslash}p{0.30\linewidth}>{\RaggedRight\arraybackslash}p{0.22\linewidth}>{\RaggedRight\arraybackslash}Xc}
\toprule
ID & Description & Criterion & Retained value & Status\\
\midrule
M1 & FP64 V-cycle vs.\ direct solve (64k) & residual $<10^{-10}$ & relative residual $5.94{\times}10^{-11}$ & \checkmark\\
M2 & bounded iteration count (64k--512k, uniform density) & FGMRES iters $\le 30$ & 17, 21, and 13 iterations & \checkmark\\
M3 & Matrix-free vs.\ assembled Galerkin & compliance rel.\ diff.\ $<0.1\%$ & relative difference $1.19{\times}10^{-15}$ & \checkmark\\
M4 & Chebyshev degree-2/degree-4 vs.\ Jacobi smoother & converges with iters $\le 50$ & 27, 30, and 21 iterations & \checkmark\\
M5 & Selected SIMP sanity probes ($p\in\{1.5,3,4.5\}$, $E_{\min}=10^{-9}$) & converges on validation probes & 17, 17, 17, and 174 iterations & \checkmark\\
M6 & $\kappa_{\mathrm{eff}}\le256$ on the nominal 64k, $p=3$, $\rho=0.5$ probe & Lanczos-based spectral probe & $\kappa_{\mathrm{eff}}=79.65$ & \checkmark\\
M7 & BF16 drop-in compliance error & $\le0.5\%$ vs.\ FP64; FGMRES restart 50, maxiter 2000; compliance gate only & BF16 error $3.66{\times}10^{-6}\%$; 29 iterations, converged & \checkmark\\
M8 & Three-level FP32 hierarchy on all 4 benchmarks & compliance error $\le0.5\%$ & maximum error $3.08{\times}10^{-8}\%$ & \checkmark\\
\bottomrule
\end{tabularx}
\end{table}

M2 is intentionally a uniform-density gate; the heterogeneous E1 stress test
reported below is harder.  M6 is likewise a single-point probe, while the
broader spectral-proxy sweep appears later in Experiment~E5.
These validation checks are pre-benchmark sanity gates, not robustness
certificates; the heterogeneous failure-mode screen is reported separately in
Experiments~E1 and E10.
Table~\ref{tab:phase1} reports the measured validation margins rather than only
pass/fail status; Appendix~\ref{app:artifact} gives the corresponding artifact
trace.

\subsection{Outer Iteration Count vs.\ Mesh Size}\label{ssec:exp_e1}

Figure~\ref{fig:e1} reports the outer iteration count across the full
27-case sweep over mesh size, volume fraction $V_f\in\{0.2,0.5,0.8\}$ of a
heterogeneous binary-contrast density field (solid/void voxels placed by a
fixed pseudorandom seed, $\rho_{\mathrm{floor,test}}=10^{-2}$), and SIMP penalization
$p\in\{1.5,3,4.5\}$.
Validation check M2 uses FGMRES on the uniform-density probe as a conservative
gate, whereas E1 reports the empirical PCG comparison path on 27 harder
heterogeneous states of the same FP64 hierarchy.
The first statistic is pass rate, not the capped-inclusive mean.  Under the
200-iteration budget, 7/9, 4/9, and 1/9 cases converge at 64k, 216k, and 512k,
respectively.  The converged-only mean iteration counts are therefore about
112, 134, and 146; these numbers describe the successful subset only.
Across all cases, including capped failures, the size-wise means are 131, 170,
and 194 iterations.  Those capped-inclusive means are retained in the figure as
cap-aware stress-test summaries, not as converged performance metrics.
Iteration counts are therefore benchmark-dependent rather than strictly
$h$-independent.  The separate uniform-density Jacobi baseline in
Experiment~E2 still hits the 200-iteration cap at all three mesh sizes, but it
is kept visually separate because it is not the same heterogeneous 27-case
sweep.  At 512k, eight of the nine heterogeneous-field configurations do not
converge within the 200-iteration budget, so the 194-iteration mean should be
read as a failure-dominated stress-test outcome rather than as evidence of
stable mesh-independent convergence.

\begin{figure}[!tbp]
\centering
\includegraphics[width=0.72\linewidth]{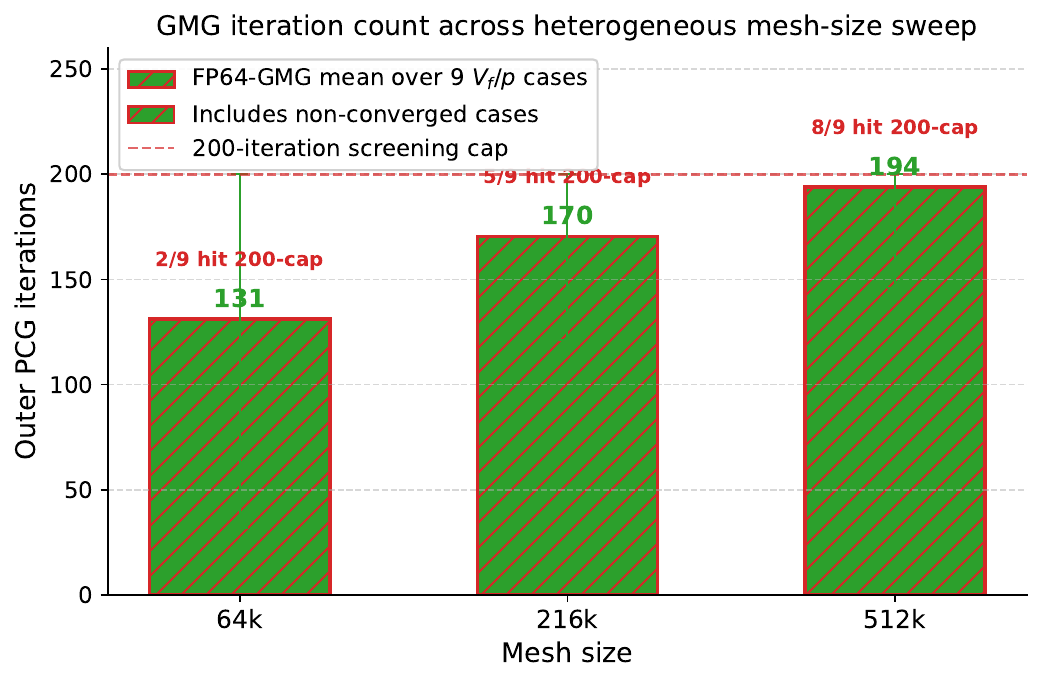}
\caption{E1: FP64-GMG outer iteration count vs.\ mesh size on the
  heterogeneous 27-case cantilever sweep (outer PCG).  Bars show the size-wise mean over
  the nine $(V_f,p)$ combinations with min--max error bars; hatched bars
  include cases that hit the 200-iteration screening cap, and the dashed line
  marks that cap.  Failure rates are 2/9 at 64k, 5/9 at 216k, and 8/9 at 512k.}
\label{fig:e1}
\end{figure}

\subsection{Per-Linear-Solve Wall Time}\label{ssec:exp_e2}

Figure~\ref{fig:e2a} shows wall-time scaling.
Figure~\ref{fig:e2b} reports the capped-baseline wall-time ratios, again as
mean$\pm$standard deviation over ten trials.
FP32-GMG is faster than the capped Jacobi-PCG path at all three sizes, with
mean wall-time ratios of
$1.62\times$, $1.75\times$, and $3.12\times$ at 64k, 216k, and 512k,
respectively.  These ratios are not speedups over successful Jacobi-PCG
solves: the Jacobi-PCG path hits the 200-iteration cap without convergence in
all ten timed trials at each size.  BF16-GMG remains slower than FP32-GMG at
every size.  It is not uniformly slower than the capped Jacobi path either:
at 216k it achieves a $1.06\times$ speedup and at 512k a $2.60\times$ speedup
over the capped baseline, while requiring more outer iterations than the FP32
hierarchy.  Figure~\ref{fig:e2c} shows the corresponding change in solver behavior:
the GMG variants reduce the residual by orders of magnitude in a few tens of
outer iterations, whereas Jacobi-PCG stagnates at the 200-iteration cap.
For these E2 solves, the Jacobi-PCG baseline and FP32-GMG use PCG, whereas
BF16-GMG uses restarted FGMRES under the solver policy used in the
implementation; the BF16 iteration counts reported below are mean FGMRES
outer iterations across trials.

\begin{figure}[!tbp]
\centering
\includegraphics[width=0.72\linewidth]{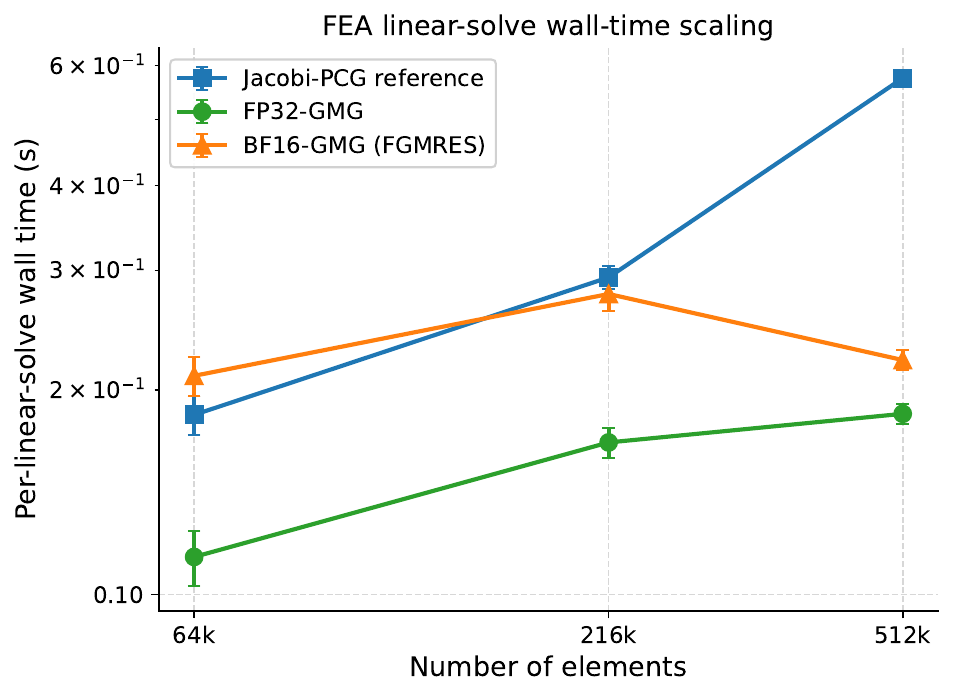}
\caption{Per-solve wall-time scaling for the 64k/216k/512k
  $\rho=0.5$, $p=3.0$ linear solves comparing the Jacobi-PCG baseline of
  \citet{yang2026fused}, FP32-GMG, and
  BF16-GMG (FGMRES outer solver).  Markers/lines report the mean over ten timed trials; error bars show one
  standard deviation.}
\label{fig:e2a}
\end{figure}

\begin{figure}[!tbp]
\centering
\includegraphics[width=0.72\linewidth]{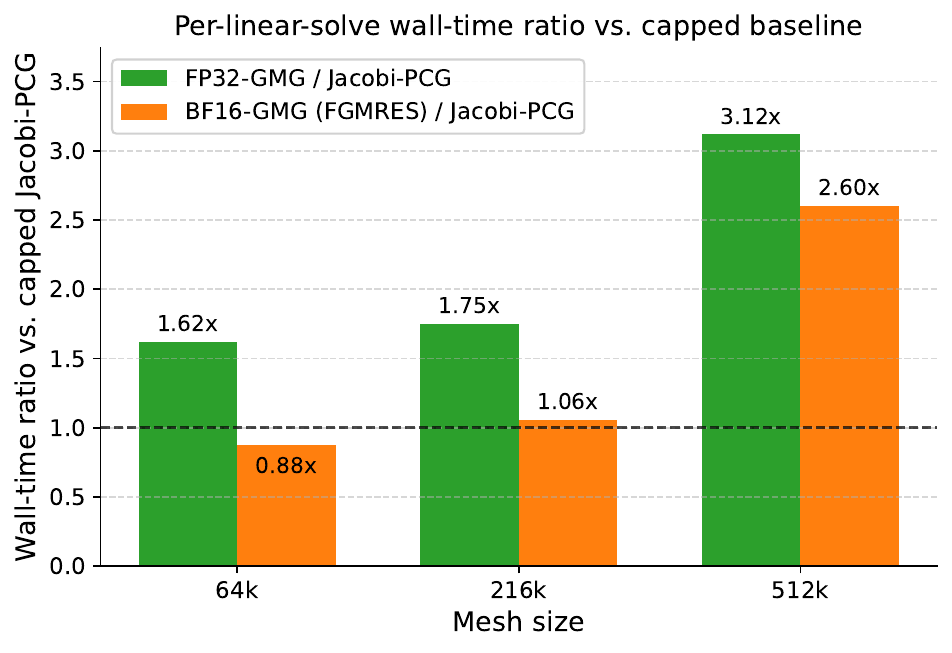}
\caption{Per-solve wall-time ratio relative to the capped Jacobi-PCG path on
  the same $\rho=0.5$, $p=3.0$ linear solves.  The Jacobi-PCG path reaches the
  200-iteration cap without convergence in all timed trials; bars compare
  FP32-GMG and BF16-GMG (FGMRES outer solver);
  error bars reflect the corresponding timed-trial variation.}
\label{fig:e2b}
\end{figure}

\begin{figure}[!tbp]
\centering
\includegraphics[width=\linewidth]{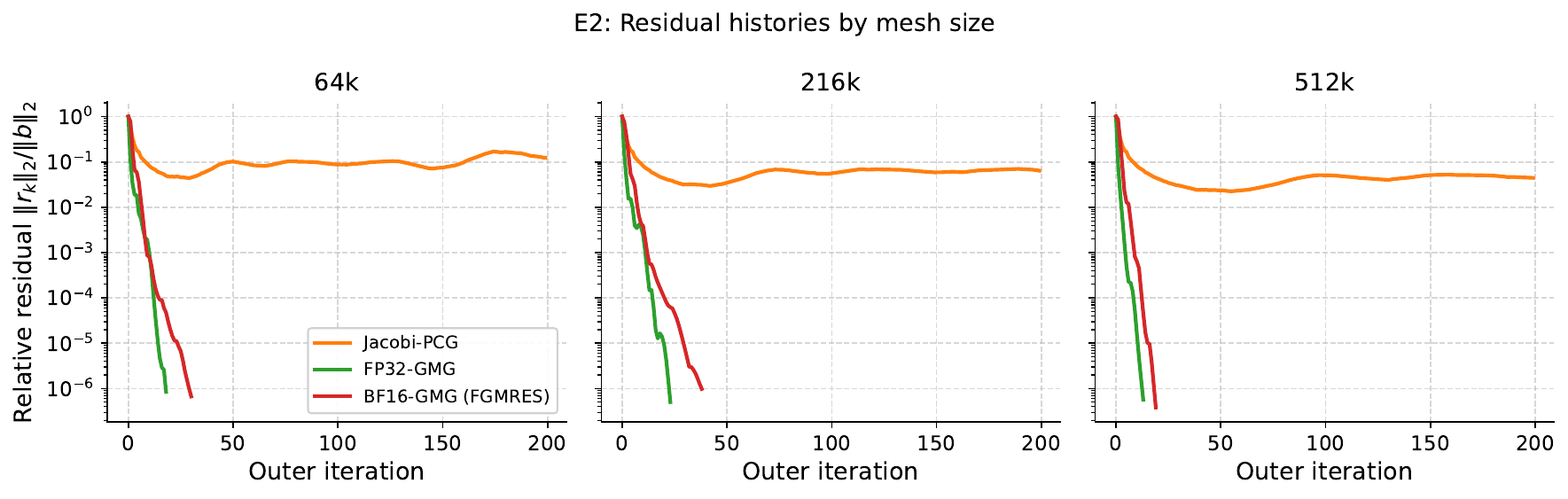}
\caption{Residual histories at 64k, 216k, and 512k.
  GMG changes the outer-solver regime rather than merely shaving a constant
  factor from the same stagnating Krylov path.  The BF16-GMG traces use
  restarted FGMRES; the Jacobi-PCG baseline and FP32-GMG traces use PCG.}
\label{fig:e2c}
\end{figure}

\begin{table}[!tbp]
\centering
\caption{E2: Per-linear-solve wall time, iterations, and capped-baseline
  wall-time ratio
  ($\rho=0.5$, $p=3.0$).  The BF16-GMG column uses restarted FGMRES; the
  Jacobi and FP32-GMG columns use PCG.  Jacobi-PCG reaches the 200-iteration
  cap without convergence in all ten timed trials at every size, so the
  reported ratios compare against a capped non-converged reference path.
  BF16-GMG uses FGMRES and shows small restart-level iteration-count variation
  at 64k and 216k; at 512k, all ten BF16 trials converge in exactly
  19 iterations, so the zero iteration-count standard deviation is not trial
  filtering.}
\label{tab:e2}
\footnotesize
\setlength{\tabcolsep}{2pt}
\begin{tabular}{lrrrrrrrr}
\toprule
Size & Jacobi (s) & Jacobi it. (cap) & FP32 (s) & FP32 it. & BF16 (s) & BF16 it. & FP32/Jac. & BF16/Jac.\\
\midrule
64k  & $0.184\pm0.012$ & 200 & $0.114\pm0.010$ & 18 & $0.210\pm0.014$ & $29.3\pm0.5$ & $1.62\times$ & $0.88\times$\\
216k & $0.293\pm0.012$ & 200 & $0.167\pm0.009$ & 23 & $0.277\pm0.015$ & $38.5\pm1.4$ & $1.75\times$ & $1.06\times$\\
512k & $0.575\pm0.009$ & 200 & $0.184\pm0.006$ & 13 & $0.221\pm0.008$ & $19.0\pm0.0$ & $3.12\times$ & $2.60\times$\\
\bottomrule
\end{tabular}
\end{table}

\paragraph{Auxiliary OC schedule diagnostic.}
The auxiliary 30-step fixed-penalty OC schedule is not used as main acceleration
evidence because the design trajectories diverge and the baseline repeatedly
hits the iteration cap.  For transparency, the same-schedule timing and
trajectory plots are retained in Appendix~\ref{app:e3_aux}.

\subsection{Fine-Operator Throughput Context}\label{ssec:exp_e4}

\begin{figure}[!tbp]
\centering
\includegraphics[width=0.52\linewidth]{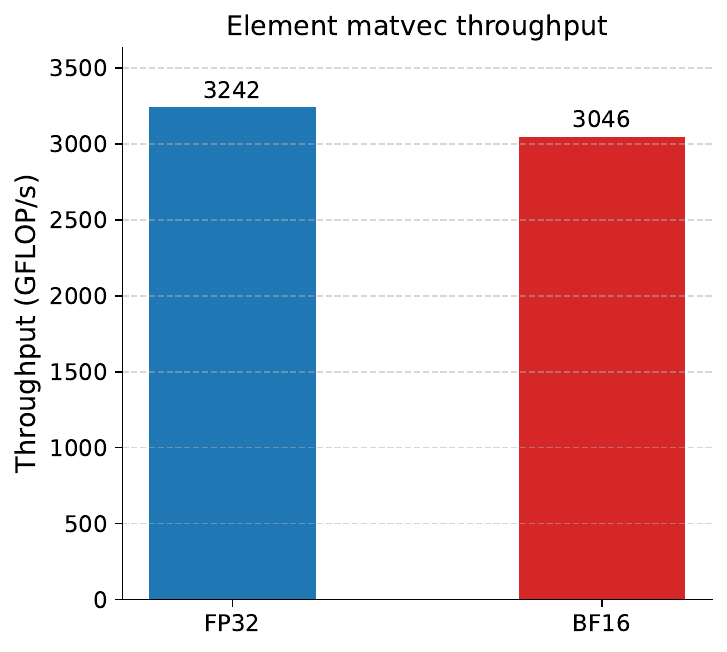}
\caption{Fine-level operator proxy throughput on RTX~4090 (200-rep benchmark).
  BF16 WMMA tensor-core path vs.\ FP32 scalar CUDA cores.
  This proxy is included only to interpret the solver-level mixed-precision
  results; it is not a separate fused-operator benchmark claim of this paper.}
\label{fig:e4}
\end{figure}

The BF16 WMMA fused kernel achieves 3\,046\,GFLOP/s on the RTX~4090
(200-repetition benchmark on the representative 216k cantilever case with
uniform element modulus $E_e=1$ and an all-ones free-DOF input vector,
Figure~\ref{fig:e4}),
compared to 3\,242\,GFLOP/s for the equivalent FP32 fused kernel.
These proxy timings are reported only to interpret the solver-level
mixed-precision behavior of the GMG hierarchy.
On this proxy benchmark the two paths are effectively tied, so BF16 is not a
practical speed path in the present Q1-hex implementation.
Figure~\ref{fig:e4b} places that proxy measurement in a simple illustrative
roofline model whose reference lines use vendor-spec RTX~4090 guide lines
(FP32 compute, BF16 tensor-core compute, and DRAM bandwidth) rather than
measured saturation ceilings \citep{nvidia2023ada}.  The fine matrix-free
kernel sits well below the BF16 compute ceiling and close to the bandwidth
roof implied by its low arithmetic intensity; level-1 SpMV is even more
strongly bandwidth-limited, while the coarsest dense solve is too small to
matter for end-to-end wall time.

\begin{figure}[!tbp]
\centering
\includegraphics[width=0.72\linewidth]{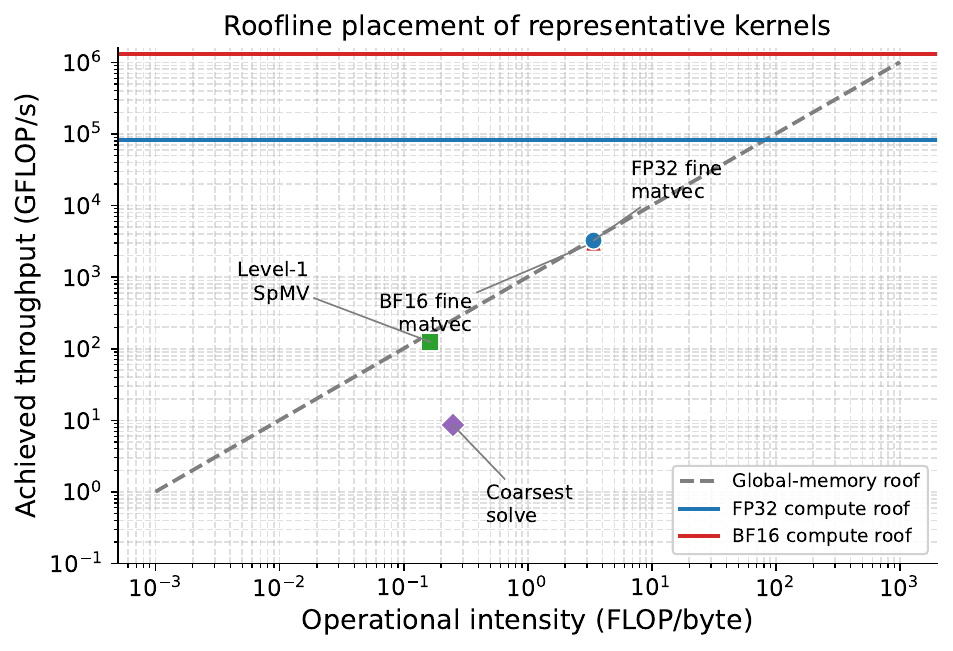}
\caption{Illustrative roofline placement of representative kernels on the
  RTX~4090 using vendor-spec reference guide lines \citep{nvidia2023ada}.
  The fine matrix-free smoother is bandwidth-limited rather than tensor-core
  compute-limited, which is consistent with the observed lack of a large
  end-to-end BF16 throughput gain despite using WMMA.}
\label{fig:e4b}
\end{figure}

\subsection[Empirical BF16 spectral-proxy map]{Empirical BF16 spectral-proxy map:
  $\varepsilon_{\mathrm{BF16}}\!\cdot\!\kappa_{\mathrm{eff}}$}
\label{ssec:exp_e5}

Figure~\ref{fig:e5} plots
$\varepsilon_{\mathrm{BF16}}\cdot\kappa_{\mathrm{eff}}$ as a function of
volume fraction $V_f\in\{0.2,0.5,0.8\}$ of a heterogeneous
binary-contrast density field, SIMP penalization
$p\in\{1.5,3,4.5\}$, mesh size (64k, 216k), and the synthetic test floor
$\rho_{\mathrm{floor,test}}=10^{-2}$, estimated via a Lanczos-based spectral probe of the
applied left-preconditioned operator on the FP64 hierarchy used for the
heterogeneous probe.  Each grid point is one fixed-seed Bernoulli
realization (seed 42), so this figure should be read as a screening probe
rather than as a frequency estimate across realizations.  We use this scalar as
a proxy diagnostic for the mixed-precision schedule, not as a certified BF16
convergence or stability test.

The direct BF16 validation runs every state in this spectral-proxy grid with
FGMRES, a 500-iteration cap, and FP64 true-residual and compliance checks
against the same frozen state.  Table~\ref{tab:e5_direct} shows that the proxy is not a
convergence classifier.  Seventeen of the eighteen tested combinations satisfy
$\varepsilon_{\mathrm{BF16}}\cdot\kappa_{\mathrm{eff}} < 1$, but only six of
those screened-in cases converge in BF16; eleven screened-in cases still reach
the BF16 iteration cap.  The single screened-out case is the near-void 216k state
($V_f=0.2$, $p=1.5$) with
$\varepsilon_{\mathrm{BF16}}\cdot\kappa_{\mathrm{eff}}\approx 12.59$, and it
  converges in 154 BF16 iterations.  Overall, BF16 and FP64 both converge on
  7/18 tested states.  The largest BF16--FP64 compliance difference is only
  0.0063\%, but this parity should not be read as convergence when the true
  residual remains above tolerance.  For the eleven screened-in capped states,
  FP64 and BF16 final true residuals both remain above tolerance and lie in the
  same broad residual range ($5.5{\times}10^{-6}$--$3.7{\times}10^{-2}$),
  supporting the interpretation that these rows expose hierarchy/convergence
  stagnation rather than large BF16--FP64 compliance drift.

\begin{figure}[!tbp]
\centering
\includegraphics[width=0.72\linewidth]{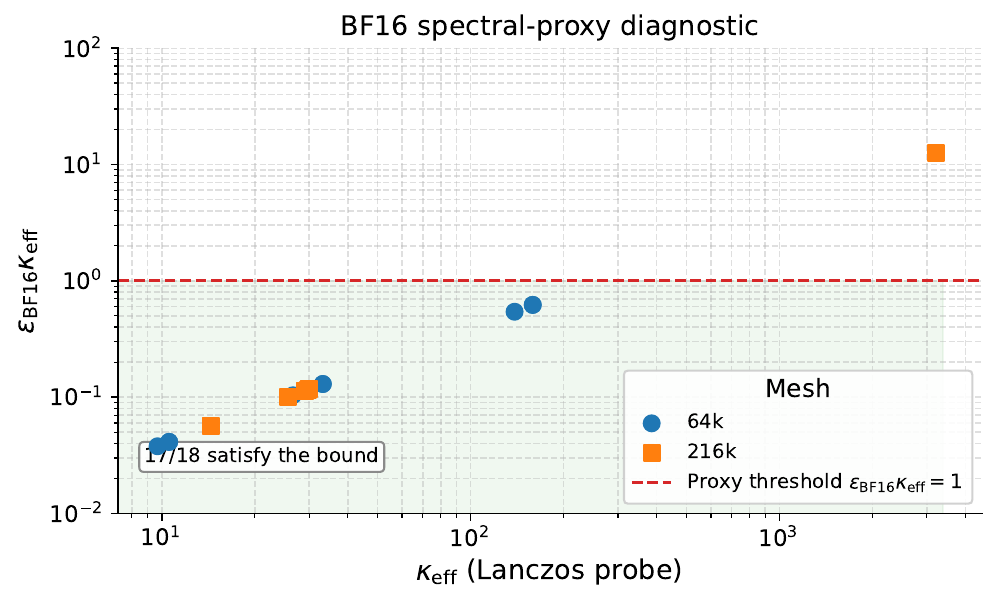}
\caption{$\varepsilon_{\mathrm{BF16}}\!\cdot\!\kappa_{\mathrm{eff}}$ at the
  fine-system linear operator under the full hierarchy across all 18
  $(\text{size},V_f,p)$ combinations from the
  fixed-seed heterogeneous screening probe (seed 42), estimated via Lanczos
  on the frozen FP64 hierarchy as a spectral proxy.  The
  heterogeneous sweep uses the synthetic test floor
  $\rho_{\mathrm{floor,test}}=10^{-2}$.
  Seventeen points lie below the dashed proxy threshold; one 216k
  outlier lies above it and is shown without filtering.  Both axes are
  logarithmic.}
\label{fig:e5}
\end{figure}

\FloatBarrier

\begin{table}[!htbp]
\centering
\caption{Direct BF16 validation on the 18 fixed-seed heterogeneous spectral-proxy states.
  Convergence uses the FP64 true residual tolerance $10^{-6}$ with a
  500-iteration FGMRES cap.  The proxy screen is informative but not a
  pass/fail classifier for the implemented mixed-precision hierarchy.}
\label{tab:e5_direct}
\small
\begin{tabular}{lrrrr>{\RaggedRight\arraybackslash}p{0.31\linewidth}}
\toprule
Screen result & Cases & BF16 conv. & BF16 capped & FP64 conv. & Interpretation\\
\midrule
$\eps_{\mathrm{BF16}}\kappa_{\mathrm{eff}}<1$ & 17 & 6 & 11 & 6 & Eleven screened-in cases still cap; paired true-residual stagnation, not large BF16--FP64 compliance drift\\
$\eps_{\mathrm{BF16}}\kappa_{\mathrm{eff}}>1$ & 1 & 1 & 0 & 1 & One screened-out case at 216k, $V_f=0.2$, $p=1.5$; BF16 converges in 154 iterations\\
\midrule
Total & 18 & 7 & 11 & 7 & Max BF16--FP64 compliance difference: 0.0063\%\\
\bottomrule
\end{tabular}
\end{table}

\subsection{Ablations}\label{ssec:exp_e6}

Each ablation is measured in a separate experimental run.
Small wall-time differences between nominally identical configurations across
subfigures should therefore be read as ordinary inter-run GPU variation rather
than as a single factorial timing table.

\paragraph{(a) FP64 vs.\ FP32 precision at finest level.}
At 216k ($\rho=0.5$, $p=3.0$), FP64-GMG requires 23 PCG iterations in
$0.193\pm0.004$\,s, whereas FP32-GMG requires the same 23 iterations in
$0.168\pm0.014$\,s.  The direction is unchanged: FP32 smoothing carries no
iteration penalty on this moderate-condition-number case and is modestly faster
in wall time.

\paragraph{(b) Precision-depth sweep.}
Using additional FP32 coarse levels preserves convergence---PCG residual
$<10^{-6}$ in all four configurations---while the solve time varies only
modestly, from $0.161\pm0.008$\,s to $0.177\pm0.016$\,s.  The two-level point
has the smallest mean wall time, but the differences are within trial
variability, so the overall sensitivity to depth is low.

\paragraph{(c) V-cycle vs.\ W-cycle.}
W-cycle reduces outer PCG iterations from 23 to 17 (26\% fewer) at
$0.258\pm0.009$\,s total, versus V-cycle at $0.209\pm0.013$\,s for 23
iterations.  Net wall time therefore still favors V-cycle; W-cycle is
not recommended as the default.

\paragraph{(d) Chebyshev vs.\ Jacobi smoother.}
At 216k, Jacobi smoothing achieves 16 PCG iterations in
\textbf{$0.142\pm0.006$\,s}, outperforming degree-2 Chebyshev at 23
iterations in $0.198\pm0.007$\,s.
This counter-intuitive result reflects the low-contrast uniform-density
test case: on this moderate-condition-number benchmark, Jacobi's lighter
per-step cost dominates.  The data therefore support Jacobi as
the faster choice on the reported uniform 216k case.  The manuscript does not
claim that the current Chebyshev default is universally faster; it is retained
as a provisional implementation default rather than as a speed optimum.

The added high-contrast smoother screen in Table~\ref{tab:e6_high_contrast}
indicates that this default policy is not settled.  At 64k with $V_f=0.5$,
Chebyshev degree~2 has the shortest solve phase, while Jacobi degree~4 ties
its iteration count with a slightly longer solve phase.  These are solve-phase
comparisons, not setup-plus-solve totals.  At 216k with $V_f=0.5$, Jacobi
degree~2 is clearly faster in the solve phase than Chebyshev degree~2.  For the
harder $V_f=0.2$ states all four tested smoothers hit the 500-iteration cap;
degree-4 Chebyshev gives the smallest final true residual in both capped cases,
but it does not rescue convergence.  The evidence therefore supports reporting
Chebyshev as a robustness-oriented option, not as a universally superior
default.

\begin{table}[H]
\centering
\caption{High-contrast smoother-only screen using FP64-GMG with FGMRES,
  tolerance $10^{-6}$, and a 500-iteration cap.  Entries show
  iterations/solve-phase time for converged cases and final FP64 true residual
  for capped cases.  Table headings use ``Cheb.'' for Chebyshev and ``Jac.''
  for Jacobi; the number is the smoother degree.  Setup times are excluded from
  this diagnostic table.  All four high-contrast states use the fixed Bernoulli
  seed 42 and synthetic density floor $\rho_{\mathrm{floor,test}}=10^{-2}$.}
\label{tab:e6_high_contrast}
\footnotesize
\begin{tabularx}{\linewidth}{llcccc}
\toprule
Mesh & State & Cheb. 2 & Cheb. 4 & Jac. 2 & Jac. 4\\
\midrule
64k & $V_f=0.5,p=3$ & 29 / 0.500s & 49 / 1.096s & 35 / 0.568s & 29 / 0.531s\\
64k & $V_f=0.2,p=3$ & cap, $3.3{\times}10^{-2}$ & cap, $6.6{\times}10^{-3}$ & cap, $2.3{\times}10^{-2}$ & cap, $1.6{\times}10^{-2}$\\
216k & $V_f=0.5,p=3$ & 329 / 6.519s & cap, $2.1{\times}10^{-4}$ & 236 / 2.840s & cap, $5.2{\times}10^{-4}$\\
216k & $V_f=0.2,p=3$ & cap, $3.9{\times}10^{-4}$ & cap, $3.4{\times}10^{-5}$ & cap, $2.1{\times}10^{-4}$ & cap, $9.4{\times}10^{-5}$\\
\bottomrule
\end{tabularx}
\end{table}
\noindent
In this diagnostic, Chebyshev degree~2 gives the shortest solve on the
converged 64k mid-volume case, Jacobi degree~2 gives the shortest solve on the
converged 216k mid-volume case, and Chebyshev degree~4 gives the lowest final
residual on both capped near-void cases.

\begin{figure}[!htbp]
\centering
\includegraphics[width=\linewidth]{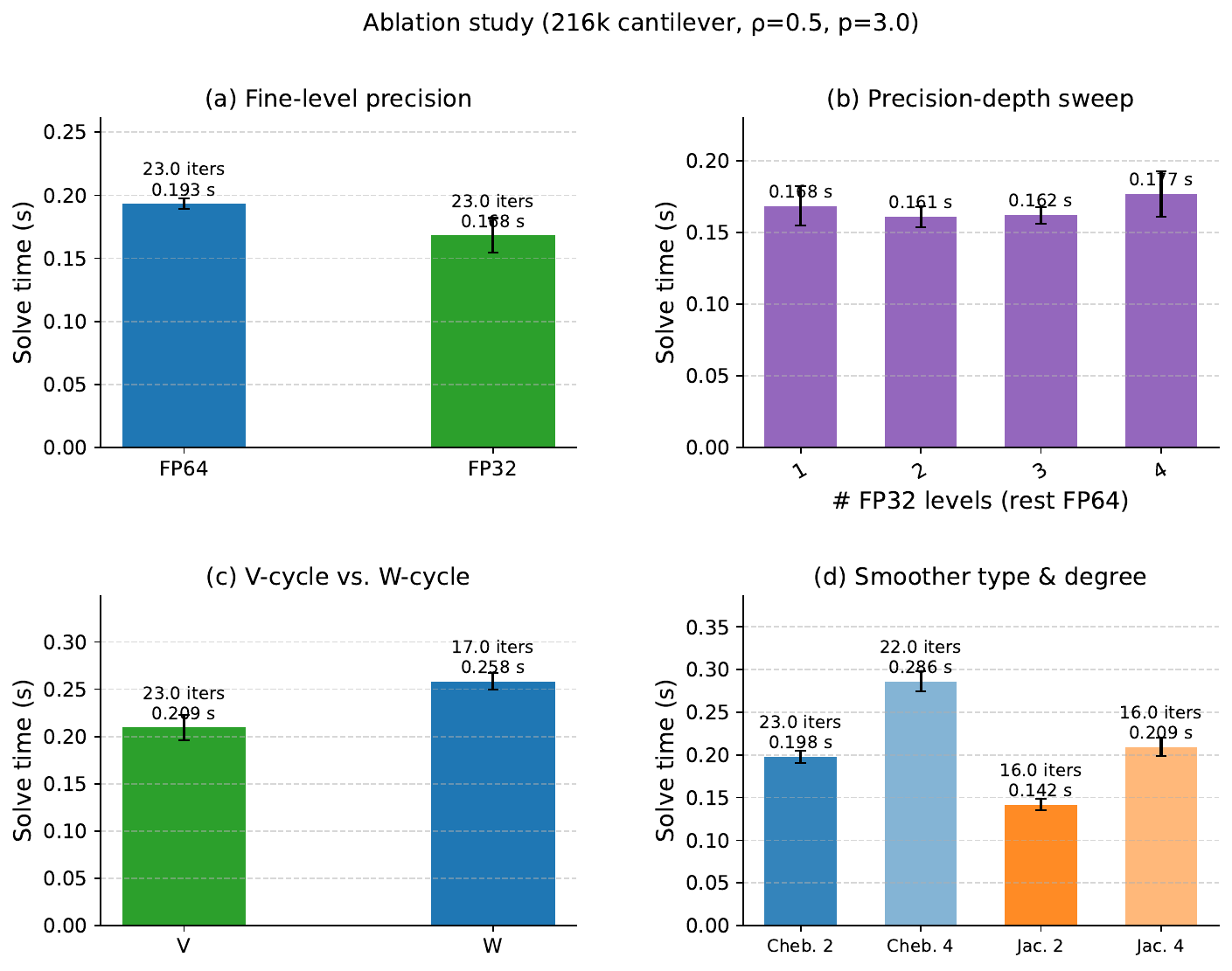}
\caption{Ablation study at 216k cantilever ($\rho=0.5$, $p=3.0$).
  (a)~Fine-level precision: FP32 matches FP64 in iterations and is modestly
  faster in wall time.
  (b)~FP32 depth sweep: additional FP32 levels change solve time only
  modestly.
  (c)~V-cycle beats W-cycle in wall time despite fewer W-cycle iterations.
  (d)~Jacobi degree~2 is fastest at uniform density; Table~\ref{tab:e6_high_contrast} reports
  the added high-contrast smoother-only screen.}
\label{fig:e6}
\end{figure}

\begin{figure}[!htbp]
\centering
\includegraphics[width=\linewidth]{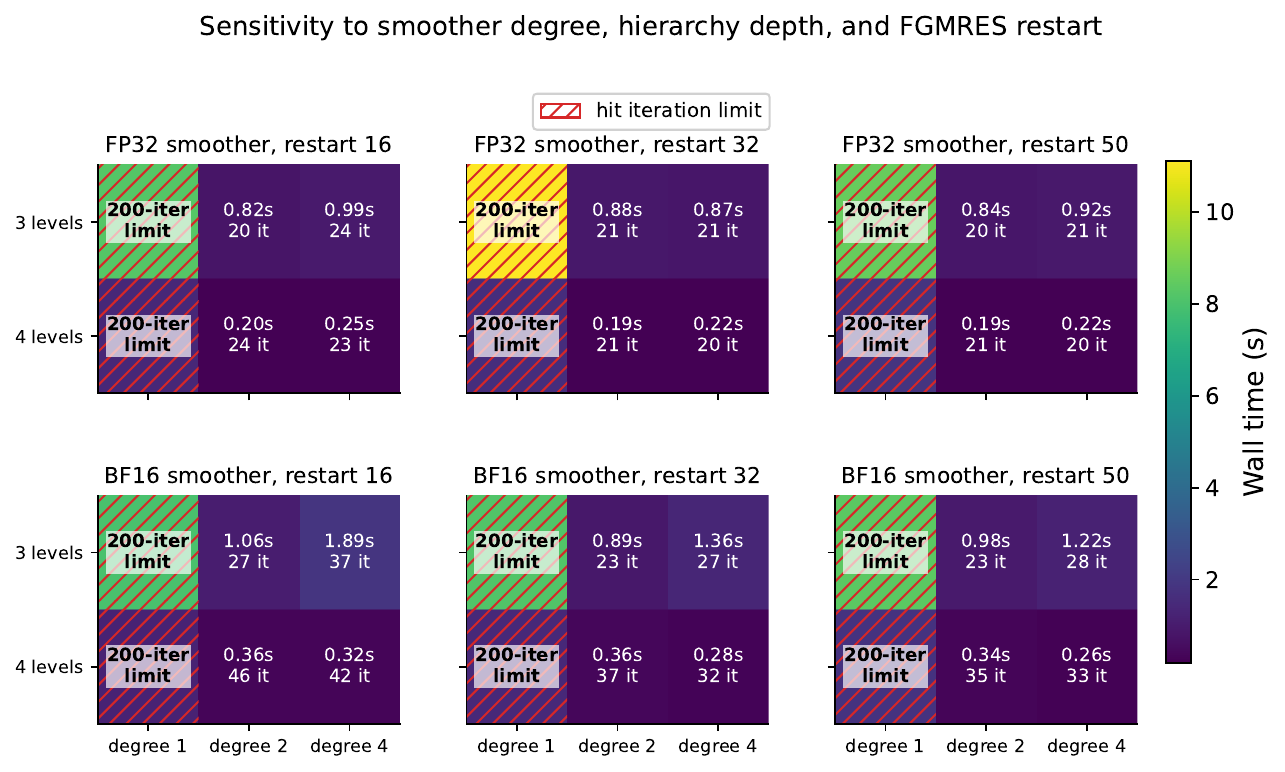}
\caption{Sensitivity of convergence and wall time to smoother degree, hierarchy depth, and FGMRES restart on the representative 216k case.
  Top row: FP32. Bottom row: BF16.  Each cell reports wall time with outer
  iterations annotated; hatched cells mark runs that hit the iteration limit.
  Degree-1 runs fail at both 3 and 4 levels for both FP32 and BF16.
  Degree-2 smoothing with the full 4-level hierarchy is near-best for FP32;
  in the BF16 screen, degree-4 with the full 4-level hierarchy is fastest.}
\label{fig:e6b}
\end{figure}

\subsection{Large-Scale Scaling to 1M Elements}\label{ssec:exp_e7}

Table~\ref{tab:e7} reports wall time and setup-time incremental VRAM delta for
single linear solves at 125k--1M elements.
Setup time ($t_{\mathrm{setup}}$) includes Galerkin coarse-operator
assembly and Chebyshev spectral-radius estimation; solve time
($t_{\mathrm{solve}}$) is the FGMRES wall clock.  The reported VRAM quantity is
the hierarchy-allocation delta measured immediately around setup
($\mathrm{VRAM}_{\mathrm{after\,setup}}-\mathrm{VRAM}_{\mathrm{before\,setup}}$),
not a peak setup-plus-solve memory trace.
The 1M-element problem (3.09M free DOFs) is solved in
$1.50\pm0.58$\,s in 18 FGMRES iterations, with a setup-time
hierarchy-allocation delta of 8.66\,GiB---about 36\% of
the RTX~4090's VRAM budget.  Larger meshes are outside the reported scaling
range of the present study.
The 1M solve-time coefficient of variation is therefore large; we report the
untrimmed timing as a noise-sensitive feasibility point rather than as a stable
throughput estimate or a strong-scaling curve.
This 1M point is a uniform-modulus solve with $E_e=0.5$; heterogeneous-field
behavior at this scale is not measured directly here and is bracketed only by
the 512k E1 stress tests.
The 512k solve is faster than the 125k solve because the outer iteration count
drops from 24 to 13 in this uniform-modulus sequence; the E7 rows should be
read as independent fixed-size timing points rather than a monotone
per-element scaling law.  The relatively large 125k setup standard deviation
and 1M setup and solve standard deviations are retained without outlier
trimming and include end-to-end hierarchy construction, GPU allocation/teardown,
and host scheduling noise.
For this reported 1M solve we use FGMRES because the active Chebyshev
preconditioner is treated as non-symmetric in the implementation; earlier symmetric
comparison paths use PCG where appropriate.

\begin{figure}[!tbp]
\centering
\includegraphics[width=0.75\linewidth]{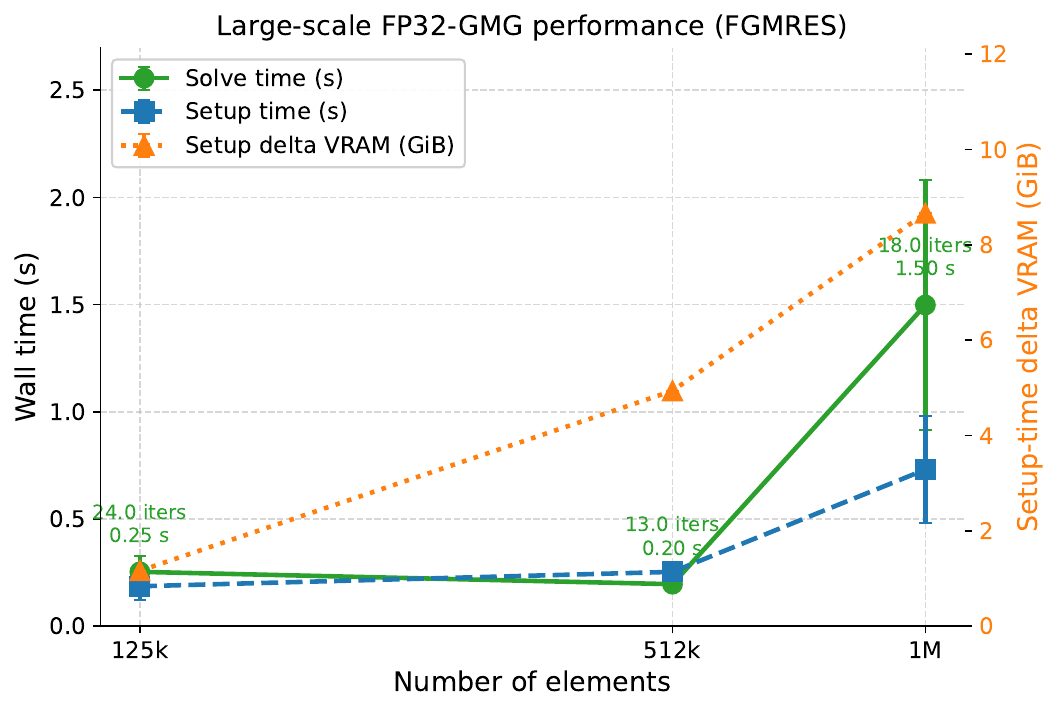}
\caption{Large-scale FP32-GMG performance.
  Left axis: setup time (dashed) and solve time (solid).
  Right axis: setup-time incremental VRAM delta measured during hierarchy
  construction.
  Iteration counts annotated on solve-time markers.}
\label{fig:e7}
\end{figure}

\begin{table}[H]
\centering
\caption{Scaling to large meshes (FP32-GMG, FGMRES outer solver,
  uniform modulus $E_e=0.5$).  Reported VRAM is the setup-time
  hierarchy-allocation delta.}
\label{tab:e7}
\begin{tabular}{lrrrrrr}
\toprule
Label & $n_{\mathrm{elem}}$ & $n_{\mathrm{free}}$ & $t_{\mathrm{setup}}$ (s) & $t_{\mathrm{solve}}$ (s) & Iters & Setup $\Delta$VRAM (MB)\\
\midrule
125k & 125,000   &   397,800 & $0.186\pm0.063$ & $0.253\pm0.073$ & 24 & 1190\\
512k & 512,000   & 1,594,080 & $0.253\pm0.004$ & $0.196\pm0.005$ & 13 & 5056\\
1M   & 1,000,000 & 3,090,600 & $0.730\pm0.251$ & $1.498\pm0.585$ & 18 & 8872\\
\bottomrule
\end{tabular}
\end{table}

\subsection{Narrow Post-Assembly PyAMG Reference}\label{ssec:exp_e8}

\begin{figure}[!tbp]
\centering
\includegraphics[width=0.60\linewidth]{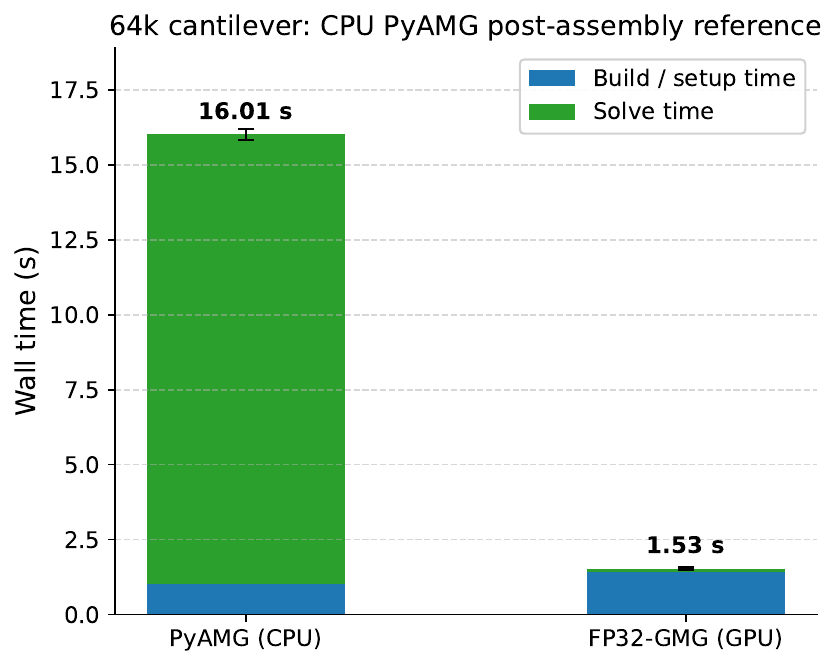}
\caption{64k cantilever build+solve after CPU CSR assembly: narrow
  CPU-after-assembly reference comparing PyAMG (CPU) with FP32-GMG (GPU).
  Stacked bars show cold-start hierarchy/setup build (blue) and solve
  (green) separately; the PyAMG timings begin after the free-free CSR matrix
  has already been assembled on the CPU.}
\label{fig:e8}
\end{figure}

\noindent\emph{Scope of this comparison.}
Experiment~E8 deliberately uses CPU PyAMG as a narrow assembled-AMG reference.
The timed PyAMG path begins only after the free-free CSR stiffness matrix has
already been assembled on the CPU, so E8 should be read as
\emph{assembled-operator build+solve after assembly}, not as a full end-to-end
assembled workflow.  That scope keeps the reported comparison on the hierarchy
construction and solve phases rather than conflating them with a separate CSR
assembly benchmark.  A follow-up comparison against an externally supplied
assembled GPU-AMG operator remains the natural next step discussed in
\S\ref{sec:discussion}.

PyAMG smoothed-aggregation AMG on the CPU requires
\textbf{$1.00\pm0.16$\,s} to build the hierarchy and
\textbf{$15.01\pm0.12$\,s} to solve a 64k cantilever linear system at uniform
$\rho=0.5$, $p=3.0$.
Running the identical system on the GPU with our FP32-GMG hierarchy
yields a measured cold-start setup and solve time of $1.43\pm0.03$\,s and
$0.104\pm0.011$\,s, respectively.  On this single narrow 64k CPU-vs-GPU
build+solve-after-assembly reference, the measured post-assembly build+solve
ratio is about $10\times$; we do not treat E8 as a general ranking of
assembled-AMG backends because no assembled GPU-AMG comparator is included.
The accompanying figure stacks the two solvers' \emph{build} and
\emph{solve} components side by side, so the comparison is based on
measured wall times rather than a synthesised estimate.  On the GMG side, the
reported setup bar is a cold-start hierarchy-initialization timing that
reconstructs the FP32 GMG object on each trial from the precomputed free-set
and element-connectivity tables.
The GPU-GMG path avoids a sparse-CSR assembly phase entirely; that conceptual
assembly step still exists for PyAMG, but it is \emph{not} included in the
reported E8 timings.

\subsection{Energy Efficiency}\label{ssec:exp_e9}

Energy is measured via NVML (\texttt{pynvml}) power sampling on the
RTX~4090 during the representative 216k cantilever solve
($120\times60\times30$, uniform $\rho=0.5$, $p=3.0$).  Total joules are
estimated by integrating timestamped NVML power samples over the measured
solve interval.  The current implementation samples NVML power every 50\,ms on a
background thread, inserts boundary samples at solve start and stop, and does
not subtract an explicit idle-power baseline, so the reported joules should be
read as an approximate solve-window energy proxy.  For each solver, we report
solve duration, average sampled power, and integrated solve-window energy.  On
the representative solve, FP64-GMG consumes
$39.08\pm6.20$\,J at $164.7\pm25.5$\,W average and FP32-GMG consumes
$28.37\pm8.23$\,J at $145.2\pm41.2$\,W average; we do not report a bare-metal
efficiency ratio because the CPU
PyAMG baseline is not directly comparable in thermal envelope.
These solve-window values include idle draw because no idle-power baseline is
subtracted.  The full benchmark-suite energy and setup overhead were not
measured.

\subsection{Robustness Across Density Configurations}\label{ssec:exp_e10}

\begin{figure}[!tbp]
\centering
\includegraphics[width=\linewidth]{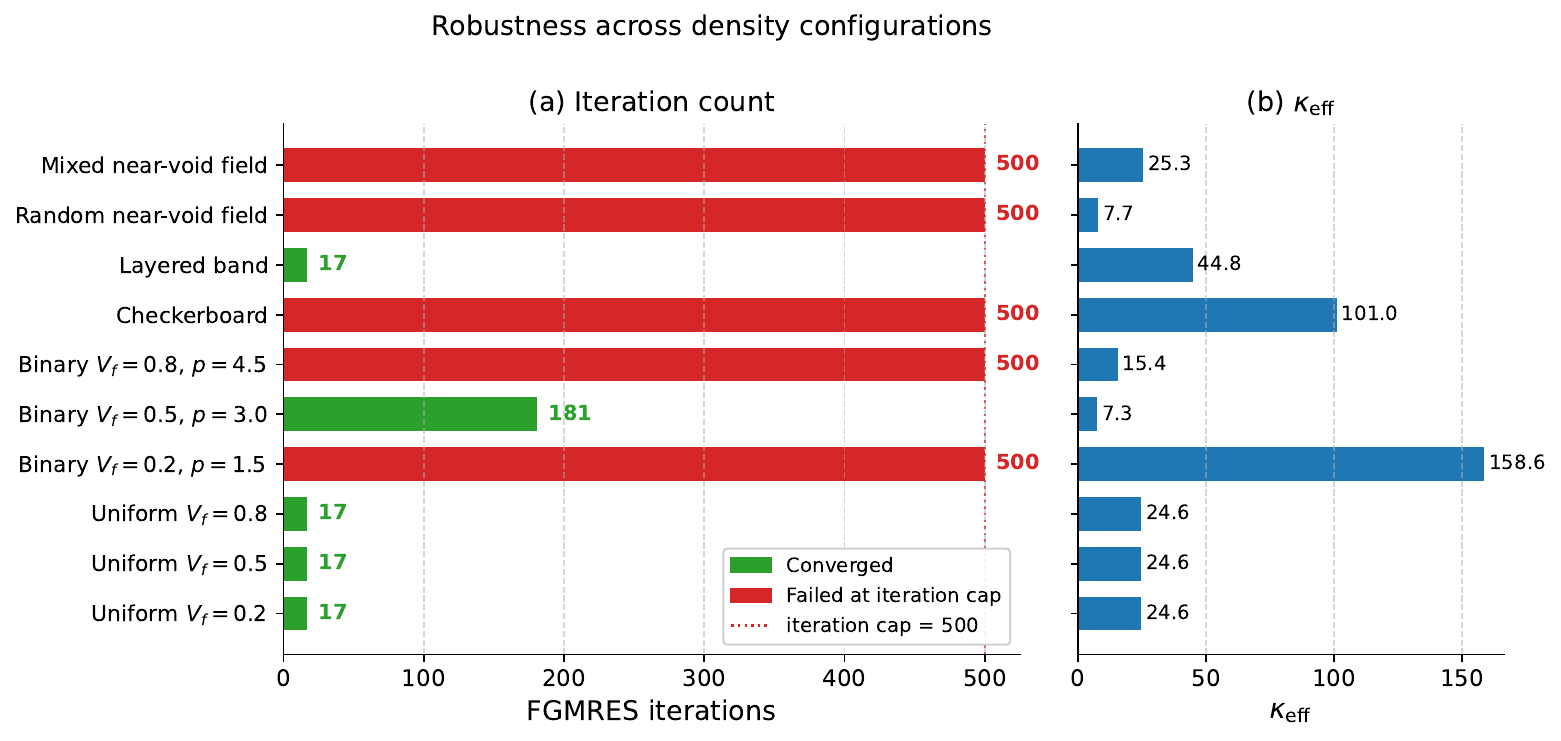}
\caption{Robustness across ten representative fixed-seed or deterministic
  stress configurations on the 64k cantilever with the FP64 hierarchy.
  (a) FGMRES iterations (green = converged, red = failed at the iteration cap).
  (b) Effective condition number $\kappa_{\mathrm{eff}}$ for the fine-system
  linear operator under the full hierarchy.
  The failures are concentrated in highly heterogeneous or near-singular
  fields rather than in the moderate-contrast uniform baselines.}
\label{fig:e10}
\end{figure}

\begin{figure}[!tbp]
\centering
\includegraphics[width=0.88\linewidth]{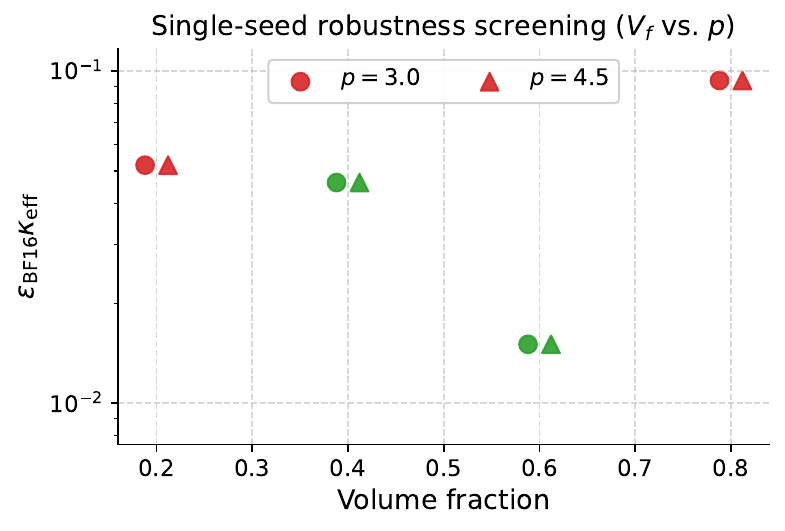}
\caption{Single-seed robustness screening map over volume fraction and
  penalization on the 64k cantilever with the FP64 hierarchy.  Circles denote
  $p=3.0$, triangles denote $p=4.5$, green markers converged, and red markers
  hit the iteration cap.  This sweep uses the fixed Bernoulli seed 23 together
  with restart 50 and a 300-iteration cap as a bounded-cost screening pass,
  whereas Table~\ref{tab:e10} uses a 500-iteration cap.  For the tested low-floor labels
  $\rho_{\mathrm{floor,test}}\in\{10^{-12},10^{-9},10^{-6}\}$, the SIMP map with
  $E_{\min}=10^{-9}$ collapses them to essentially the same effective modulus
  floor, so the figure shows the corresponding collapsed representative points
  rather than three visually redundant panels.  The vertical axis uses
  $\eps_{\mathrm{BF16}}\kappa_{\mathrm{eff}}$ to connect the screening map to
  the mixed-precision spectral-proxy discussion.  The visual markers therefore
  aggregate the three tested construction floors, not three independent
  frequency estimates.}
\label{fig:e10b}
\end{figure}

Table~\ref{tab:e10} summarises robustness outcomes on the
64k cantilever across ten representative stress configurations, ranging
from benign uniform states to binary-contrast, layered, and near-singular
pathologies.  The stochastic cases use the fixed seeds listed in
Table~\ref{tab:repro_defaults}.  FGMRES is
used throughout with restart 50 and a 500-iteration cap; results are
taken directly from the robustness summary record.
Here $\rho_{\min}$ and $\rho_{\mathrm{floor,test}}$ denote density-field
construction floors for the synthetic stress cases, while $E_{\min}=10^{-9}$
is the SIMP stiffness floor used in the modulus map.

\begin{table}[!tbp]
\centering
\small
\caption{Robustness summary (FGMRES, restart 50, 500-iteration cap).
  ``Pass'' = converged to relative residual $<10^{-6}$;
  ``Fail'' = iteration cap exceeded.}
\label{tab:e10}
\begin{tabular}{lllr}
\toprule
Configuration & Result & Iters & $\kappa_{\mathrm{eff}}$\\
\midrule
Uniform $V_f=0.2$                                    & Pass & 17  & 24.6\\
Uniform $V_f=0.5$                                    & Pass & 17  & 24.6\\
Uniform $V_f=0.8$                                    & Pass & 17  & 24.6\\
Binary $V_f=0.2$, $p=1.5$                            & Fail & 500 & 158.6\\
Binary $V_f=0.5$, $p=3.0$                            & Pass & 181 & 7.3\\
Binary $V_f=0.8$, $p=4.5$                            & Fail & 500 & 15.4\\
Checkerboard density                                 & Fail & 500 & 101.0\\
Layered half-solid / half-void band                  & Pass & 17  & 44.8\\
Random field with $\rho_{\mathrm{floor,test}}=10^{-12}$ & Fail & 500 & 7.7\\
Mixed near-void field                                & Fail & 500 & 25.3\\
\bottomrule
\end{tabular}
\end{table}

\noindent
In these fixed-seed screening cases, a clearer pattern emerges.  Moderate
uniform states converge in 17 iterations.  Binary
high-contrast fields remain solvable at the mid-volume point
$V_f=0.5$, $p=3.0$, but the lower- and higher-volume binary cases already hit
the 500-iteration cap.  Checkerboard,
very-low-floor, and mixed near-void states still fail, suggesting that the
failures are consistent with near-singular local structure and are not
explained by BF16 precision alone.
Because the tested $\rho_{\mathrm{floor,test}}$ labels collapse to essentially the same
effective modulus floor under the present SIMP map, Figure~\ref{fig:e10b}
should be read as a collapsed volume-fraction/penalization screening map
rather than as an informative floor sweep.  It correlates only
loosely with $\eps_{\mathrm{BF16}}\kappa_{\mathrm{eff}}$: the Lanczos-based
$\kappa_{\mathrm{eff}}$ probe remains a useful spectral diagnostic, but it
does not by itself detect the near-null modes
that dominate these failure cases.  Because Figure~\ref{fig:e10b} uses the
lighter 300-iteration screening budget, it should be read as a
single-seed screening map
rather than as a one-to-one duplicate of the 500-iteration table.

This five-of-ten failure rate is the central open problem identified by the
present work, and we report it explicitly rather than restricting the benchmark
suite to the compliant subset.  Spectrum-aware Galerkin preconditioning
succeeds on the moderate-contrast SIMP cases represented in the reported
benchmark suite but degrades on the binary high-contrast and near-void regimes that
arise late in continuation.  This pattern is consistent with a loss of
coarse-space adequacy under high contrast, although the present study does
not isolate that mechanism directly.
Closing this gap --- through pathology-aware smoothing, contrast-adaptive
coarsening, or per-level failure detectors that detect near-null modes
that $\kappa_{\mathrm{eff}}$ alone misses --- is the natural follow-up to the
present hierarchy and is restated as an explicit limitation in
\S\ref{sec:discussion}.

\section{Qualitative Topology Gallery}\label{sec:gallery}

Figure~\ref{fig:gallery} provides qualitative geometry context for the benchmark
family.  All structures are rendered as isosurfaces
($\hat{\rho}_e>0.5$ shown as solid) using the marching-cubes
algorithm~\citep{lorensen1987marching}.  These panels are not quantitative
evidence for solver robustness; that evidence is reported in
Figures~\ref{fig:e10}--\ref{fig:e10b} and Table~\ref{tab:e10}.  Appendix
Table~\ref{tab:gallery_sources} maps each reader-facing panel label to the
stored density field used for traceability.  For each retained auxiliary run,
Figure~\ref{fig:gallery} renders the retained checkpoint satisfying the local
run-validity checks, selected before rendering and not by visual appearance,
solver residual, or compliance value.  The low-volume 1M cantilever panel is a
retained qualitative snapshot rather than a selected best-valid state.  The run
set was selected to span benchmark families and mesh sizes, not as quantitative
solver evidence.  Appendix~\ref{app:artifact} documents the artifact trace for
these qualitative panels.

\begin{figure}[!htbp]
\centering
\includegraphics[width=\linewidth]{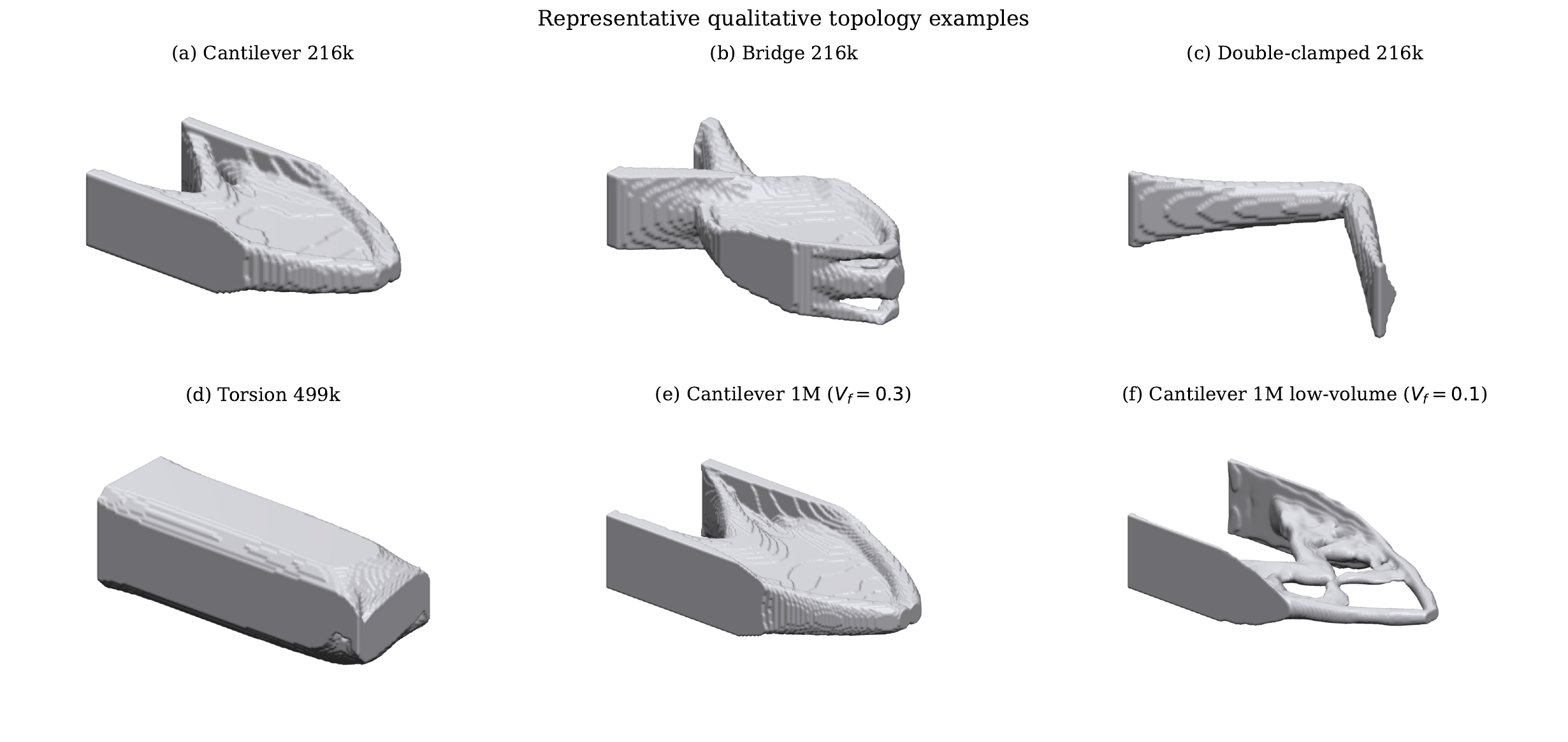}
\caption{Qualitative auxiliary density fields rendered as structure examples.
  \emph{Top row (left to right):} (a) Cantilever 216k
  (120$\times$60$\times$30, $V_f=0.3$); (b) bridge 216k
  (120$\times$60$\times$30, $V_f=0.3$); (c) double-clamped beam 216k
  (120$\times$60$\times$30, $V_f=0.1$).
  \emph{Bottom row:} (d) torsion 499k (165$\times$55$\times$55,
  $V_f=0.25$); (e) cantilever 1M (200$\times$100$\times$50, $V_f=0.3$);
  (f) cantilever 1M low-volume qualitative snapshot
  (200$\times$100$\times$50, $V_f=0.1$).
  All renders use an isosurface threshold $\hat{\rho}=0.5$ and light
  Taubin smoothing for presentation (10 iterations, pass band 0.1).
  The displayed fields are stored
  best-valid or retained qualitative snapshots from the selected auxiliary
  runs; ``best-valid'' means the retained checkpoint satisfying the local
  run-validity checks, not a panel chosen by visual appearance or solver
  residual.  The run set was selected to span benchmark families and mesh
  sizes, not as quantitative solver evidence.}
\label{fig:gallery}
\end{figure}

\FloatBarrier

\section{Discussion}\label{sec:discussion}

\subsection{Mechanistic Analysis}

The central empirical point is that the hierarchy helps primarily by changing
the \emph{solver regime}, not by exposing a large raw-throughput advantage for
BF16.  Experiments~E1 and E2 show that FP32-GMG lowers wall time while reducing
outer iterations relative to the capped Jacobi-PCG path, consistent with a more
effective preconditioned spectrum.  Experiment~E4 then shows that the BF16 and FP32 fused
fine kernels are effectively tied on the proxy benchmark
(3\,046 vs.\ 3\,242\,GFLOP/s).  That tie is not accidental.  The local operator
is only $24\times24$, so the WMMA path must pad to $32\times32$, cast the
element tile, preserve FP32 accumulation, and scatter through the same
memory-bound gather/scatter structure as the FP32 path.  In other words, the
tensor-core arithmetic is not the bottleneck by itself; the full smoother still
pays for register pressure, padding overhead, and limited arithmetic intensity.
The consequence is exactly what the timing and direct-validation studies
report: BF16 can track FP64 on the states where the hierarchy converges, but it
is not yet the fastest or most robust end-to-end variant.

This also explains the three negative ablation results.  First, W-cycle reduces
outer iterations but loses in wall time because it doubles the coarse-grid work
before the finest-level bottleneck has been relieved; the saved Krylov steps are
not numerous enough to amortise the second descent and ascent.  Second,
Jacobi degree~2 beats Chebyshev degree~2 at uniform density because the test state is
already spectrally mild.  Once $\kappa_{\mathrm{eff}}$ is modest, the extra
matvecs inside Chebyshev are overhead rather than protection, so the current
Chebyshev default should be read as a provisional robustness-oriented
engineering choice rather than as a demonstrated speed optimum on every state.
The high-contrast smoother screen sharpens this point: Chebyshev degree~4 lowers
the final residual on the two capped near-void states, but Jacobi degree~2 is much
faster on the 216k mid-volume state, so no tested smoother is uniformly best.
Third, BF16 does not improve end-to-end wall time because the present
$24$-DOF trilinear hex operator is too
small and too bandwidth-dominated for tensor-core arithmetic alone to dominate
wall time.  These are negative results, but they are informative: they localise
the value of the current hierarchy to moderate-regime convergence, memory
feasibility, and explicit failure-mode diagnosis rather than to indiscriminate
acceleration of every regime.

\subsection{Failure Modes and Spectral Drift}

The main limitation in the results is not a single BF16 outlier in the
spectral-proxy screen; it is the heterogeneous-field behavior of the hierarchy itself.
In that direct validation, BF16 and FP64 converge on the same number of cases
(7/18), while many screened-in states still reach the 500-iteration cap.  In E1, the same kind of
stagnation appears in the \emph{FP64} hierarchy at 512k.  These facts indicate
that the present structured hierarchy is not yet sufficiently robust to
maintain mesh-independent behavior under extreme binary-contrast fields at
larger problem sizes.  Put differently, the BF16 precision-sensitivity problem and the
GMG approximation-quality problem are related but distinct: the proxy can flag
a potentially dangerous mixed-precision spectrum, but it does not detect all
coarse-space or smoother failures.

The single screened-out point is a near-void 216k state
($V_f=0.2$, $p=1.5$), where the screen spikes to
$\eps_{\mathrm{BF16}}\kappa_{\mathrm{eff}}\approx 12.59$ even though both FP64
and BF16 converge.  Conversely, eleven screened-in BF16 cases do not converge
within the 500-iteration cap.  The broader robustness screens show the same
lesson: late high-contrast binary states can fail even when this scalar proxy
is not extreme.  The observed failure pattern is consistent with SIMP-induced
contrast reducing the adequacy of the current coarse-space / smoother balance,
although the present study does not isolate that mechanism directly.
These sampled cases suggest that the hierarchy behaves best in the middle of
the continuation path, where
contrast is large enough to require preconditioning but not so extreme that the
coarse space ceases to represent the fine correction well.  A full
$\kappa_{\mathrm{eff}}$ trajectory over all 30 SIMP steps is therefore the next
diagnostic to add, because isolated $(V_f,p)$ samples already indicate that the
relevant drift is tied to contrast evolution rather than to mesh size alone.

The robustness table reinforces the same interpretation.
Checkerboard and mixed near-void states fail despite modest raw
$\kappa_{\mathrm{eff}}$ estimates, which means the Lanczos-based screening
probe does not capture the near-null directions responsible for those
pathologies.
That is a useful caution for future work: $\kappa_{\mathrm{eff}}$ is an
interpretable spectrum diagnostic, but it is not an admissibility certificate or
a complete pathology detector.  Per-level smoothing factors and
coarse-grid correction histories are the measurements needed to separate a
coarse-space defect from a fine-smoother defect more cleanly.

\subsection{AMG Comparison and Engineering Implications}

Experiment~E8 should be read carefully.  The PyAMG comparison does not claim
to exhaust the external AMG baseline space; it shows
something narrower and still important.  The reported PyAMG timings measure
hierarchy construction and iterative solve only after the free-free CSR matrix
already exists, so the comparison is a post-assembly assembled-reference case.
With the cold-start GMG timing measured on a freshly reconstructed object per
trial, the setup advantage disappears: on this 64k problem the
FP32-GMG setup stage is slower than the PyAMG build stage
($1.43\pm0.03$\,s versus $1.00\pm0.16$\,s).  The remaining overall speedup
therefore comes from the solve phase, where the matrix-free GPU path is
still much faster on the reported reference case.  That conclusion remains
specific to this single post-assembly 64k CPU-vs-GPU comparison and is not a
general ranking of all assembled-AMG backends.
Other assembled GPU-AMG libraries remain plausible alternatives when the mesh
is irregular, when many right-hand sides amortise setup, or when an assembled
sparse operator already exists for other reasons.  The present GMG path is
favorable in the structured, matrix-free, single-right-hand-side regime tested
here under a tight memory budget.

From an engineering perspective, the main practical result is that a consumer
GPU can reach the $10^6$-element regime for the reported uniform-modulus
linear solve without reverting to a cluster-style sparse assembly workflow.  The
reported 1M-element case corresponds to 3.09 million free DOFs, a
$1.50\pm0.58$\,s solve time, and an 8.66\,GiB hierarchy-allocation delta
during setup, not a measured peak-memory trace, on an
RTX~4090.  Likewise, on the auxiliary 30-step fixed-penalty
schedule, the cantilever-216k run drops from 73.4\,s to 32.8\,s, but that
change includes repeated capped baseline solves rather than a matched-endpoint
optimization win.  The practical meaning is not that a single
GPU supersedes extreme-scale cluster workflows such as \citet{aage2017giga};
rather, the 1M-element uniform-modulus result suggests that this linear-solve
regime can be reached on a consumer GPU with a suitable hierarchy.  Solver
design appears to be the primary constraint for similar moderate-contrast
linear solves, while heterogeneous-field behavior at this scale remains
unmeasured.

\subsection{Limitations}

The present hierarchy is specialised in several ways.  First, it assumes a
structured Q1 hexahedral mesh with geometric $2{:}1$ coarsening.  Extending it
to unstructured meshes would require either algebraic transfer operators or a
mesh hierarchy capable of preserving the same boundary-mask semantics.  Second,
the analysis and experiments are restricted to linear elasticity with isotropic
SIMP interpolation.  Nonlinear constitutive laws or multi-material
interpolations would require re-evaluating the spectral proxy around the
resulting tangent operators, not merely rerunning the same code.

Third, the implementation is single-GPU only.  Multi-GPU execution would need a
distributed representation of the transfer operators, overlap exchange on the
fine matrix-free action, and a clear policy for where the first assembled coarse
level lives.  Fourth, the fine kernel is tuned to the $24$-DOF trilinear hex
operator.  Higher-order elements are actually attractive for tensor cores
because their local dense blocks are larger, but they would require a new tile
layout, different transfer operators, and a fresh cost model.  Finally, the
reproduction bundle includes repeated-run timing statistics, residual histories,
SIMP-step trajectories, direct BF16 validation on the heterogeneous probe,
high-contrast smoother checks, and a roofline proxy, but it still lacks three
measurements that would be required for a stronger HPC comparison: multi-seed
BF16 validation on real filtered continuation states, a broader assembled
baseline set such as AmgX plus PETSc GAMG or hypre BoomerAMG with and without
assembly cost, and direct Nsight/CUPTI kernel counters.  It also lacks a
successful high-contrast 1M-element SIMP-like state; the present 1M result is
explicitly uniform-modulus only.  We also report solve-window energy only; total
benchmark energy, including setup and capped or failed screening runs, was not
instrumented.  Those remaining
measurements matter because they would separate algorithmic gains from
backend-specific implementation effects more cleanly than the present study can.

\FloatBarrier

\section{Conclusion}\label{sec:conclusion}

The main result of this paper is not that BF16 is universally faster; it is
that an important algorithmic step for the studied matrix-free 3D SIMP setting
is a spectrum-aware hierarchy.  Once the flat
Jacobi preconditioner is replaced by a matrix-free Galerkin GMG cycle, the
fine-level work can be evaluated in BF16 on the subset of tested heterogeneous
states where the FP64 hierarchy itself converges, but without a wall-time
advantage over FP32.  The accompanying measurements also clarify the limits of
that statement: the
$\eps_{\mathrm{BF16}}\kappa_{\mathrm{eff}}<1$ proxy is useful for interpretation
but not a convergence classifier, the benefit comes primarily from
moderate-regime convergence and memory feasibility rather than from a large
tensor-core throughput win, and the hierarchy still degrades on strongly
heterogeneous 512k states and
near-singular density pathologies.

The evidence therefore supports a narrower conclusion than a generic
``BF16 accelerates topology optimization'' claim.  It shows that the
single-GPU $10^6$-element regime can be reached for the reported uniform-modulus
linear solve with a matrix-free solver stack that remains
Galerkin-consistent and analytically interpretable.  The reported speed
advantages over Jacobi-PCG should be read as advantages over a capped
stagnating reference path, while the PyAMG comparison remains a narrow 64k
post-assembly reference.  The smoother ablation likewise supports a bounded
engineering conclusion: Jacobi degree~2 is faster on mild states, while the
Chebyshev option is retained as a robustness-oriented choice rather than as a
demonstrated wall-time optimum on every state.  Future work should strengthen
the hierarchy for
pathological SIMP states, add converged external baselines including an
assembled GPU-AMG operator, and extend the transfer-operator logic beyond the
present structured Q1 hex setting.

\section*{Code and data availability}

The public GitHub repository address has been reserved at
\href{https://github.com/nbbllxx0/Mixed-Precision-GMG-SIMP}{the Mixed-Precision-GMG-SIMP GitHub repository};
the code payload is staged locally for upload there when the arXiv version of
this paper is online.  The release will be distributed under the BSD 3-Clause
license and will include a comprehensive README for reproducing the experiments
from scratch.
The Level~0
fused gather--GEMM--scatter
operator, treated here as fixed infrastructure, is released separately by
\citet{yang2026fused} as
\href{https://arxiv.org/abs/2604.18020}{arXiv:2604.18020}, with companion
repository
\href{https://github.com/nbbllxx0/Fused-Gather-GEMM-Scatter-Kernels}{github.com/nbbllxx0/Fused-Gather-GEMM-Scatter-Kernels}.
Appendix~\ref{app:artifact} gives the reproduction workflow and maps each
reported result to its provenance record.  No external dataset is used.  The
public release will include the implementation, environment specification, and
documented workflow needed to regenerate the reported measurement records
locally.  Generated measurement outputs, logs, retained exploratory outputs,
retained density arrays, and generated figures are not part of the public code
release.  The arXiv source package contains the rendered figure files needed to
compile the paper, not the raw gallery density arrays.

\appendix

\section{Auxiliary OC Schedule Diagnostic}\label{app:e3_aux}

Figure~\ref{fig:e3} and Table~\ref{tab:e3} report total wall time for an
auxiliary 30-step fixed-penalty OC schedule on three benchmarks.  This auxiliary
experiment is deliberately narrower than the full SIMP driver described in
Section~\ref{sec:method}: it uses the fixed OC update with fixed
penalization $p=3.0$ and does not replay the full
filter/projection-continuation machinery.  It also uses a different
linear-solve tolerance from the $10^{-6}$ default used in Experiments~E1, E2,
and E5--E10: both solver stacks use $10^{-5}$ in this auxiliary schedule, so
E3 should be read as a same-schedule execution diagnostic rather than as a
tighter $10^{-6}$ state-solve study or a matched-endpoint optimization
comparison.

Because the trajectories separate after repeated capped baseline solves, this
appendix reports execution cost only, not optimization endpoint quality.  Under
that diagnostic framing, FP32-GMG executes the 30 steps $2.24\times$ faster on
cantilever-216k, $3.09\times$ faster on torsion (3k elements), and
$1.38\times$ faster on the Messerschmitt--Bölkow--Blohm (MBB) beam
(1.5k elements).  The torsion and MBB cases are small reference problems, so
the reported same-schedule execution ratios should be read mainly as
solver-stagnation avoidance within the auxiliary OC loop rather than as
GPU-throughput evidence at the 64k--1M scales emphasized elsewhere in the
paper.  The meaningful parity check is the first SIMP step, where the GMG path
remains within 0.108\%, 0.001\%, and 0.062\% of the matrix-free Jacobi-PCG
baseline of \citet{yang2026fused} on the three benchmarks.

The baseline nevertheless hits the 1000-iteration cap on 27/30 cantilever
steps, 27/30 torsion steps, and 25/30 MBB steps, while the GMG path hits that
cap once on MBB.  Figure~\ref{fig:e3b} then shows why the final-step compliance
values diverge much more strongly: once the two solvers start taking different
approximate state solves, the trajectories separate and the later steps are no
longer a controlled like-for-like comparison of the same design path.  The
final-step compliance gaps retained in the data files (303.7\%, 122.4\%, and
347.6\%) therefore reflect trajectory separation plus repeated capped solves,
not a controlled same-state solver error.  The reported wall times compare work
to execute the same auxiliary 30-step schedule, not work to reach the same
final design.

\begin{figure}[!tbp]
\centering
\includegraphics[width=0.75\linewidth]{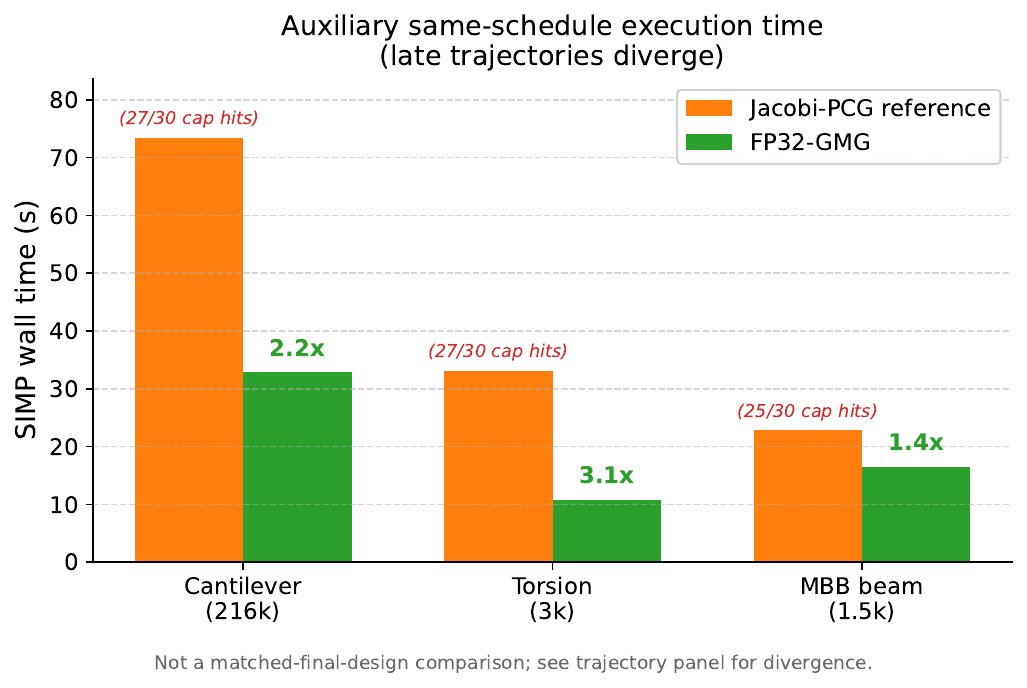}
\caption{Schedule-execution wall time on the auxiliary 30-step fixed-penalty OC
  schedule; this is \emph{not} a matched-endpoint optimization comparison.
  The Jacobi-PCG reference path follows \citet{yang2026fused}.  The
  Jacobi-PCG baseline hits the 1000-iteration cap on 27/30 cantilever steps,
  27/30 torsion steps, and 25/30 MBB steps; the GMG MBB run hits the cap once.
  Ratio labels annotate the GMG bars; final-step compliance diverges strongly,
  so Figure~\ref{fig:e3b} and Table~\ref{tab:e3} must be read together.}
\label{fig:e3}
\end{figure}

\begin{figure}[!tbp]
\centering
\includegraphics[width=\linewidth]{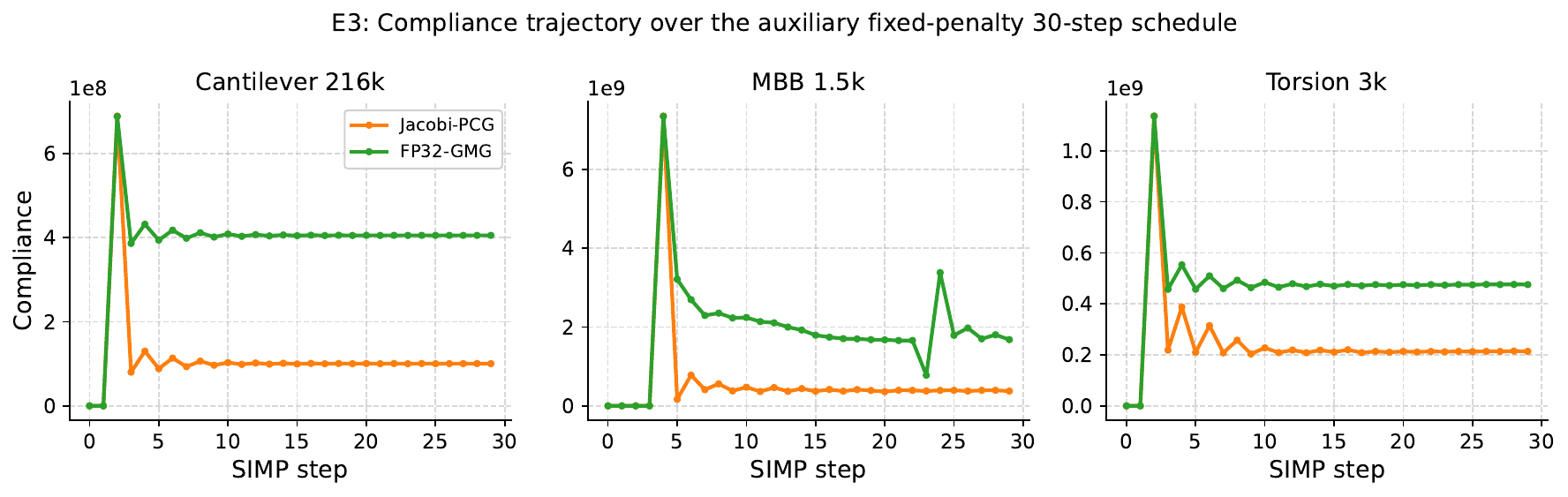}
\caption{E3: Compliance trajectories over the auxiliary 30-step fixed-penalty OC
  schedule.  The k=0 agreement is tight, but later SIMP states separate once
  the two solver stacks induce different approximate optimization paths.
  Panels use separate y-scales so the within-benchmark trajectory separation
  remains visually readable.}
\label{fig:e3b}
\end{figure}

\begin{table}[!tbp]
\centering
\caption{Schedule-execution wall time on the auxiliary 30-step fixed-penalty OC
  schedule, not wall time to reach matched final designs.  Each row is a
  single 30-step schedule measurement, so no standard deviation is estimated.
  The compliance comparison is reported at SIMP step k=0 (uniform design); the
  data files also store final-step compliance errors of 303.7\%, 122.4\%, and
  347.6\%, which are not used in the main comparison because the trajectories
  separate and many baseline steps hit the iteration cap.}
\label{tab:e3}
\small
\begin{tabularx}{0.95\linewidth}{llrrr>{\RaggedRight\arraybackslash}X}
\toprule
Benchmark & Size & Jacobi-PCG (s) & FP32-GMG (s) & Time ratio & Notes\\
\midrule
Cantilever & 216k & 73.4 & 32.8 & $2.24\times$ & 27/30 Jacobi-PCG steps hit cap\\
Torsion    & 3k & 33.1 & 10.7  & $3.09\times$  & 27/30 Jacobi-PCG steps hit cap\\
MBB beam   & 1.5k & 22.8 & 16.5 & $1.38\times$  & 25/30 Jacobi-PCG, 1/30 GMG hit cap\\
\bottomrule
\end{tabularx}
\end{table}

\section{Reproducibility}\label{app:repro}

This appendix records the settings needed to interpret and rerun the reported
experiments.

Software and hardware context.
All GPU experiments for the present implementation use the single RTX~4090
configuration stated in the experimental setup subsection.  The
recreate-from-scratch benchmark path uses Python~3.10.18, pip~24.3,
\path{cupy-cuda12x}~13.6.0, NumPy~2.2.6, SciPy~1.15.3,
Matplotlib~3.10.7, pandas~2.3.3, PyVista~0.46.3, VTK~9.6.0,
scikit-image~0.25.2, PyAMG~5.3.0, \path{pynvml}~13.0.1, and
\path{nvidia-ml-py}~13.595.45.  The reported quantitative results were
produced on an RTX~4090 host whose NVIDIA driver reports CUDA API support
13.2, with execution through the CUDA~12.x \path{cupy-cuda12x} runtime
package.  The source package includes \path{environment.yml} and
\path{RUNTIME_SNAPSHOT.md} as machine-readable reproduction notes.  The
current manuscript PDF and figures were regenerated from those benchmark
outputs during the final manuscript build; this is not a claim of byte-identical
rendering from the pinned benchmark environment alone.

Public code-release timing.
The public GitHub repository address has been reserved at the address above;
the code payload is staged locally for upload there when the arXiv version is
online.  The public release will contain the implementation, experiment
drivers, figure-generation scripts, comprehensive reproduction README,
and environment specification.  It is intended for independent reruns rather
than as a generated-output, log, density-field, or figure release.

Key interpretation rules.
All reported quantitative values are associated with the reproduction steps and
output artifact types listed in Table~\ref{tab:artifact_results}.  The public
code release is designed to regenerate these outputs locally.  Residual-history,
trajectory, and trial-level outputs are supporting rerun records rather than
separate headline results.  The
qualitative
Figure~\ref{fig:gallery} gallery is
presentation-only; it is not generated from the E1--E10 quantitative result
files and should not be read as a quantitative substitute for the measured
benchmark suite.  The retained qualitative gallery density fields are local
author-side provenance and are not part of the arXiv source package or public
code release.  The validation step records the M1--M8 checks behind
Table~\ref{tab:phase1}.  The full benchmark suite is expected to take
approximately 4--6\,h on the RTX~4090 workstation at the default statistical
setting.  Total benchmark energy, including setup, failed or capped screening
runs, and figure generation, was not measured; the energy measurements in
Section~\ref{ssec:exp_e9} are solve-window proxies only.

After that code repository is released and cloned, following the release README
to enter the experiment-driver and figure-generation directories, and creating the
supplied conda environment (\texttt{conda env create -f environment.yml};
\texttt{conda activate gmg-simp}), the workflow is as follows.
The environment name is user-selectable; the retained author-side snapshot used
a local environment named \texttt{Pytorch} with the package pins listed in
\path{environment.yml}.
\noindent Set output directories outside the manuscript bundle before rerunning.
For POSIX shells:
\begin{verbatim}
export OUTDIR=../../rerun_outputs/paper4
export FIGDIR=../../rerun_outputs/paper4_figs
export GALLERYDIR=/path/to/local/gallery_density_arrays
\end{verbatim}
For PowerShell:
\begin{verbatim}
$OUTDIR='../../rerun_outputs/paper4'
$FIGDIR='../../rerun_outputs/paper4_figs'
$GALLERYDIR='C:/path/to/local/gallery_density_arrays'
\end{verbatim}

\begin{enumerate}
\item \textbf{Pre-benchmark validation checks} (must pass before the main
  benchmark suite; run from the
  experiment-driver directory):
\begin{verbatim}
python validate_phase1.py --out results_phase1.json
\end{verbatim}
This runs the M1--M8 correctness checks tabulated in the code and provenance
appendix:
FP64 V-cycle vs.\ direct solve (M1), outer-iteration
count vs.\ mesh size (M2), matrix-free vs.\ assembled-Galerkin compliance (M3),
Chebyshev vs.\ Jacobi smoother (M4), selected SIMP sanity probes (M5),
$\kappa_{\mathrm{eff}}\!\le\!256$ probe (M6), BF16 drop-in compliance using
FGMRES restart 50 and maxiter 2000 as a compliance-only gate (M7), and
three-level FP32 hierarchy (M8).
\item \textbf{Main benchmark suite E1--E10.}
Run from \path{experiments/paper4/}, not from the manuscript directory, and
write to a separate rerun tree so the shipped manuscript bundle is not
overwritten:
\begin{verbatim}
python run_experiments_e1_e10.py --experiments all --out $OUTDIR
\end{verbatim}
\item \textbf{Figure regeneration from local rerun outputs} (run from the public
  code-release \path{figures/} directory against that separate rerun tree; the
  quantitative figures use \texttt{OUTDIR}.  Regenerating the qualitative
  gallery additionally requires locally retained density arrays, which are not
  included in the arXiv source package):
\begin{verbatim}
python plot_figures.py --results-dir $OUTDIR --figs-dir $FIGDIR
python make_3d_renders.py --renders-dir $GALLERYDIR --figs-dir $FIGDIR
\end{verbatim}
\end{enumerate}

\begingroup
\small
\setlength{\LTleft}{0pt}
\setlength{\LTright}{0pt}
\begin{longtable}{>{\RaggedRight\arraybackslash}p{0.27\linewidth}>{\RaggedRight\arraybackslash}p{0.20\linewidth}>{\RaggedRight\arraybackslash}p{0.45\linewidth}}
\caption{Non-trivial run defaults and fixed seeds used by the reported runs.}
\label{tab:repro_defaults}\\
\toprule
Setting & Value & Scope / note\\
\midrule
\endfirsthead
\caption[]{Non-trivial run defaults and fixed seeds used by the reported runs (continued).}\\
\toprule
Setting & Value & Scope / note\\
\midrule
\endhead
Default hierarchy depth & 4 levels & Benchmark-suite runs unless an experiment states otherwise; the E6 sensitivity sweep also screens 3-level runs, and M8 uses 3 levels.\\
Fine smoother degree & 2 & Default for the paper runs; E6(d) additionally tests degree 4.\\
Coarse pre/post smoothing & 2 steps & Implemented as \texttt{coarse\_smooth\_iters=2} in the hierarchy.\\
Jacobi damping cap & $\omega_{\max}=0.5$ & Used by the Jacobi fallback and coarse-level smoothers.\\
Chebyshev lower fraction & $\alpha=1/30$ & Default spectral target interval for all Chebyshev runs.\\
Dense-Cholesky cutoff & 5000 coarse DOFs & Above this threshold the coarsest solve switches to fixed-count PCG.\\
Iterative coarsest solve budget & 80 PCG steps & Used only when the coarsest dense Cholesky path is unavailable.\\
Default outer restart & 32 & Default FGMRES restart in the solver-comparison paths; the E6 sensitivity surface screens 16/32/50, and the E7 large-scale plus E10 robustness runs use restart 50.\\
$\kappa_{\mathrm{eff}}$ start vector & normalized Gaussian draw with CuPy seed 0 & Deterministic Lanczos initial vector for the empirical $\kappa_{\mathrm{eff}}$ probe in the spectral-proxy, robustness, and small diagnostic utilities.\\
Heterogeneous sweep seed & 42 & Shared Bernoulli voxel-placement seed for the binary-contrast spectral-proxy probes.\\
E10 robustness-case seeds & 7, 11, 13, 17, 19 & Binary cases use 7/11/13; the near-singular random case with $\rho_{\mathrm{floor,test}}=10^{-12}$ uses 17; the mixed near-void case uses 19; checkerboard and layered-band are deterministic constructions.\\
E10 basin seed & 23 & Fixed seed for the broader $(V_f,\rho_{\mathrm{floor,test}},p)$ basin sweep.\\
\bottomrule
\end{longtable}
\endgroup

Hardware requirements: NVIDIA GPU with SM~8.0+ (Ampere/Ada) and
$\ge$12\,GiB VRAM for the smaller experiments.  The reported 1M-element run in
\S\ref{ssec:exp_e7} was measured on a 24\,GiB RTX~4090; the setup-time
hierarchy-allocation delta is 8.66\,GiB in the E7 large-scale output.
Expected runtimes: pre-benchmark validation checks $\approx$30\,min; the main
benchmark suite $\approx$4--6\,h at the default statistical setting
(two warm-ups, ten timed trials for repeated-run measurements).
The full benchmark suite was not energy-metered.  At the stated runtime on one
RTX~4090 workstation, total benchmark energy should be interpreted as an
unreported setup and measurement cost rather than inferred from the E9
solve-window proxy; a rough workstation-level order of magnitude is a few kWh,
depending on host-system power, idle intervals, failed or capped screening
runs, and GPU frequency behavior.

\section{Code and Provenance Map}\label{app:artifact}

Table~\ref{tab:artifact_modules} lists reader-facing implementation components;
Table~\ref{tab:artifact_results} maps each numbered result in the manuscript to
its reproduction step and output record type.  Exact release paths, raw module
filenames, and stored gallery-array names are maintained in the public
\path{ARTIFACT_MAP.md} rather than repeated in full in the manuscript.  The
following tables are reproduction metadata, not additional scientific claims.

\begin{table}[H]
\centering
\small
\caption{Reader-facing implementation-component map.}
\label{tab:artifact_modules}
\begin{tabularx}{0.95\linewidth}{>{\RaggedRight\arraybackslash}p{0.28\linewidth}>{\RaggedRight\arraybackslash}p{0.27\linewidth}Y}
\toprule
Reader-facing component & Released implementation component & Purpose\\
\midrule
GMG hierarchy implementation & Hierarchy construction module & Mixed-precision hierarchy construction and Lanczos-based empirical $\kappa_{\mathrm{eff}}$ estimate\\
SIMP solver integration & GMG-enabled SIMP driver & Matrix-free SIMP driver with the GMG solver path\\
Baseline Jacobi / matrix-free reference path & Baseline solver implementation & Jacobi-PCG and matrix-free baseline used in E2, E3, and E8\\
Level-0 fused matvec & Fused Level-0 operator implementation & BF16 WMMA gather--GEMM--scatter kernel; companion paper \citep{yang2026fused}\\
Validation-check workflow & Validation driver & M1--M8 gates and retained validation record\\
Small-problem spectrum / symmetry diagnostic & Spectrum and symmetry diagnostic & Probe used only to interpret the PCG/FGMRES policy and empirical $\kappa_{\mathrm{eff}}$ screen\\
Benchmark-suite workflow & E1--E10 benchmark driver & E1--E10 plus direct BF16 and high-contrast smoother diagnostics\\
Figure-generation workflow & Quantitative and qualitative figure workflows & Regenerates quantitative figures and the qualitative gallery\\
Conda environment specification & Reproducibility environment file & Recreate-from-scratch package pins\\
\bottomrule
\end{tabularx}
\end{table}

\begingroup
\small
\setlength{\LTleft}{0pt}
\setlength{\LTright}{0pt}
\begin{longtable}{>{\RaggedRight\arraybackslash}p{0.31\linewidth}>{\RaggedRight\arraybackslash}p{0.29\linewidth}>{\RaggedRight\arraybackslash}p{0.32\linewidth}}
\caption{Mapping from reported results to reproduction commands and output artifacts.}
\label{tab:artifact_results}\\
\toprule
Manuscript result & Reproduction step & Output artifact type\\
\midrule
\endfirsthead
\caption[]{Mapping from reported results to reproduction commands and output artifacts (continued).}\\
\toprule
Manuscript result & Reproduction step & Output artifact type\\
\midrule
\endhead
\S\ref{sec:method}, Figs.~\ref{fig:method_overview}--\ref{fig:precision_overview}: conceptual method diagrams & Hand-crafted manuscript illustrations & Conceptual figure files, not generated by the benchmark plotting pipeline\\
\S\ref{ssec:phase1}, Table~\ref{tab:phase1}: validation checks & Validation-check driver, M1--M8 gates & Validation result JSON\\
\S\ref{ssec:exp_e1}, Fig.~\ref{fig:e1}: outer iter.\ count vs.\ mesh & Benchmark-suite driver, E1 heterogeneous sweep & E1 iteration-count summary\\
\S\ref{ssec:exp_e2}, Figs.~\ref{fig:e2a}--\ref{fig:e2b}, Table~\ref{tab:e2}: per-solve wall time & Benchmark-suite driver, E2 repeated solve timings & E2 timing summary and trial records\\
\S\ref{ssec:exp_e2}, Fig.~\ref{fig:e2c}: residual histories & Benchmark-suite driver, E2 residual logging & E2 residual-history record\\
Appendix~\ref{app:e3_aux}, Fig.~\ref{fig:e3}, Table~\ref{tab:e3}: auxiliary 30-step schedule timing & Benchmark-suite driver, E3 fixed 30-step schedule & E3 schedule-timing summary\\
Appendix~\ref{app:e3_aux}, Fig.~\ref{fig:e3b}: SIMP trajectory diagnostic & Benchmark-suite driver, E3 trajectory logging & E3 trajectory record\\
\S\ref{ssec:exp_e4}, Fig.~\ref{fig:e4}: tensor-core throughput & Benchmark-suite driver, E4 200-rep kernel proxy & E4 throughput summary\\
\S\ref{ssec:exp_e4}, Fig.~\ref{fig:e4b}: roofline proxy & Benchmark-suite driver, E4 roofline proxy & E4 roofline record\\
\S\ref{ssec:exp_e5}, Fig.~\ref{fig:e5}: $\kappa_{\mathrm{eff}}$ spectral proxy & Benchmark-suite driver, E5 heterogeneous spectral-proxy sweep & E5 spectral-proxy summary\\
\S\ref{ssec:exp_e5}, Table~\ref{tab:e5_direct}: direct BF16 validation & Benchmark-suite driver, heterogeneous BF16/FP64 checks & E5 direct-validation summary and residual histories\\
\S\ref{ssec:exp_e6}, Fig.~\ref{fig:e6}: ablations (a)--(d) & Benchmark-suite driver, E6 ablations & E6 ablation summaries and trial records\\
\S\ref{ssec:exp_e6}, Fig.~\ref{fig:e6b}: sensitivity surface & Benchmark-suite driver, E6 screening surface & E6 sensitivity-surface summary\\
\S\ref{ssec:exp_e6}, Table~\ref{tab:e6_high_contrast}: high-contrast smoother screen & Benchmark-suite driver, high-contrast smoother checks & E6 high-contrast smoother summary and residual histories\\
\S\ref{ssec:exp_e7}, Fig.~\ref{fig:e7}, Table~\ref{tab:e7}: 1M-element scaling & Benchmark-suite driver, E7 large-scale repeated solves & E7 large-scale timing summary and trial records\\
\S\ref{ssec:exp_e8}, Fig.~\ref{fig:e8}: PyAMG external baseline & Benchmark-suite driver, E8 post-assembly assembled reference & E8 external-baseline timing summary\\
\S\ref{ssec:exp_e9}: energy efficiency & Benchmark-suite driver, E9 NVML energy probe & E9 solve-window energy summary and trial records\\
\S\ref{ssec:exp_e10}, Fig.~\ref{fig:e10}, Table~\ref{tab:e10}: robustness cases & Benchmark-suite driver, E10 robustness cases & E10 robustness-case summary\\
\S\ref{ssec:exp_e10}, Fig.~\ref{fig:e10b}: collapsed single-seed screening map & Benchmark-suite driver, E10 basin sweep & E10 basin-screening summary\\
\bottomrule
\end{longtable}
\endgroup

\begin{table}[H]
\centering
\small
\caption{Mapping from reader-facing qualitative labels to stable gallery sources.}
\label{tab:gallery_sources}
\begin{tabularx}{\linewidth}{>{\RaggedRight\arraybackslash}p{0.24\linewidth}>{\RaggedRight\arraybackslash}p{0.15\linewidth}>{\RaggedRight\arraybackslash}p{0.30\linewidth}Y}
\toprule
Reader-facing label & Stable panel ID & Stored source category & Description\\
\midrule
Cantilever 216k & Panel C216k & selected auxiliary density field & best-valid auxiliary density field\\
Bridge 216k & Panel B216k & selected auxiliary density field & best-valid auxiliary density field\\
Double-clamped beam 216k & Panel D216k & selected auxiliary density field & best-valid auxiliary density field\\
Torsion 499k & Panel T499k & selected auxiliary density field & best-valid auxiliary density field stored under a rounded 500k gallery identifier\\
Cantilever 1M & Panel C1M & selected auxiliary density field & best-valid auxiliary density field\\
Cantilever 1M low-volume qualitative snapshot & Panel C1M-LV & retained low-volume qualitative snapshot & retained qualitative snapshot\\
\bottomrule
\end{tabularx}
\end{table}

\noindent
The low-volume 1M cantilever panel is a retained qualitative snapshot rather
than evidence for a distinct final-versus-best state.  Exact array and metadata
filenames are listed in the public \path{ARTIFACT_MAP.md}.  These retained
snapshots are gallery inputs only, not regenerated quantitative E1--E10 results.

\section[Idealised Conditional Argument for the kappa-eff Bound]{Idealised Conditional Argument for the $\kappa_{\mathrm{eff}}$ Bound}\label{app:bound}

This appendix gives the limited idealised argument behind the
$\eps_{\mathrm{BF16}}\cdot\kappa_{\mathrm{eff}}$ screen.  It is not a theorem
for the exact floating-point mixed-precision V-cycle used in the code, and it
does not prove BF16 convergence for every tested density state.  Its role is to
motivate a measurable diagnostic in the frozen linear/SPD setting.

\subsection{Setup and Notation}

Let $\bK$ denote the free-DOF fine-system operator with
eigenvalues $0<\lambda_1\le\cdots\le\lambda_n$.
Let $\mathcal{M}$ denote one application of the frozen V-cycle as a linear
operator in the idealised SPD interpretation used for the PCG pairings.
The effective condition number is
\begin{equation}
  \kappa_{\mathrm{eff}} = \frac{\lambda_{\max}(\mathcal{M}\bK)}{\lambda_{\min}(\mathcal{M}\bK)}.
\end{equation}
The experiments estimate this quantity with a Lanczos probe.  When the
implemented preconditioned operator is treated conservatively as non-symmetric,
the solver uses FGMRES and $\kappa_{\mathrm{eff}}$ is interpreted only as a
screening scalar, not as a complete convergence predictor.

\subsection{V-Cycle Error Propagation}

The error after one V-cycle satisfies $\be_{k+1} = \mathcal{E}\be_k$
where the error propagation operator is
\begin{equation}
  \mathcal{E} = (\bI - \mathcal{S}_{\mathrm{post}}\bK)(\bI - \bP(\bK^1)^{-1}\bR\bK)(\bI - \mathcal{S}_{\mathrm{pre}}\bK).
\end{equation}
Here $\mathcal{S}\bK = q_\nu(\bK)\bK$, where $q_\nu$ is the degree-$\nu$
smoothing polynomial.

\begin{proposition}[Chebyshev residual estimate]\label{prop:rho}
For a Chebyshev-Jacobi smoother of degree $\nu$ targeting
$[\alpha\lambda_{\max}, \lambda_{\max}]$, the residual polynomial on that
targeted interval satisfies
\begin{equation}
  \max_{\lambda\in[\alpha\lambda_{\max},\lambda_{\max}]}
  |p_\nu(\lambda)|
  \le \frac{1}{T_\nu(\sigma/\delta)},
  \qquad
  \sigma=\frac{(1+\alpha)\lambda_{\max}}{2},\quad
  \delta=\frac{(1-\alpha)\lambda_{\max}}{2}.
\end{equation}
For the paper default $\alpha=1/30$ and $\nu=2$,
$\sigma/\delta=31/29$ and $1/T_2(31/29)\approx0.78$.
\end{proposition}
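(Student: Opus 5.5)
The plan is to identify the residual polynomial produced by Algorithm~\ref{alg:cheb} explicitly as a shifted and scaled Chebyshev polynomial. Then bound it on the target interval with the elementary estimate $|T_\nu|\le 1$ on $[-1,1]$.

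\textbf{Setting.} Work with the Jacobi-preconditioned operator $\bD^{-1}\bK$, whose relevant eigenvalues are denoted $\lambda$. Write the smoother error after $\nu$ steps as $\be_\nu = p_\nu(\bD^{-1}\bK)\be_0$, where $p_\nu$ is a polynomial of degree $\nu$ with $p_\nu(0)=1$. This normalisation holds because every update in the algorithm is of the form $\bx_{k+1}=\bx_k+(\text{polynomial in }\bD^{-1}\bK)\,\bD^{-1}\bm{r}$, so the constant term of the error polynomial is untouched.

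\textbf{Step 1: the claimed closed form.} The claim is
\[
  p_\nu(\lambda)=\frac{T_\nu\bigl((\sigma-\lambda)/\delta\bigr)}{T_\nu(\sigma/\delta)},
\]
with $\sigma,\delta$ as in the statement. To verify it, I would follow the standard derivation in Saad, Chapter~12, by induction on $k$:
\begin{itemize}
\item the base case $p_1(\lambda)=1-a_0\lambda=1-\lambda/\sigma$ matches $T_1$, since $a_0=1/\sigma$;
\item the inductive step substitutes the three-term Chebyshev recurrence $T_{k+1}(t)=2tT_k(t)-T_{k-1}(t)$ into the error update $\be_{k+1}=\be_k-\bm{d}_k$;
\item it then checks that the scalars $a_k=1/(\sigma-\delta^2a_{k-1}/4)$ are precisely the ratios $2T_k(\sigma/\delta)/(\delta\,T_{k+1}(\sigma/\delta))$ that the recurrence requires.
\end{itemize}

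\textbf{Expected obstacle.} The main risk is in this inductive step: matching the algorithm's coefficients, including the factor $\delta^2/4$ and the momentum term on $\bm{d}_{k-1}$, to the normalised Chebyshev recurrence. Note that $\delta^2a_{k-1}/4 = (\delta/2)^2 a_{k-1}$, and I would first verify this identity for $k=1,2$ by direct expansion before invoking the general induction. If the indexing differs from Saad's form, I would fall back to citing the textbook result that the Chebyshev semi-iteration realises the minimax polynomial. The bound below needs only the closed form, not the specific implementation.

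\textbf{Step 2: the bound.} For $\lambda\in[\alpha\lambda_{\max},\lambda_{\max}]=[\sigma-\delta,\sigma+\delta]$, the argument $(\sigma-\lambda)/\delta$ lies in $[-1,1]$. There $|T_\nu|\le 1$, since $T_\nu(\cos\theta)=\cos\nu\theta$. Since $\sigma/\delta>1$, we also have $T_\nu(\sigma/\delta)=\cosh(\nu\,\mathrm{arccosh}(\sigma/\delta))>1$. Together these give
\[
  \max_{\lambda\in[\alpha\lambda_{\max},\lambda_{\max}]}|p_\nu(\lambda)|\le \frac{1}{T_\nu(\sigma/\delta)}.
\]

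\textbf{Step 3: the numerical value.} For $\alpha=1/30$, we get $\sigma/\delta=(1+\alpha)/(1-\alpha)=31/29$. Using $T_2(x)=2x^2-1$,
\[
  T_2(31/29)=\frac{1922-841}{841}=\frac{1081}{841},
\]
so $1/T_2(31/29)=841/1081\approx 0.778$. This confirms the stated value of $0.78$.

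I would close by remarking that the estimate is confined to the targeted band. Nothing is claimed for $\lambda<\alpha\lambda_{\max}$, where $|p_\nu(\lambda)|$ approaches $1$ as $\lambda\to 0$, consistent with the caveat after Algorithm~\ref{alg:cheb}.
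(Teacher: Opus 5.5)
Your Steps 2 and 3 are correct and coincide with the paper's argument. The paper's proof sketch simply invokes the standard Chebyshev semi-iteration bound on the targeted band and evaluates $T_2(x)=2x^2-1$; that is exactly what your closed form, $|T_\nu|\le 1$ on $[-1,1]$, and $T_2(31/29)=1081/841$ deliver.

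The weak point is Step 1 as you planned it: the induction against Algorithm~\ref{alg:cheb} \emph{as printed} cannot close. Your identity $a_k=2T_k(\sigma/\delta)/(\delta\,T_{k+1}(\sigma/\delta))$ gives $a_0=2/\sigma$ at $k=0$, but the algorithm (and your base case) uses $a_0=1/\sigma$. In addition, the printed momentum term at $k=1$ multiplies $\bm{d}_0=\bD^{-1}\bm{r}_0$ rather than the actual first step $\bx_1-\bx_0=\bD^{-1}\bm{r}_0/\sigma$. Expanding the printed recurrence for $\nu=2$ gives $p_2(\lambda)=(1-\lambda/\sigma)(1-a_1\lambda)-a_1\delta^2\lambda/(4\sigma)$ with $a_1=4\sigma/(4\sigma^2-\delta^2)$. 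Its $\lambda^2$ coefficient, $4/(4\sigma^2-\delta^2)$, differs from the Chebyshev value $2/(2\sigma^2-\delta^2)$ whenever $\delta\neq 0$. This is not an indexing issue that a reindexing could fix. With $\alpha=1/30$ and, say, $\lambda_{\max}=2$, the printed recurrence gives $p_2(\alpha\lambda_{\max})\approx 0.84>0.78$, so it does not even satisfy the bound.

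Your identity is right for the standard semi-iteration, i.e.\ $a_k=2\rho_k/\delta$ with $\rho_k=T_k(\sigma/\delta)/T_{k+1}(\sigma/\delta)$ in Saad's notation. For that to hold, the $a_k$-recursion must be seeded with $a_0=2/\sigma$, and $\bm{d}_0$ must denote the first step $\bD^{-1}\bm{r}_0/\sigma$. With that reading, your $k=1,2$ expansion and the general induction go through.

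So your fallback is not optional. The proposition has to be read as a statement about the standard Chebyshev--Jacobi semi-iteration, meaning the smoother whose residual polynomial is $T_\nu((\sigma-\lambda)/\delta)/T_\nu(\sigma/\delta)$. That is precisely how the paper's proof treats it. Make that identification explicit (or correct the seeding), and your Steps 2--3 then constitute a complete proof.
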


\begin{proof}[Proof sketch]
This is the standard Chebyshev semi-iteration residual-polynomial bound on the
targeted spectral interval.  The numerical value follows from
$T_2(x)=2x^2-1$.  The result is a targeted-band residual estimate only; it is
not a complete bound on the multigrid error-propagation operator because the
coarse-space approximation, boundary mask, and elasticity near-null components
also enter $\mathcal{E}$.
\end{proof}

The practical V-cycle smoothing factor is therefore an empirical property of
the combined smoother and Galerkin coarse correction, not a direct consequence
of the scalar $1/T_\nu(\sigma/\delta)$ alone.

\subsection{Effective Condition Number}

\begin{proposition}[Idealised effective-condition-number estimate]\label{prop:kappa}
Assume the frozen V-cycle is an SPD preconditioner for $\bK$ and its
error-propagation operator satisfies $\|\mathcal{E}\|_{\bK}\le\rho<1$ in the
energy norm.  Then
\begin{equation}
  \kappa_{\mathrm{eff}} = \kappa(\mathcal{M}\bK) \le
  \frac{1+\rho}{1-\rho}.
  \label{eq:kappa_formula}
\end{equation}
For a non-normal or only approximately SPD floating-point implementation, this
bound is only an interpretive idealisation.
\end{proposition}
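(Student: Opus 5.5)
The plan is to relate the error-propagation operator to the preconditioned operator through the identity $\mathcal{E}=\bI-\mathcal{M}\bK$. This holds for one application of a stationary iteration defined by a linear preconditioner $\mathcal{M}$: starting from error $\be_k$, the residual is $\bK\be_k$, the correction is $\mathcal{M}\bK\be_k$, and the new error is $(\bI-\mathcal{M}\bK)\be_k$. Under the assumption that the frozen V-cycle is a linear SPD operator, it coincides with the product form of $\mathcal{E}$ written above. From this identity, every eigenvalue $\mu$ of $\mathcal{M}\bK$ corresponds to the eigenvalue $1-\mu$ of $\mathcal{E}$.

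Next we would use the energy-norm structure. Since $\mathcal{M}$ and $\bK$ are SPD, $\mathcal{M}\bK$ is self-adjoint in the $\bK$-inner product $\langle x,y\rangle_{\bK}=x^\top\bK y$, because $\bK\mathcal{M}\bK$ is symmetric. Hence $\mathcal{E}$ is also $\bK$-self-adjoint. Its $\bK$-norm then equals its spectral radius, and all eigenvalues are real. The hypothesis $\|\mathcal{E}\|_{\bK}\le\rho$ therefore gives $|1-\mu|\le\rho$ for every eigenvalue $\mu$ of $\mathcal{M}\bK$, so
\[
\mu\in[1-\rho,\,1+\rho].
\]
Because $\rho<1$, the lower end is positive. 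This gives $\lambda_{\min}(\mathcal{M}\bK)\ge 1-\rho$ and $\lambda_{\max}(\mathcal{M}\bK)\le 1+\rho$, and dividing yields \eqref{eq:kappa_formula}. The eigenvalues of $\mathcal{M}\bK$ are real and positive since it is similar to $\bK^{1/2}\mathcal{M}\bK^{1/2}$, so the ratio defining $\kappa_{\mathrm{eff}}$ is meaningful.

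The main obstacle is the first step: justifying that the applied V-cycle is a \emph{linear, symmetric} operator $\mathcal{M}$ with $\mathcal{E}=\bI-\mathcal{M}\bK$. This requires the smoother pair to be adjoint, i.e.\ $\mathcal{S}_{\mathrm{post}}=\mathcal{S}_{\mathrm{pre}}^{\top}$, which holds for a fixed-coefficient Chebyshev polynomial in $\bD^{-1}\bK$ with $\bD$ diagonal. It also requires an exact coarsest solve and the Galerkin relation $\bK^1=\bR\bK\bP$, which make the coarse-correction factor a $\bK$-orthogonal projector. The plan treats these as part of the stated SPD hypothesis: we verify the symmetry of each factor in the $\bK$-inner product and read off that the product $\mathcal{E}$ is $\bK$-self-adjoint. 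We would not attempt to extend the argument to the floating-point mixed-precision cycle. There, as the statement notes, non-normality breaks the equality of norm and spectral radius, and the bound remains only interpretive.
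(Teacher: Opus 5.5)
Your proposal is correct and follows essentially the same route as the paper's proof sketch. You identify $\mathcal{E}=\bI-\mathcal{M}\bK$, use self-adjointness in the $\bK$-inner product so that the energy-norm bound confines the spectrum of $\mathcal{M}\bK$ to $[1-\rho,1+\rho]$, and take the ratio of the extremes; you also supply the step the paper delegates to standard multigrid analysis, namely that the energy norm of a $\bK$-self-adjoint operator equals its spectral radius.
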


\begin{proof}[Proof sketch]
In the ideal SPD case, $\mathcal{E}=\bI-\mathcal{M}\bK$ is self-adjoint in the
energy inner product.  If $\|\mathcal{E}\|_{\bK}\le\rho$, then the eigenvalues
of $\mathcal{M}\bK$ lie in $[1-\rho,1+\rho]$
(standard multigrid convergence analysis, e.g.,
\citet{trottenberg2000multigrid}).
Taking the ratio of extremes gives~\eqref{eq:kappa_formula}.
\end{proof}

\subsection{Implication for BF16 Stability}

The mixed-precision literature motivates the screen
$\eps_{\mathrm{BF16}}\kappa_{\mathrm{eff}}<1$: if the preconditioned spectrum
is small enough, BF16 fine-level work is less likely to dominate the outer
iteration error.  In this paper the screen is evaluated empirically in
Section~\ref{ssec:exp_e5}.  It is satisfied on 17 of 18 fixed-seed
heterogeneous probe cases and violated on one near-void 216k case.  Direct
BF16 validation on those same states shows 11 screened-in capped runs and one
screened-out converged run, so the screen should be read as an interpretive
diagnostic rather than as an admissibility certificate for the implemented
floating-point hierarchy.

\begin{remark}
For degree-2 Chebyshev smoothing, the targeted-band value
$1/T_2(31/29)\approx0.78$ is modest.  The smoother can still be useful inside a
Galerkin hierarchy, but the reported ablations support that claim through the
reported ablations and convergence screens rather than through a standalone
$0.10$ polynomial damping guarantee.
\end{remark}

\bibliographystyle{unsrtnat}
\bibliography{refs_references}

@article{amir2014multigrid,
  title     = {On multigrid-{CG} for efficient topology optimization},
  author    = {Amir, Oded and Aage, Niels and Lazarov, Boyan S.},
  journal   = {Structural and Multidisciplinary Optimization},
  volume    = {49},
  number    = {5},
  pages     = {815--829},
  year      = {2014},
  doi       = {10.1007/s00158-013-1015-5}
}

@article{wu2016system,
  title     = {A system for high-resolution topology optimization},
  author    = {Wu, Jun and Dick, Christian and Westermann, R\"{u}diger},
  journal   = {IEEE Transactions on Visualization and Computer Graphics},
  volume    = {22},
  number    = {3},
  pages     = {1195--1208},
  year      = {2016},
  doi       = {10.1109/TVCG.2015.2502588}
}

@article{wang2025matrixfree,
  title     = {Efficient large-scale 3D topology optimization with matrix-free MATLAB code},
  author    = {Wang, Junpeng and Aage, Niels and Wu, Jun and Sigmund, Ole and Westermann, R{\"u}diger},
  journal   = {Structural and Multidisciplinary Optimization},
  volume    = {68},
  number    = {9},
  pages     = {174},
  year      = {2025},
  doi       = {10.1007/s00158-025-04127-3}
}

@article{lazarov2011filters,
  title     = {Filters in topology optimization based on {Helmholtz}-type differential equations},
  author    = {Lazarov, Boyan S. and Sigmund, Ole},
  journal   = {International Journal for Numerical Methods in Engineering},
  volume    = {86},
  number    = {6},
  pages     = {765--781},
  year      = {2011},
  doi       = {10.1002/nme.3072}
}

@article{carson2017new,
  title     = {A new analysis of iterative refinement and its application to accurate solution
               of ill-conditioned sparse linear systems},
  author    = {Carson, Erin and Higham, Nicholas J.},
  journal   = {SIAM Journal on Scientific Computing},
  volume    = {39},
  number    = {6},
  pages     = {A2834--A2856},
  year      = {2017},
  doi       = {10.1137/17M1122918}
}

@article{higham2022mixed,
  title     = {Mixed precision algorithms in numerical linear algebra},
  author    = {Higham, Nicholas J. and Mary, Theo},
  journal   = {Acta Numerica},
  volume    = {31},
  pages     = {347--414},
  year      = {2022},
  doi       = {10.1017/S0962492922000022}
}

@article{haidar2020tensor,
  title     = {Mixed-precision iterative refinement using tensor cores on {GPUs} to accelerate solution of linear systems},
  author    = {Haidar, Azzam and Bayraktar, Harun and Tomov, Stanimire and Dongarra, Jack and Higham, Nicholas J.},
  journal   = {Proceedings of the Royal Society A: Mathematical, Physical and Engineering Sciences},
  volume    = {476},
  number    = {2243},
  pages     = {20200110},
  year      = {2020},
  doi       = {10.1098/rspa.2020.0110}
}

@article{buttari2007dense,
  title     = {Mixed precision iterative refinement techniques for the solution of dense linear systems},
  author    = {Buttari, Alfredo and Dongarra, Jack and Langou, Julie and Langou, Julien and Luszczek, Piotr and Kurzak, Jakub},
  journal   = {The International Journal of High Performance Computing Applications},
  volume    = {21},
  number    = {4},
  pages     = {457--466},
  year      = {2007},
  doi       = {10.1177/1094342007084026}
}

@article{buttari2008sparse,
  title     = {Using mixed precision for sparse matrix computations to enhance the performance while achieving 64-bit accuracy},
  author    = {Buttari, Alfredo and Dongarra, Jack and Kurzak, Jakub and Luszczek, Piotr and Tomov, Stanimir},
    journal   = {ACM Transactions on Mathematical Software},
    volume    = {34},
    number    = {4},
    pages     = {17:1--17:22},
    year      = {2008},
    doi       = {10.1145/1377596.1377597}
  }

@article{goddeke2007mixedfem,
  title     = {Performance and accuracy of hardware-oriented native-, emulated- and mixed-precision solvers in {FEM} simulations},
  author    = {G{\"o}ddeke, Dominik and Strzodka, Robert and Turek, Stefan},
  journal   = {International Journal of Parallel, Emergent and Distributed Systems},
  volume    = {22},
  number    = {4},
  pages     = {221--256},
  year      = {2007},
  doi       = {10.1080/17445760601122076}
}

@article{goddeke2011cyclic,
  title     = {Cyclic reduction tridiagonal solvers on {GPUs} applied to mixed precision multigrid},
  author    = {G{\"o}ddeke, Dominik and Strzodka, Robert},
  journal   = {IEEE Transactions on Parallel and Distributed Systems},
  volume    = {22},
  number    = {1},
  pages     = {22--32},
  year      = {2011},
  doi       = {10.1109/TPDS.2010.61}
}

@inproceedings{abdelfattah2019batched,
  title     = {Fast batched matrix multiplication for small sizes using half-precision arithmetic on {GPUs}},
  author    = {Abdelfattah, Ahmad and Tomov, Stanimire and Dongarra, Jack},
  booktitle = {Proceedings - 2019 IEEE 33rd International Parallel and Distributed Processing Symposium, IPDPS 2019},
  pages     = {111--122},
  year      = {2019},
  doi       = {10.1109/IPDPS.2019.00022}
}

@article{abdelfattah2020batched,
  title     = {Matrix multiplication on batches of small matrices in half and half-complex precisions},
  author    = {Abdelfattah, Ahmad and Tomov, Stanimire and Dongarra, Jack},
  journal   = {Journal of Parallel and Distributed Computing},
  volume    = {145},
  pages     = {188--201},
  year      = {2020},
  doi       = {10.1016/j.jpdc.2020.07.001}
}

@article{lopez2023mixedlu,
  title     = {Mixed precision {LU} factorization on {GPU} tensor cores: reducing data movement and memory footprint},
  author    = {Lopez, Florent and Mary, Theo},
  journal   = {The International Journal of High Performance Computing Applications},
  volume    = {37},
  number    = {2},
  pages     = {165--179},
  year      = {2023},
  doi       = {10.1177/10943420221136848}
}

@article{naumov2015amgx,
  title     = {{AmgX}: A Library for {GPU} Accelerated Algebraic Multigrid and Preconditioned
               Iterative Methods},
  author    = {Naumov, Maxim and Arsaev, Marat and Castonguay, Patrice and Cohen, Jonathan
               and Demouth, Julien and Eaton, Joe and Layton, Simon and Markovskiy, Nikolay
               and Reguly, Istvan and Sakharnykh, Nikolai and Sellappan, Vijay and Strzodka, Robert},
  journal   = {SIAM Journal on Scientific Computing},
  volume    = {37},
  number    = {5},
  pages     = {S602--S626},
  year      = {2015},
  doi       = {10.1137/140980260}
}

@book{trottenberg2000multigrid,
  title     = {Multigrid},
  author    = {Trottenberg, Ulrich and Oosterlee, Cornelius W. and Sch\"{u}ller, Anton},
  publisher = {Academic Press},
  year      = {2000},
  address   = {San Diego, CA}
}

@book{briggs2000multigrid,
  title     = {A Multigrid Tutorial, Second Edition},
  author    = {Briggs, William L. and Henson, Van Emden and McCormick, Steve F.},
  edition   = {2nd},
  publisher = {Society for Industrial and Applied Mathematics},
  year      = {2000},
  address   = {Philadelphia, PA},
  doi       = {10.1137/1.9780898719505}
}

@inproceedings{lorensen1987marching,
  title     = {Marching Cubes: A High Resolution {3D} Surface Construction Algorithm},
  author    = {Lorensen, William E. and Cline, Harvey E.},
  booktitle = {Proceedings of the 14th Annual Conference on Computer Graphics and Interactive Techniques},
  series    = {SIGGRAPH '87},
  pages     = {163--169},
  year      = {1987},
  publisher = {ACM},
  doi       = {10.1145/37401.37422}
}

@book{saad2003iterative,
  title     = {Iterative Methods for Sparse Linear Systems},
  author    = {Saad, Yousef},
  edition   = {2nd},
  publisher = {Society for Industrial and Applied Mathematics},
  year      = {2003},
  address   = {Philadelphia, PA},
  doi       = {10.1137/1.9780898718003}
}

@article{bendsoe1988homogenization,
  title     = {Generating optimal topologies in structural design using a homogenization method},
  author    = {Bends{\o}e, Martin Philip and Kikuchi, Noboru},
  journal   = {Computer Methods in Applied Mechanics and Engineering},
  volume    = {71},
  number    = {2},
  pages     = {197--224},
  year      = {1988},
  doi       = {10.1016/0045-7825(88)90086-2}
}

@article{bourdin2001filters,
  title     = {Filters in topology optimization},
  author    = {Bourdin, Blaise},
  journal   = {International Journal for Numerical Methods in Engineering},
  volume    = {50},
  number    = {9},
  pages     = {2143--2158},
  year      = {2001},
  doi       = {10.1002/nme.116}
}

@article{andreassen201188,
  title     = {Efficient topology optimization in {MATLAB} using 88 lines of code},
  author    = {Andreassen, Erik and Clausen, Anders and Schevenels, Mattias and Lazarov, Boyan S.
               and Sigmund, Ole},
  journal   = {Structural and Multidisciplinary Optimization},
  volume    = {43},
  number    = {1},
  pages     = {1--16},
  year      = {2011},
  doi       = {10.1007/s00158-010-0594-7}
}

@article{guest2004minimum,
  title     = {Achieving minimum length scale in topology optimization using nodal design variables and projection functions},
  author    = {Guest, James K. and Pr{\'e}vost, Jean H. and Belytschko, Ted},
  journal   = {International Journal for Numerical Methods in Engineering},
  volume    = {61},
  number    = {2},
  pages     = {238--254},
  year      = {2004},
  doi       = {10.1002/nme.1064}
}

@article{sigmund2013review,
  title     = {Topology optimization approaches},
  author    = {Sigmund, Ole and Maute, Kurt},
  journal   = {Structural and Multidisciplinary Optimization},
  volume    = {48},
  number    = {6},
  pages     = {1031--1055},
  year      = {2013},
  doi       = {10.1007/s00158-013-0978-6}
}

@article{wang2003levelset,
  title     = {A level set method for structural topology optimization},
  author    = {Wang, Michael Yu and Wang, Xiaoming and Guo, Dongming},
  journal   = {Computer Methods in Applied Mechanics and Engineering},
  volume    = {192},
  number    = {1--2},
  pages     = {227--246},
  year      = {2003},
  doi       = {10.1016/S0045-7825(02)00559-5}
}

@article{allaire2004levelset,
  title     = {Structural optimization using sensitivity analysis and a level-set method},
  author    = {Allaire, Gr{\'e}goire and Jouve, Fran{\c c}ois and Toader, Anca-Maria},
  journal   = {Journal of Computational Physics},
  volume    = {194},
  number    = {1},
  pages     = {363--393},
  year      = {2004},
  doi       = {10.1016/j.jcp.2003.09.032}
}

@article{bendsoe1999material,
  title     = {Material interpolation schemes in topology optimization},
  author    = {Bends{\o}e, Martin P. and Sigmund, Ole},
  journal   = {Archive of Applied Mechanics (Ingenieur Archiv)},
  volume    = {69},
  number    = {9--10},
  pages     = {635--654},
  year      = {1999},
  doi       = {10.1007/s004190050248}
}

@article{traff2023simple,
  title     = {Simple and efficient {GPU} accelerated topology optimisation: Codes and applications},
  author    = {Tr{\"a}ff, Erik A. and Rydahl, Anton and Karlsson, Sven and Sigmund, Ole and Aage, Niels},
  journal   = {Computer Methods in Applied Mechanics and Engineering},
  volume    = {410},
  pages     = {116043},
  year      = {2023},
  doi       = {10.1016/j.cma.2023.116043}
}

@article{ferrari2020top99neo,
  title     = {A new generation 99 line {Matlab} code for compliance topology optimization and its extension to 3D},
  author    = {Ferrari, Federico and Sigmund, Ole},
  journal   = {Structural and Multidisciplinary Optimization},
  volume    = {62},
  number    = {4},
  pages     = {2211--2228},
  year      = {2020},
  doi       = {10.1007/s00158-020-02629-w}
}

@article{yago2022comparative,
  title     = {Topology optimization methods for 3D structural problems: A comparative study},
  author    = {Yago, Daniel and Cante, Juan and Lloberas-Valls, Oriol and Oliver, Javier},
  journal   = {Archives of Computational Methods in Engineering},
  volume    = {29},
  number    = {3},
  pages     = {1525--1567},
  year      = {2022},
  doi       = {10.1007/s11831-021-09626-2}
}

@article{aage2013parallel,
  title     = {Parallel framework for topology optimization using the method of moving asymptotes},
  author    = {Aage, Niels and Lazarov, Boyan S.},
  journal   = {Structural and Multidisciplinary Optimization},
  volume    = {47},
  number    = {4},
  pages     = {493--505},
  year      = {2013},
  doi       = {10.1007/s00158-012-0869-2}
}

@article{aage2015petsc,
  title     = {Topology optimization using {PETSc}: An easy-to-use, fully parallel, open source topology optimization framework},
  author    = {Aage, Niels and Andreassen, Erik and Lazarov, Boyan S.},
  journal   = {Structural and Multidisciplinary Optimization},
  volume    = {51},
  number    = {3},
  pages     = {565--572},
  year      = {2015},
  doi       = {10.1007/s00158-014-1157-0}
}

@article{aage2017giga,
  title     = {Giga-voxel computational morphogenesis for structural design},
  author    = {Aage, Niels and Andreassen, Erik and Lazarov, Boyan S. and Sigmund, Ole},
  journal   = {Nature},
  volume    = {550},
  number    = {7674},
  pages     = {84--86},
  year      = {2017},
  doi       = {10.1038/nature23911}
}

@article{herrero2023adaptive,
  title     = {A parallel geometric multigrid method for adaptive topology optimization},
  author    = {Herrero-P{\'e}rez, David and Pic{\'o}-Vicente, Sebasti{\'a}n Gin{\'e}s},
  journal   = {Structural and Multidisciplinary Optimization},
  volume    = {66},
  number    = {10},
  pages     = {225},
  year      = {2023},
  doi       = {10.1007/s00158-023-03675-w}
}

@misc{nvidia2023ada,
  title     = {{NVIDIA} {Ada} {GPU} Architecture},
  author    = {{NVIDIA}},
  year      = {2023},
  note      = {NVIDIA white paper},
  url       = {https://images.nvidia.com/aem-dam/Solutions/geforce/ada/nvidia-ada-gpu-architecture.pdf}
}

@inproceedings{markidis2018nvidia,
  title     = {{NVIDIA} {Tensor} {Core} Programmability, Performance \& Precision},
  author    = {Markidis, Stefano and Chien, Steven Wei Der and Laure, Erwin and Peng, Ivy Bo and
               Vetter, Jeffrey S.},
  booktitle = {2018 IEEE International Parallel and Distributed Processing Symposium Workshops (IPDPSW)},
  pages     = {522--531},
  year      = {2018},
  doi       = {10.1109/IPDPSW.2018.00091}
}

@inproceedings{tsai2023mixed,
  title     = {Mixed Precision Algebraic Multigrid on {GPUs}},
  author    = {Tsai, Yu-Hsiang Mike and Beams, Natalie and Anzt, Hartwig},
  booktitle = {Parallel Processing and Applied Mathematics (PPAM 2022)},
  series    = {Lecture Notes in Computer Science},
  volume    = {13826},
  pages     = {113--125},
  year      = {2023},
  doi       = {10.1007/978-3-031-30442-2_9}
}

@article{tsai2023three,
  title     = {Three-precision algebraic multigrid on {GPUs}},
  author    = {Tsai, Yu-Hsiang Mike and Beams, Natalie and Anzt, Hartwig},
  journal   = {Future Generation Computer Systems},
  volume    = {149},
  pages     = {280--293},
  year      = {2023},
  doi       = {10.1016/j.future.2023.07.024}
}

@misc{bohm2025galerkin,
    title     = {Large-scale Multigrid with Adaptive {Galerkin} Coarsening},
    author    = {B{\"o}hm, Fabian and Kohl, Nils and K{\"o}stler, Harald and R{\"u}de, Ulrich},
    howpublished = {arXiv preprint arXiv:2511.13109},
    year      = {2025},
    doi       = {10.48550/arXiv.2511.13109},
    url       = {https://arxiv.org/abs/2511.13109}
  }

@article{padhi2023gpu,
  title     = {Efficient hybrid topology optimization using {GPU} and homogenization-based
               multigrid approach},
  author    = {Padhi, Arya Prakash and Chakraborty, Souvik and Chakrabarti, Anupam
               and Chowdhury, Rajib},
  journal   = {Engineering with Computers},
  volume    = {39},
  number    = {5},
  pages     = {3593--3615},
  year      = {2023},
  doi       = {10.1007/s00366-022-01771-x}
}

@article{zhao2024gpu,
  title     = {Efficient {GPU} accelerated topology optimization of composite structures
               with spatially varying fiber orientations},
  author    = {Zhao, Junpeng and Qi, Tianyuan and Wang, Chunjie},
  journal   = {Computer Methods in Applied Mechanics and Engineering},
  volume    = {421},
  pages     = {116809},
  year      = {2024},
  doi       = {10.1016/j.cma.2024.116809}
}

@article{anderson2021mfem,
  title     = {{MFEM}: A Modular Finite Element Methods Library},
  author    = {Anderson, Robert and Andrej, Julian and Barker, Andrew and Bramwell, Jamie and Camier, Jean-Sylvain and Cerveny, Jakub and Dobrev, Veselin and Dudouit, Yohann and Fisher, Aaron and Kolev, Tzanio and Pazner, Will and Stowell, Mark and Tomov, Vladimir and Akkerman, Ido and Dahm, Johann and Medina, David and Zampini, Stefano},
  journal   = {Computers \& Mathematics with Applications},
  volume    = {81},
  pages     = {42--74},
  year      = {2021},
  doi       = {10.1016/j.camwa.2020.06.009}
}

@article{davydov2020matrixfree,
  title     = {A matrix-free approach for finite-strain hyperelastic problems using geometric multigrid},
  author    = {Davydov, Denis and Pelteret, Jean-Paul and Arndt, Daniel and Kronbichler, Martin and Steinmann, Paul},
  journal   = {International Journal for Numerical Methods in Engineering},
  volume    = {121},
  number    = {13},
  pages     = {2874--2895},
  year      = {2020},
  doi       = {10.1002/nme.6336}
}

@article{franco2020lor,
  title     = {High-order matrix-free incompressible flow solvers with {GPU} acceleration and low-order refined preconditioners},
  author    = {Franco, Michael and Camier, Jean-Sylvain and Andrej, Julian and Pazner, Will},
  journal   = {Computers \& Fluids},
  volume    = {203},
  pages     = {104541},
  year      = {2020},
  doi       = {10.1016/j.compfluid.2020.104541}
}

@article{vargas2022matrixfree,
  title     = {Matrix-free approaches for {GPU} acceleration of a high-order finite element hydrodynamics application using {MFEM}, {Umpire}, and {RAJA}},
  author    = {Vargas, Arturo and Stitt, Thomas M. and Weiss, Kenneth and Tomov, Vladimir Z. and Camier, Jean-Sylvain and Kolev, Tzanio and Rieben, Robert N.},
  journal   = {The International Journal of High Performance Computing Applications},
  volume    = {36},
  number    = {4},
  pages     = {492--509},
  year      = {2022},
  doi       = {10.1177/10943420221100262}
}

@article{sun2020vectorization,
  title     = {A study of vectorization for matrix-free finite element methods},
  author    = {Sun, Tianjiao and Mitchell, Lawrence and Kulkarni, Kaushik and Kl{\"o}ckner, Andreas and Ham, David A. and Kelly, Paul H. J.},
  journal   = {The International Journal of High Performance Computing Applications},
  volume    = {34},
  number    = {6},
  pages     = {629--644},
  year      = {2020},
  doi       = {10.1177/1094342020945005}
}

@incollection{gavranovic2019gpgpu,
  title     = {Topology Optimization Using {GPGPU}},
  author    = {Gavranovic, Stefan and Hartmann, Dirk and Wever, Utz},
  booktitle = {Advances in Evolutionary and Deterministic Methods for Design, Optimization and Control in Engineering and Sciences},
  pages     = {553--566},
  year      = {2019},
  publisher = {Springer International Publishing},
  address   = {Cham},
  doi       = {10.1007/978-3-319-89988-6_33}
}

@article{kashi2026mixed,
    title     = {Mixed-precision numerics in scientific applications: survey and perspectives},
    author    = {Kashi, Aditya and Lu, Hao and Brewer, Wesley and Rogers, David and
                 Matheson, Michael and Shankar, Mallikarjun and Wang, Feiyi},
    journal   = {The Journal of Supercomputing},
    volume    = {82},
    number    = {5},
    pages     = {287},
    year      = {2026},
    doi       = {10.1007/s11227-026-08264-4},
    publisher = {Springer}
  }

@article{bell2023pyamg,
  title     = {{PyAMG}: Algebraic Multigrid Solvers in {Python}},
  author    = {Bell, Nathan and Olson, Luke N. and Schroder, Jacob and Southworth, Ben},
  journal   = {Journal of Open Source Software},
  volume    = {8},
  number    = {87},
  pages     = {5495},
  year      = {2023},
  doi       = {10.21105/joss.05495},
  publisher = {The Open Journal}
}

@misc{yang2026fused,
    title     = {Matrix-Free {3D} {SIMP} Topology Optimization with Fused Gather-{GEMM}-Scatter Kernels},
    author    = {Yang, Shaoliang and Wang, Jun and Wang, Yunsheng},
    howpublished = {arXiv preprint arXiv:2604.18020},
    year      = {2026},
    doi       = {10.48550/arXiv.2604.18020},
    url       = {https://arxiv.org/abs/2604.18020}
  }

\end{document}